\documentclass[10pt]{article}
\usepackage{amsmath, amsthm, amssymb, amsfonts, mathrsfs, mathtools, upgreek}
\usepackage{graphicx}
\usepackage{makeidx}
\usepackage{tikz-network}
\usepackage[left=2cm,right=2cm,top=2cm,bottom=2cm]{geometry}
\usepackage{caption}
\usepackage{subcaption}
\usepackage[section]{placeins}
\usepackage{enumitem}
\usepackage{tikz}
\usepackage{stmaryrd}
\usepackage{upquote}
\usepackage{authblk}

\usepackage{multicol}

\usepackage{hyperref}
\hypersetup{colorlinks=true, citecolor={blue}, linkcolor=blue, filecolor=magenta, urlcolor=cyan}
\usetikzlibrary{arrows.meta, positioning, decorations.pathreplacing}
\usepackage{cleveref}

\newtheorem{theorem}{Theorem}[section]

\newtheorem{lemma}[theorem]{Lemma}

\newtheorem{example}[theorem]{Example}
\newtheorem{corollary}[theorem]{Corollary}
\theoremstyle{definition}
\newtheorem{defi}[theorem]{Definition}

\numberwithin{equation}{section}

\newcommand{\divides}{\mid}

\newcommand{\Irr}{\operatorname{Irr}}

\newcommand{\aut}{\operatorname{Aut}}

\newcommand{\cay}{\operatorname{Cay}}
\newcommand{\real}{\operatorname{Re}}

\newcommand{\iu}{\mathbf{i}}

\newcommand{\spec}{\operatorname{Spec}}
\newcommand{\lcm}{\operatorname{lcm}}

\newcommand{\Nl}{\mathbb{N}}
\newcommand{\Zl}{\mathbb{Z}}
\newcommand{\Rl}{\mathbb{R}}
\newcommand{\Ql}{\mathbb{Q}}

\newcommand{\Cl}{\mathbb{C}}

\newcommand{\eu}{\mathbf{e}_u}
\newcommand{\ev}{\mathbf{e}_v}

\newcommand{\ld}{\lambda}

\title{Perfect state transfer in Grover walks on normal Cayley graphs}

\author{ Koushik Bhakta and Bikash Bhattacharjya\\
	Department of Mathematics\\
	Indian Institute of Technology Guwahati, India\\
	b.koushik@iitg.ac.in, b.bikash@iitg.ac.in }

\date{}
\begin{document}
	\maketitle
	
	\vspace{-0.3in}
	
	\begin{center}{\textbf{Abstract}}\end{center}
	\noindent 
	A Cayley graph $\cay(\Gamma,S)$ over a finite group $\Gamma$ is said to be normal if its connection set $S$ is a union of some conjugacy classes of $\Gamma$. This paper investigates perfect state transfer in Grover walks  on normal Cayley graphs. The Grover walk is a widely studied discrete-time quantum walk. We establish a necessary and sufficient condition for the occurrence of perfect state transfer on normal Cayley graphs. As applications, we derive explicit spectral criteria for perfect state transfer on Cayley graphs over  abelian groups, dicyclic groups,  and dihedral groups. These results yield several infinite families of Cayley graphs exhibiting perfect state transfer. We further obtain simple combinatorial characterizations of the existence of perfect state transfer on Cayley graphs over dihedral and dicyclic groups.  Our general characterization also recovers a number of previously known results as special cases. As a further consequence, we obtain a complete characterization of perfect state transfer on unitary Cayley graphs. In particular, we prove that exactly four graphs in the class of unitary Cayley graphs exhibit perfect state transfer.

	\vspace*{0.3cm}
	\noindent 
	\textbf{Keywords.} Grover walk, perfect state transfer, normal Cayley graph, abelian group, dicyclic and dihedral group\\
	\textbf{Mathematics Subject Classifications:} 05C50, 05C25, 81Q99
	
	\section{Introduction}
	
	The study of quantum walks on graphs is an important area that lies at the intersection of quantum computing and graph theory. It serves as a fundamental building block for various quantum algorithms and has applications in fields such as computer science and physics (see \cite{bose_pst, christandl}). There are two types of quantum walks: continuous-time quantum walk~\cite{periodic} and discrete-time quantum walk~\cite{godsil_dqw}. A discrete-time quantum walk is a quantum analogue of the classical random walk, where a particle moves on a graph in discrete steps according to specific rules. In a classical random walk, a particle moves from one vertex to another in a graph with a probability determined by the edges connecting that vertex. However, in a quantum walk, the movement of the particle is governed by the principles of quantum mechanics, allowing it to exhibit unique behaviors such as superposition and interference. Quantum walks have been extensively studied in both physics and mathematics. One of the most important phenomena of quantum walks is their capability for perfect state transfer, which essentially involves transferring between specific quantum states with probability $1$.  Perfect state transfer has been extensively studied in the setting of continuous-time quantum walks; see \cite{coutinho_asso, hermie_pair_state, pal_subdivided, singh} and the references therein. Although perfect state transfer in discrete-time quantum walks has been studied less, it has attracted significant attention; see \cite{barr_pst, kendon_pst}. In addition to perfect state transfer, several other phenomena in discrete-time quantum walks have been investigated; see, for instance, \cite{banerjee2, pgst1, kubota_convergence, ito, sarkar, zhan3, uniform_dqw}. 
	
	In this paper, we focus on the Grover walk, one of the most widely studied models of discrete-time quantum walks. Zhan~\cite{zhan1} constructed infinite families of circulant graphs that exhibit perfect state transfer in Grover walks. Later, using Chebyshev polynomials, Kubota and Segawa~\cite{kubota1} characterized complete multipartite graphs exhibiting perfect state transfer in Grover walks. More recently, Kubota and Yoshino~\cite{kubota2} completely characterized all circulant graphs of valency at most four that exhibit perfect state transfer in Grover walks.

	Normal Cayley graphs are an important class of graphs that combine high symmetry with a rich algebraic structure. Their connection sets are unions of conjugacy classes, which allow the adjacency matrix to be analyzed using the representation theory of finite groups. As a result, the spectral decomposition can often be determined explicitly. This makes normal Cayley graphs a natural extension of the well-understood abelian Cayley graphs to a broader class of nonabelian Cayley graphs while still allowing explicit spectral analysis. Since perfect state transfer is fundamentally a spectral phenomenon, normal Cayley graphs provide a natural and effective setting for studying quantum state transfer.

	
	The main contribution of this paper is a characterization of perfect state transfer in Grover walks on normal Cayley graphs. More precisely, in Theorem~\ref{pst_nomralcayley_main}, we establish necessary and sufficient conditions for the occurrence of perfect state transfer on normal Cayley graphs over finite groups. We then apply this characterization to several important classes of finite groups. In particular, Theorems~\ref{pst_abelian}, \ref{pst_dicyclic}, and \ref{pst_dihedral} characterize perfect state transfer on normal Cayley graphs over abelian, dicyclic, and dihedral groups, respectively. As applications of these results, we construct several infinite families of normal Cayley graphs that exhibit perfect state transfer (Lemma~\ref{eg_pst_abelian}, Corollary~\ref{eg_coro_pst_abelian}, Examples~\ref{eg_circulant}, \ref{eg_dicyclic}, and \ref{eg_dihedral}). Furthermore, for dicyclic and dihedral groups, we derive a combinatorial criterion for perfect state transfer in Corollaries~\ref{comb_dicyclic} and \ref{comb_dihedral}, respectively. Finally, as another application of our general characterization, we obtain a complete characterization of perfect state transfer on unitary Cayley graphs in Theorem~\ref{pst_unitary}.

	The rest of the paper is organized as follows. Section~2 provides the necessary preliminaries on Grover walks and the representation theory of finite groups. In Section~3, we establish a characterization of perfect state transfer on normal Cayley graphs and then apply it to abelian, dicyclic, and dihedral groups. In Section~4, we apply the general characterization to unitary Cayley graphs.

	\section{Preliminaries}\label{prel}

	Let $G:=(V(G),E(G))$ be a finite simple graph with vertex set $V(G)$ and edge set $E(G)$.  We write the elements of $E(G)$ as $uv$, where $u,v\in V$, $u\neq v$ and, $uv$ and $vu$ represent the same edge. Let $A:=A(G)\in \Cl^{V(G)\times V(G)}$ be the \emph{adjacency matrix} of $G$, where
	$$A_{u,v} = \left\{ \begin{array}{rl}
		1 &\mbox{ if }
		uv\in E(G) \\ 
		0 &\textnormal{ otherwise.}
	\end{array}\right.$$  
	A graph $G$ is said to be \emph{integral} if all the eigenvalues of its adjacency matrix are integers. For a matrix $M$ associated to a graph $G$, let $\spec_M(G)$ denote the set of all distinct eigenvalues of $M$.

	\subsection{Grover walks}
	If $uv$ is an edge of a graph $G$, then the ordered pairs $(u,v)$ and $(v,u)$ are called the \emph{arcs} of $G$ associated to the edge $uv$. We define $\mathcal{A}(G):=\{(u, v), (v, u):uv\in E(G) \}$, the set of all symmetric arcs of $G$. The vertices $u$ and $v$ are called the \emph{origin} and \emph{terminus} of an arc $(u,v)$, respectively. Let $a$ be an arc of $G$, where $a=(u,v)$. We write $o(a)$ and $t(a)$ to denote the origin and terminus of $a$, respectively, that is, $o(a)=u$ and $t(a)=v$.  The inverse arc of $a$, denoted $a^{-1}$, is the arc  $(v,u)$.
	
	We now define a few matrices required for the definition of Grover walks. The \emph{boundary matrix} $N:=N(G)\in \mathbb{C}^{V(G)\times \mathcal{A}(G)}$ of $G$ is defined by $N_{u,a}=(1/\sqrt{\deg u})\delta _{u, t(a)},$ where $\delta_{a,b}$ is the Kronecker delta function. The \emph{shift matrix} $R:=R(G)\in \mathbb{C}^{\mathcal{A}(G) \times \mathcal{A}(G)}$ of $G$ is defined by $R_{a,b}=\delta_{a,b^{-1}}$. Define the \emph{time evolution matrix} $U:=U(G)\in \mathbb{C}^{\mathcal{A}(G)\times \mathcal{A}(G)}$ of $G$ by $$U=R(2N^*N-I).$$ The time evolution matrix is also known as the Grover transition matrix. A discrete-time quantum walk on a graph G is determined by a unitary matrix, which acts on the complex functions of the symmetric arcs of G. The discrete-time quantum walks defined by $U$ are called the \emph{Grover walks}. The Grover walks are also known as a special case of bipartite walks~\cite{chen}. The entries of the time evolution matrix are calculated as $$U_{a,b}=\frac{2}{\deg t(b)}\delta_{o(a),t(b)}-\delta_{a,b^{-1}}.$$ 
	The eigenvalues of the Grover transition matrix $U$ can be determined from the smaller-sized matrix, known as the \emph{discriminant} $P:=P(G)\in \mathbb{C}^{V(G)\times V(G)}$ of $G$, defined by $P=NRN^*$.
	See \cite{kubota2} for more details about the matrices $N$, $R$, $U$ and $P$. If  $G$ is a regular graph, then the matrix $P$ can be expressed in terms of its adjacency matrix $A$.
	\begin{lemma}[{\cite[Lemma~3.1]{bhakta4}}]\label{reg}
		If $G$ is a $k$-regular graph, then $P=\frac{1}{k}A$. Further, $1$ is an eigenvalue of $P$, and for any eigenvalue $\mu$ of $P$,  $|\mu|\leq1$.
	\end{lemma}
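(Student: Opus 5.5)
The plan is to compute the entries of $P=NRN^*$ directly from the definitions of $N$ and $R$, and then read off the eigenvalue statements from $A$.

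\textbf{The identity $P=\frac1kA$.} For vertices $u,v$,
\[
P_{u,v}=\sum_{a,b\in\mathcal{A}(G)} N_{u,a}R_{a,b}\overline{N_{v,b}}
=\sum_{a\in\mathcal{A}(G)} N_{u,a}\,\overline{N_{v,a^{-1}}},
\]
since $R_{a,b}=\delta_{a,b^{-1}}$ forces $b=a^{-1}$. The factor $N_{u,a}$ is nonzero only when $t(a)=u$. The factor $N_{v,a^{-1}}$ is nonzero only when $t(a^{-1})=v$, that is, $o(a)=v$. Hence the only possible contributing arc is $a=(v,u)$. It exists exactly when $uv\in E(G)$, and then it contributes $\frac{1}{\sqrt{\deg u}\sqrt{\deg v}}$. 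All entries of $N$ are real, so conjugation plays no role. This gives $P_{u,v}=A_{u,v}/\sqrt{\deg u\deg v}$ for a general graph, and when $G$ is $k$-regular this is $\frac1kA_{u,v}$, so $P=\frac1kA$.

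\textbf{The eigenvalue claims.} With $P=\frac1kA$, the all-ones vector $\mathbf{1}$ satisfies $A\mathbf{1}=k\mathbf{1}$ because every row of $A$ has exactly $k$ ones. Therefore $P\mathbf{1}=\mathbf{1}$, and $1$ is an eigenvalue of $P$.

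For the bound, suppose $\frac1kAx=\mu x$ with $x\neq0$. Choose a vertex $u$ that maximizes $|x_u|$. Then
\[
|\mu||x_u|=\frac1k\Bigl|\sum_{v\sim u}x_v\Bigr|\le \frac1k\cdot k\,|x_u|,
\]
so $|\mu|\le1$. Equivalently, the spectral radius of $A$ is at most its maximum row sum $k$.

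\textbf{Where care is needed.} The only step that needs attention is the index bookkeeping in the product $NRN^*$: one must track which endpoint, origin or terminus, each factor selects, and not double count the arcs $(u,v)$ and $(v,u)$. The self-loop case $u=v$ cannot occur because the graph is simple, so the diagonal of $P$ is zero, in agreement with $A$.
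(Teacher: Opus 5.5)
Your proof is correct: the entrywise computation $P_{u,v}=\sum_a N_{u,a}\overline{N_{v,a^{-1}}}=A_{u,v}/\sqrt{\deg u\,\deg v}$, the all-ones eigenvector, and the maximum-modulus (row-sum) bound establish every claim, and the only tacit assumptions are $k\geq 1$ and $|x_u|>0$, both harmless. The paper gives no proof of this lemma and only cites \cite{bhakta4}; your direct verification is the standard argument for this fact, so there is no paper proof to compare against.
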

	The following result is known as the spectral mapping theorem of the Grover walks. For a nonnegative integer $k$, by $\{a\}^k$, we denote the multi-set $\{a,\ldots,a\}$, where the element $a$ repeats $k$ times. We also denote $\sqrt{-1}$ by $\iu$.
	\begin{lemma}[{\cite[Proposition~1]{hig3}}]\label{ev_grover}  
		Let $\mu_1, \hdots,\mu_n$ be the eigenvalues of the discriminant of a graph $G$. Then the multi-set of eigenvalues of the time evolution matrix is $$\left\{\exp\!\left(\pm \iu \arccos \mu_j\right): 1\leq j\leq n\}\right\}\cup \{1\}^{b_1} \cup \{ -1\}^{b_1-1+1_B},$$ where $b_1=|E(G)|-|V(G)|+1$, and $1_B=1$ or $0$ according as $G$ is bipartite or not.	
	\end{lemma}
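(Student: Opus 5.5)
We prove the spectral mapping theorem by separating the arc space into the image of $N^*$ and its complement. Set $\mathcal{L}:=\operatorname{col}(N^*)+R\operatorname{col}(N^*)\subseteq \Cl^{\mathcal{A}(G)}$. We will show that $\mathcal{L}$ and $\mathcal{L}^\perp$ are both $U$-invariant. On $\mathcal{L}$ the spectrum of $U$ is governed by $P$; on $\mathcal{L}^\perp$ it is $\pm1$, with multiplicities read off from the cycle space.

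The basic identities are $NN^*=I_{V}$ (each vertex $u$ is the terminus of exactly $\deg u$ arcs, each contributing $1/\deg u$), $R^2=I$, $R^*=R$ and $P=NRN^*$. Since $G$ is undirected, $P$ is Hermitian, so its eigenvalues $\mu_j$ are real. By Cauchy--Schwarz $|\mu_j|\le 1$, because $\|N^*f\|=\|f\|$ and $R$ is unitary.

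Let $f$ satisfy $Pf=\mu f$ and put $\alpha=N^*f$, $\beta=RN^*f$. A direct computation, using $U=R(2N^*N-I)$ and $NN^*=I$, gives
\[
U\alpha=2\beta-\beta=\beta,\qquad U\beta=R(2N^*NRN^*f-RN^*f)=2\mu\beta-\alpha .
\]
So $\operatorname{span}\{\alpha,\beta\}$ is $U$-invariant, and on it $U$ acts by the matrix $\begin{pmatrix}0&-1\\1&2\mu\end{pmatrix}$. This matrix has characteristic polynomial $\lambda^2-2\mu\lambda+1$, whose roots are $e^{\pm\iu\arccos\mu}$.

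We split into two cases.

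\begin{itemize}
\item If $|\mu|<1$, then $\alpha$ and $\beta$ are linearly independent. Indeed, $\beta=c\alpha$ would give, after applying $N$, that $\mu f=cf$; together with $R^2=I$ this forces $c=\pm1$ and hence $\mu=\pm1$. Taking an orthonormal eigenbasis $f_1,\dots,f_n$ of $P$, the resulting planes are mutually orthogonal, by the inner products $\langle N^*f_i,N^*f_j\rangle=\delta_{ij}$ and $\langle N^*f_i,RN^*f_j\rangle=\mu_j\delta_{ij}$. Each plane contributes exactly the pair $e^{\pm\iu\arccos\mu_j}$.
\item If $\mu=\pm1$, equality in Cauchy--Schwarz gives $\beta=\pm\alpha$, so the plane collapses to a line with eigenvalue $\pm1$. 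In this case $\{e^{\pm\iu\arccos\mu}\}$ is the double value $\{\pm1\}^2$, and one of the two copies must be accounted for inside $\mathcal{L}^\perp$.
\end{itemize}

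Now consider $\mathcal{L}^\perp=\ker N\cap\ker(NR)$. On it $U=-R$, since $N^*N$ vanishes there. As $R$ is an involution preserving $\mathcal{L}^\perp$, the eigenvalue $1$ of $U$ on $\mathcal{L}^\perp$ corresponds to antisymmetric flows $g(a^{-1})=-g(a)$ with zero net inflow at every vertex. This is exactly the cycle space, of dimension $b_1$ by the standard result for a connected graph. The eigenvalue $-1$ corresponds to symmetric $g$ with zero sum at each vertex. This is the kernel of the unsigned incidence map from edges to vertices, of dimension $|E|-|V|+1_B$, since the signless Laplacian has nullity $1_B$.

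It remains to reconcile these counts. The total dimension is $2|E|=2n+(|E|-n)+(|E|-n)$, with $n=|V(G)|$. Treating each of the $n$ eigenvalues of $P$ as a formal pair uses $2n$ slots. For $\mu=1$, which is simple for connected $G$, and for $\mu=-1$, which occurs exactly when $G$ is bipartite, the collapsed line uses only one actual dimension. The missing copy in the formal pair is then matched by shifting one from the $\mathcal{L}^\perp$ count: $b_1=(|E|-n)+1$, and $(|E|-n+1_B)-1_B+\dots$. The exponent $b_1-1+1_B$ for $-1$ comes out as $|E|-n+1_B$, as required.

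I expect this final bookkeeping to be the main obstacle. It is where the collapsed $\mu=\pm1$ contributions must be matched precisely to the stated multiplicities, so I would verify it carefully, together with the cycle-space and signless-Laplacian dimension claims.
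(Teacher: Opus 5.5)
\textbf{Where the argument breaks.} The paper gives no proof of this lemma; it only quotes the result. Your decomposition $\Cl^{\mathcal{A}(G)}=\mathcal{L}\oplus\mathcal{L}^\perp$ and every computation before your last paragraph are correct. The proof fails in the final bookkeeping, and the cause is your reading of the statement.

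Suppose each $\mu_j=\pm1$ contributes the double value $\{\pm1\}^2$, as you assume. Then the right-hand side has $2n+b_1+(b_1-1+1_B)=2|E|+1+1_B$ elements, which exceeds $\dim\Cl^{\mathcal{A}(G)}=2|E|$. Already for $K_2$ it would list four eigenvalues of a $2\times 2$ matrix. No ``shifting'' can repair this. Your own count $2|E|=2n+(|E|-n)+(|E|-n)$ corresponds to exponent $|E|-|V|=b_1-1$ for \emph{both} $1$ and $-1$. That is the form $\det(\lambda I-U)=(\lambda^2-1)^{|E|-|V|}\det((\lambda^2+1)I-2\lambda P)$, not the stated exponents $b_1$ and $b_1-1+1_B$. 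Your line ending in ``$(|E|-n+1_B)-1_B+\dots$'' is exactly where this inconsistency is hidden rather than resolved.

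\textbf{The fix.} Read $\exp(\pm\iu\arccos\mu_j)$ as the preimage of $\mu_j$ under $\lambda\mapsto(\lambda+\lambda^{-1})/2$. This is a single point when $\mu_j=\pm1$, and with this reading there is nothing to reconcile:
\begin{itemize}
\item Your first step already shows that $U|_{\mathcal{L}}$ has exactly the first multiset: a pair for each $|\mu_j|<1$ and one eigenvalue per collapsed line, so $\dim\mathcal{L}=2n-1-1_B$.
\item Your second step shows that $U|_{\mathcal{L}^\perp}=-R$ has eigenvalue $1$ with multiplicity $\dim(\text{cycle space})=b_1$. It has eigenvalue $-1$ with multiplicity $|E|-|V|+1_B=b_1-1+1_B$. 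These are precisely the stated exponents, and the total is $2|E|$.
\end{itemize}

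\textbf{Two points to state explicitly.} First, assume $G$ is connected. This is needed for $b_1=|E|-|V|+1$, for the simplicity of $\mu=1$, and for the signless Laplacian to have nullity $1_B$. Second, $\mathcal{L}^\perp$ is $U$-invariant because $U$ is unitary and $\mathcal{L}$ is $U$-invariant.
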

	
	\begin{defi}
		The Grover walk on a graph is said to be \emph{periodic} if its transition matrix $U$ satisfies $U^\tau=I$ for some positive integer $\tau$.
	\end{defi}	
	For simplicity, we say that the graph $G$ is \emph{periodic} whenever the Grover walk on $G$ is periodic. If $\tau$ is the smallest positive integer such that $U^\tau=I$, then $\tau$ is called the \emph{period} of the graph, and the graph is said to be \emph{$\tau$-periodic}.

	Since the time evolution matrix $U$ is unitary, it is diagonalizable. Consequently, the periodicity of a graph can be characterized entirely in terms of the eigenvalues of $U$.
	\begin{lemma}[{\cite[Lemma~5.3]{mixed2}}] \label{period}
		A graph $G$ is periodic  if and only if $ \eta^\tau  =1$ for each $\eta \in \spec_U(G)$ for some positive integer $\tau$. Moreover, $G$ is $\tau$-periodic if and only if $\tau$ is the least positive integer such that $\eta^\tau=1$ for each $\eta \in \spec_U(G)$.
	\end{lemma}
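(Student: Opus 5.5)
Since $U$ is unitary, it is normal and hence unitarily diagonalizable. I would write its spectral decomposition as $U=\sum_{\eta\in\spec_U(G)}\eta E_\eta$, where the $E_\eta$ are orthogonal idempotents satisfying $E_\eta E_{\eta'}=\delta_{\eta,\eta'}E_\eta$ and $\sum_\eta E_\eta=I$. For every positive integer $\tau$ this gives $U^\tau=\sum_\eta \eta^\tau E_\eta$. The whole proof rests on this identity, and both directions follow from it.

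For the first direction, suppose $\eta^\tau=1$ for every $\eta\in\spec_U(G)$. Then $U^\tau=\sum_\eta E_\eta=I$, so $G$ is periodic. For the converse, suppose $U^\tau=I$. If $\mathbf{x}$ is an eigenvector of $U$ with eigenvalue $\eta$, then $\mathbf{x}=U^\tau\mathbf{x}=\eta^\tau\mathbf{x}$. Since $\mathbf{x}\neq 0$, this forces $\eta^\tau=1$. Multiplying $U^\tau=\sum_\eta\eta^\tau E_\eta=I$ by a fixed $E_\eta$ gives the same conclusion.

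These two directions show that, for each fixed $\tau$, $U^\tau=I$ holds exactly when $\eta^\tau=1$ for all $\eta\in\spec_U(G)$. The first statement of the lemma follows immediately. For the second statement, the set of positive integers $\tau$ with $U^\tau=I$ is the same as the set of $\tau$ with $\eta^\tau=1$ for all eigenvalues. Two equal sets have the same least element, so the period equals the least such $\tau$.

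The only point that needs care is the converse direction: one must use that $U$ is diagonalizable, here because it is unitary. Without diagonalizability, all eigenvalues being $\tau$-th roots of unity would not by itself imply $U^\tau=I$, since a Jordan block could obstruct it. Beyond this, the argument is routine.
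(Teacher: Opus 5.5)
Your argument is correct and is essentially the paper's justification: the paper cites the lemma from Kubota et al. and remarks only that $U$, being unitary, is diagonalizable, so periodicity is determined by its eigenvalues. One small slip is that in your last paragraph the implication that actually needs diagonalizability is the one you called the first direction ($\eta^\tau=1$ for all eigenvalues $\Rightarrow U^\tau=I$), not the converse, which holds for any matrix by your eigenvector argument.
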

	The preceding lemma shows that the period of a graph is completely determined by the orders of the eigenvalues of its time evolution matrix. This immediately yields the following corollary.
	\begin{lemma} \label{periodic}
		Let $\eta_1, \hdots , \eta_m$ be the eigenvalues of the time evolution matrix of a periodic graph $G$. Let $k_1, \hdots, k_m$ be the least positive integers such that $\eta_1 ^{k_1}=1, \hdots,\eta_m^{k_m}=1$. Then the period of $G$ is $\lcm(k_1, \hdots, k_m)$.
	\end{lemma}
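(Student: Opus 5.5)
By Lemma~\ref{period}, $G$ is periodic, so there is a positive integer $\tau$ with $\eta^\tau=1$ for every $\eta\in\spec_U(G)$. Let $\tau_0$ denote the least such integer; Lemma~\ref{period} also tells us that $\tau_0$ is exactly the period of $G$. The task is therefore to show $\tau_0=L:=\lcm(k_1,\hdots,k_m)$, which we do by proving $L\le\tau_0$ and $\tau_0\le L$.

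First, $L$ is an admissible exponent. For each $j$ we have $k_j\mid L$, so $\eta_j^{L}=(\eta_j^{k_j})^{L/k_j}=1$. Hence every eigenvalue satisfies $\eta^L=1$, and minimality of $\tau_0$ gives $\tau_0\le L$.

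Second, every admissible exponent is a multiple of each $k_j$. Suppose $\eta_j^{\tau}=1$ and write $\tau=qk_j+r$ with $0\le r<k_j$. Then $\eta_j^{r}=\eta_j^{\tau}(\eta_j^{k_j})^{-q}=1$, so minimality of $k_j$ forces $r=0$, i.e.\ $k_j\mid\tau$. (The same conclusion follows because $k_j$ is the order of $\eta_j$ in the group $\Cl^*$.) Applying this to $\tau=\tau_0$ for all $j$ gives $L\mid\tau_0$, hence $L\le\tau_0$.

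Combining the two inequalities yields $\tau_0=L$, which is the period of $G$ by Lemma~\ref{period}. The only point needing care is that each $k_j$ exists. This holds because periodicity, via Lemma~\ref{period}, guarantees that every eigenvalue is a root of unity. There is no genuine obstacle beyond this division-algorithm step.
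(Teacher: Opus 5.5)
Your proof is correct and matches the paper's approach. The paper gives no separate proof and presents the statement as an immediate consequence of Lemma~\ref{period}; you supply the omitted details, namely that the least common exponent of finitely many roots of unity is the lcm of their orders, via $\tau_0\le L$ and, by the division algorithm, $L\mid\tau_0$.
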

	
	Let $\Delta=\{a\pm\sqrt{b}:a,b\in \Ql~\text{and}~b ~\text{is not a square}\}$, and $\overline{\Delta}=\Rl\setminus(\Ql\cup\Delta)$. For a subset $F$ of real numbers, let $\spec_P^F(G)=F\cap\spec_P(G)$.
	For a complex number $z$, we denote $\real(z)$ as the real part of $z$. Let $\Re$ be the set of the real parts of the roots of unity.
	\begin{lemma}[{\cite[Theorem~4.4]{bhakta2}}]\label{regular_periodic}
		Let $G$ be a regular graph with discriminant matrix $P$. Then $G$ is periodic if and only if $$\spec_P^\Ql(G)\subseteq\left\{\pm1,\pm\frac{1}{2},0\right\},~\spec_P^\Delta(G)\subseteq\left\{\pm \frac{\sqrt{3}}{2}, \pm\frac{1}{4}\pm\frac{\sqrt{5}}{4},\pm\frac{1}{\sqrt{2}}\right\}~\text{and}~\spec_P^{\overline{\Delta}}(G)\subset \Re.$$
	\end{lemma}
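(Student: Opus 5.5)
This statement is cited from \cite[Theorem~4.4]{bhakta2}, and I will prove it from Lemmas~\ref{reg}, \ref{ev_grover} and \ref{period}. By Lemma~\ref{period}, $G$ is periodic if and only if every eigenvalue of $U$ is a root of unity. By Lemma~\ref{ev_grover}, the eigenvalues of $U$ are $\pm1$ together with $e^{\pm\iu\theta_j}$, where $\theta_j=\arccos\mu_j$ and the $\mu_j$ are the eigenvalues of $P$. The values $\pm1$ are already roots of unity. Hence $G$ is periodic if and only if, for each $\mu\in\spec_P(G)$, the number $e^{\iu\arccos\mu}$ is a root of unity. Since $|\mu|\le1$, this says exactly that $\mu=\cos(2\pi r)$ for some rational $r$, i.e.\ $\mu\in\Re$. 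So the first step is the reduction: $G$ is periodic $\iff$ $\spec_P(G)\subseteq\Re$.

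Next I bring in arithmetic. Since $G$ is $k$-regular, $P=\frac1kA$ by Lemma~\ref{reg}, so each $\mu$ equals $\lambda/k$ with $\lambda$ an algebraic integer. The key fact I will use about elements of $\Re$ is this: $2\cos(2\pi r)=\zeta+\zeta^{-1}$, where $\zeta$ is a primitive $n$-th root of unity, is an algebraic integer. Its degree over $\Ql$ is $\varphi(n)/2$ for $n\ge3$, and $1$ for $n\in\{1,2\}$.

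\emph{Rational case.} Suppose $\mu\in\Ql\cap\Re$. Then $2\mu$ is a rational algebraic integer, hence an integer in $[-2,2]$. This gives $\mu\in\{0,\pm\frac12,\pm1\}$, and each of these lies in $\Re$.

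\emph{Quadratic case.} Suppose $\mu\in\Delta\cap\Re$. Then $\varphi(n)=4$, so $n\in\{5,8,10,12\}$. The corresponding values of $\cos(2\pi m/n)$ are exactly $\pm\frac14\pm\frac{\sqrt5}4$ (for $n=5,10$), $\pm\frac1{\sqrt2}$ (for $n=8$) and $\pm\frac{\sqrt3}2$ (for $n=12$). Conversely, all of these lie in $\Re$.

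\emph{Remaining case.} For $\mu\in\overline{\Delta}$, the condition is simply $\mu\in\Re$, which is the third clause of the statement as written.

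Combining the three cases: the three displayed conditions hold if and only if $\spec_P(G)\subseteq\Re$, which by the first step is equivalent to periodicity.

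The main obstacle is the quadratic case, specifically computing the degree of $\zeta+\zeta^{-1}$ and solving $\varphi(n)=4$. I would handle this by recalling that $[\Ql(\zeta):\Ql(\zeta+\zeta^{-1})]=2$ for $n\ge3$, and then listing the $n$ with $\varphi(n)=4$ by checking small prime powers.
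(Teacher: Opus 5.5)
Your proof is correct and complete. The paper gives no argument of its own here; the lemma is quoted from \cite[Theorem~4.4]{bhakta2}. Your proposal therefore supplies a self-contained proof along the natural lines.

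Lemmas~\ref{ev_grover} and~\ref{period} reduce periodicity to $\spec_P(G)\subseteq\Re$. Since $\Ql$, the real elements of $\Delta$, and $\overline{\Delta}$ partition $\Rl$, it remains only to identify $\Re\cap\Ql=\{0,\pm\frac12,\pm1\}$ and $\Re\cap\Delta$ as the eight listed numbers. Your degree computation $[\Ql(\zeta_n+\zeta_n^{-1}):\Ql]=\varphi(n)/2$ for $n\ge 3$, together with $\varphi(n)=4\iff n\in\{5,8,10,12\}$, does exactly this.

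Two small remarks. First, your observation that $\mu=\lambda/k$ with $\lambda$ an algebraic integer is never used; integrality enters only through $2\mu=\zeta+\zeta^{-1}$. Second, regularity is used only to obtain $|\mu|\le 1$ from Lemma~\ref{reg}. That bound holds for every graph, because $P=D^{-1/2}AD^{-1/2}$ with $D$ the diagonal degree matrix. So your argument in fact establishes the characterization for arbitrary graphs, with the third clause being tautological. The regularity hypothesis only earns its keep in applications: there $\mu=\lambda/k$ makes the discriminant eigenvalues of integral graphs rational, so only the first clause needs checking, as in Theorem~\ref{period_unitary} and Corollary~\ref{comb_dicyclic}.
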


	Let $G$ be a graph and $U$ be its time evolution matrix. For $\Phi\in \Cl^{\mathcal{A}(G)}$, we denote by $\|\Phi\|$ the Euclidean norm of $\Phi$.  A vector $\Phi\in \Cl^{\mathcal{A}(G)}$ is said to be a \emph{state} if $\|\Phi\|=1$. We say that \emph{perfect state transfer} occurs from a state $\Phi$ to another state $\Psi$ at time $\tau\in \Nl$ if there exists a unimodular complex number $\gamma$ such that $U^\tau\Phi=\gamma \Psi$. 
	
	We focus on the transfer of states localized at individual vertices of a graph. A state $\Phi\in \Cl^{\mathcal{A}(G)}$ is called \emph{vertex-type} in $G$ if there exists a vertex $u\in V(G)$ such that $\Phi=N^* \eu$. We denote the vertex-type state corresponding to the vertex $u$ by $\Phi_u:=N^*\eu$, and let $\chi :=\{\Phi_u: u\in V(G)\}$ denote the set of all vertex-type states. Throughout this paper, we restrict our attention to vertex-type states. For the motivation behind vertex-type states, we refer the reader to \cite{kubota1}.
	\begin{defi}
		A graph exhibits \emph{perfect state transfer} from vertex $u$ to vertex $v$ at time $\tau$ if there exists a complex number $\gamma$ with $|\gamma|=1$ such that $U^\tau \Phi_u=\gamma \Phi_v$.
	\end{defi}
	We say that a graph $G$ exhibits {perfect state transfer} if there exist vertices $u$ and $v$ in $G$ and a positive integer $\tau$ such that perfect state transfer occurs from $u$ to $v$ at time $\tau$.	In Grover walks, the occurrence of perfect state transfer between vertex-type states is closely related to Chebyshev polynomials.

	The \emph{Chebyshev polynomial of the first kind}, denoted $T_n(x)$, is a polynomial defined by the recurrence relation
	\[T_0(x)=1,~T_1(x)=x,~\text{and}~T_n(x)=2xT_{n-1}(x)-T_{n-2}(x) ~\text{for}~ n \geq 2.\]
	It is well known that 
	\begin{equation}\label{chb}
		T_n(\cos\theta)=\cos(n\theta).
	\end{equation}
	This implies that $|T_n(x)|\leq 1$ for $|x|\leq 1$. Thus the next lemma follows easily.
	\begin{lemma}\label{ch}
		Let $\mu\in[-1,1]$ and $n$ be any positive integer. Then
		\begin{enumerate}[label=(\roman*)]
			\item $|T_n (\mu)|\leq 1$.
			\item $T_n (\mu)= 1$ if and only if $\mu =\cos\frac{s}{n}\pi$ for some even positive integer $s$.
			\item $T_n (\mu)= -1$ if and only if $\mu =\cos\frac{s}{n}\pi$ for some odd positive integer $s$.
		\end{enumerate}
	\end{lemma}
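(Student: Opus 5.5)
We prove the three parts of Lemma~\ref{ch} by writing $\mu=\cos\theta$ with $\theta\in[0,\pi]$, which is possible because $\arccos$ maps $[-1,1]$ bijectively onto $[0,\pi]$. By \eqref{chb} we then have $T_n(\mu)=\cos(n\theta)$, and every statement becomes a fact about the cosine of a real number.

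Part (i) is immediate, since $|\cos(n\theta)|\le 1$ for every real $\theta$.

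For (ii), $\cos(n\theta)=1$ holds if and only if $n\theta=2\pi m$ for some integer $m$. Hence $\theta=\frac{2m}{n}\pi$, and so $\mu=\cos\frac{s}{n}\pi$ with $s=2m$ even. Because $\theta\in[0,\pi]$, we get $m\ge 0$. The point needing care is that the statement asks for $s$ to be a \emph{positive} even integer, while $m=0$ (that is, $\mu=1$) gives $s=0$. We handle this by taking $s=2n$ instead: $\cos\frac{2n}{n}\pi=\cos 2\pi=1$, and $2n$ is positive and even. In general, adding $2n$ to $s$ leaves $\cos\frac{s}{n}\pi$ unchanged, so any even $s$ can be replaced by a positive one.

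For the converse of (ii), suppose $\mu=\cos\frac{s}{n}\pi$ with $s$ even. Since $T_n(\cos\phi)=\cos(n\phi)$ holds for every real $\phi$, not only $\phi\in[0,\pi]$, we take $\phi=\frac{s}{n}\pi$ and obtain $T_n(\mu)=\cos(s\pi)=1$.

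Part (iii) is the same argument with $\cos(n\theta)=-1$, which holds if and only if $n\theta=(2m+1)\pi$. This gives $\theta=\frac{2m+1}{n}\pi$ with $m\ge 0$, since $\theta\ge 0$. So $s=2m+1$ is automatically a positive odd integer, and no adjustment is needed. The converse gives $T_n(\mu)=\cos(s\pi)=-1$ for odd $s$.

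The only real obstacle is the positivity requirement on $s$ in the boundary case $\mu=1$ of (ii), and the shift $s\mapsto s+2n$ resolves it.
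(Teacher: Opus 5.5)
Your proof is correct and follows the same route as the paper, which simply states that the lemma follows from the identity $T_n(\cos\theta)=\cos(n\theta)$. Your treatment of the boundary case $\mu=1$ in (ii), taking $s=2n$ to keep $s$ positive, fills in a detail the paper leaves implicit.
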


	Kubota and Yoshino in \cite[Theorem~6.5]{kubota2} proved that a graph exhibits perfect state transfer from vertex $u$ to vertex $v$ at time $\tau$ if and only if $T_\tau(P)\eu=\gamma\ev$ for some $\gamma\in\{-1,1\}$. Subsequently, Guo and Schmeits \cite{guo_sch} showed that $\gamma=1$ is necessary. For an explicit proof of this fact, we refer the reader to \cite[Lemma~3.3]{bhakta4}.
	\begin{lemma}[{\cite[Theorem~6.5]{kubota2}}]\label{pst_vertex}
		Let $u$ and $v$ be two distinct vertices of a graph $G$. Then $G$ exhibits perfect state transfer from $u$ to $v$ at time $\tau$ if and only if $T_\tau(P)\eu=\ev$.
	\end{lemma}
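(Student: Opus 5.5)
The plan is to compute $U^\tau\Phi_u$ directly through the discriminant. Everything follows from a few identities among $N$, $R$, $U$ and $P$, plus the Chebyshev recurrence. First I record the basic facts. From $N_{u,a}=\delta_{u,t(a)}/\sqrt{\deg u}$ we get $NN^*=I$, and $R^2=I$. Put $C=2N^*N-I$, so $U=RC$. Then $C\Phi_u=2N^*NN^*\eu-N^*\eu=\Phi_u$, so
\[
U\Phi_u=RN^*\eu, \qquad NU\Phi_u=NRN^*\eu=P\eu .
\]

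Next I prove by induction on $\tau$ that
\[
U^\tau N^* = N^*T_\tau(P) + R N^* Q_\tau(P)
\]
for suitable polynomials $Q_\tau$; these turn out to be $-$(Chebyshev of the second kind, shifted). Equivalently, I work on the subspace $\mathcal{K}=N^*\Cl^{V}+RN^*\Cl^{V}$, which is $U$-invariant. Indeed, $UN^*x=RN^*x$, and
\[
URN^*x=R(2N^*N-I)RN^*x=2RN^*Px-N^*x .
\]
Writing states in $\mathcal{K}$ as $N^*x+RN^*y$, the action of $U$ is $(x,y)\mapsto(-y,\,x+2Py)$. This is a two-term recurrence whose solution is given by Chebyshev polynomials. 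Applying $N$ gives $NU^\tau\Phi_u=T_\tau(P)\eu$. The check is that $NRN^*=P$, $NN^*=I$, and that $N$ applied to the recurrence yields $T_{\tau+1}=2PT_\tau-T_{\tau-1}$ with $T_0=I$ and $T_1=P$.

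For necessity, suppose $U^\tau\Phi_u=\gamma\Phi_v$. Applying $N$ gives $T_\tau(P)\eu=\gamma\ev$. Since $P$ and $T_\tau$ are real, $\gamma$ is real, so $\gamma=\pm1$. For sufficiency, suppose $T_\tau(P)\eu=\ev$. Then $\|NU^\tau\Phi_u\|=1=\|U^\tau\Phi_u\|$. Because $N^*N$ is an orthogonal projection, the vector $U^\tau\Phi_u$ lies in the range of $N^*$, hence equals $N^*NU^\tau\Phi_u=\Phi_v$. This establishes the statement with $\gamma\in\{\pm1\}$.

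The main obstacle is excluding $\gamma=-1$, as in Guo and Schmeits. My plan is to use the spectral decomposition $\eu=\sum_\mu E_\mu\eu$ of $P$. By Lemma~\ref{ch}, $T_\tau(P)\eu=-\ev$ forces $|T_\tau(\mu)|=1$ on the eigenvalue support of $\eu$. The Perron eigenvector of $P$, with eigenvalue $\mu=1$ (which exists here since $P$ is similar to a stochastic matrix), has positive entries at both $u$ and $v$, and $T_\tau(1)=1$. Projecting onto it gives $\langle f,\eu\rangle=-\langle f,\ev\rangle$ for the positive eigenvector $f$ (in the appropriate symmetrized inner product). The left side is positive and the right side negative, a contradiction. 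For disconnected graphs I would restrict to the component, noting that $T_\tau(P)$ preserves components, so $u$ and $v$ must lie in the same component.
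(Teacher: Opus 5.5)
Your proof is correct and self-contained. The paper gives no argument for this lemma: it imports the version $T_\tau(P)\eu=\gamma\ev$ with $\gamma\in\{-1,1\}$ from Kubota and Yoshino, and cites Guo and Schmeits for $\gamma=1$.

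Your route has three steps. The pair recurrence $(x,y)\mapsto(-y,\,x+2Py)$ gives $NU^\tau N^*=T_\tau(P)$. The orthogonal projection $N^*N$ turns $\|NU^\tau\Phi_u\|=1=\|U^\tau\Phi_u\|$ into $U^\tau\Phi_u=\Phi_v$. A positive $1$-eigenvector of $P$ then excludes $\gamma=-1$. This works for any graph without isolated vertices. It also yields two by-products: the Kubota--Segawa bound $\|T_\tau(P)\eu\|=\|NU^\tau\Phi_u\|\le 1$, which the paper quotes right after the lemma, and the necessary condition $\deg u=\deg v$.

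The sign step can be made cleaner than a Perron--Frobenius appeal. Since $P_{x,y}=A_{x,y}/\sqrt{\deg x\deg y}$, the vector $f(w)=\sqrt{\deg w}$ satisfies $Pf=f$ and is positive on every vertex of every component. Because $P$ is real symmetric, the ordinary inner product gives $\sqrt{\deg u}=\langle T_\tau(P)f,\eu\rangle=\langle f,T_\tau(P)\eu\rangle=\gamma\sqrt{\deg v}$, hence $\gamma=1$ and $\deg u=\deg v$. This needs no symmetrized inner product, no restriction to components, and no remark about the spectral support of $\eu$. At the arc level it is just $U\mathbf{1}=\mathbf{1}$ together with $\langle\mathbf{1},\Phi_w\rangle=\sqrt{\deg w}$.

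One display needs fixing. The identity $U^\tau N^*=N^*T_\tau(P)+RN^*Q_\tau(P)$ is not right as written. Applying $N$ and using $NN^*=I$, $NRN^*=P$ gives $NU^\tau N^*=T_\tau(P)+PQ_\tau(P)$, which contradicts what you then derive unless $PQ_\tau(P)=0$.

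What your recurrence actually produces is $U^\tau N^*=RN^*Q_{\tau-1}(P)-N^*Q_{\tau-2}(P)$. Here $Q_{-1}=0$, $Q_0=1$ and $Q_{k+1}(x)=2xQ_k(x)-Q_{k-1}(x)$ are the Chebyshev polynomials of the second kind. Using $T_\tau=xQ_{\tau-1}-Q_{\tau-2}$, this can be rewritten as $U^\tau N^*=N^*T_\tau(P)+(RN^*-N^*P)Q_{\tau-1}(P)$. The second term is annihilated by $N$, because $N(RN^*-N^*P)=P-P=0$.

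Your argument runs through the (correct) pair recurrence rather than this display, so nothing downstream is affected.
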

	Moreover, Kubota and Segawa \cite[Lemma~3.2]{kubota1} proved that $\|T_n(P)\eu\|\le1$ for each vertex $u$. Hence perfect state transfer occurs from $u$ to $v$ at time $\tau$ if and only if $|T_\tau(P)_{u,v}|=1$, that is, $T_\tau(P)_{u,v}=1$. Since $P$ is symmetric, perfect state transfer occurs from  $u$ to $v$  if and only if it occurs from $v$ to $u$. Hence we say that perfect state transfer occurs between $u$ and $v$.
	
	\subsection{Cayley graphs}	
	
	Let $\Gamma$ be a finite group and $S$ be an inverse closed subset of $\Gamma\setminus\{e\}$, where $e$ is the identity element of $\Gamma$. The \emph{Cayley graph} $\cay(\Gamma, S)$ of $\Gamma$ with respect to $S$ is an undirected graph whose vertex set is $\Gamma$ and two vertices $u$ and $v$ are adjacent  if $uv^{-1}\in S$. We say that $\cay(\Gamma,S)$ is \emph{normal} (or \emph{quasi-abelian}) if $S$ is the union of some conjugacy classes of $\Gamma$. 
	
	Let $\Gamma$ be a finite group. A \emph{representation} of $\Gamma$ is a group homomorphism $\rho:\Gamma\to\operatorname{GL}(W)$ for some finite-dimensional vector space $W$, where $\mathrm{GL}(W)$ denotes the group of all invertible linear operators on $W$.  The \emph{degree} of $\rho$, denoted $d_{\rho}$, is the dimension of $W$. The one-dimensional representation $\rho_0:\Gamma\to\operatorname{GL}(\mathbb{C})$ defined by $\rho_0(g)=1$ for all $g\in\Gamma$, 	is called the \emph{trivial representation} of $\Gamma$.

	Two representations $\rho_1:\Gamma\to \mathrm{GL}(W_1)$ and $\rho_2:\Gamma\to \mathrm{GL}(W_2)$ of $\Gamma$ are said to be \emph{equivalent}  if there exists a linear isomorphism $T:W_1\to W_2$ such that $\rho_2(g)=T\,\rho_1(g)\,T^{-1} \text{ for each }g\in\Gamma$.

	Let $\rho:\Gamma\to\mathrm{GL}(W)$ be a representation of $\Gamma$. The \emph{character} of $\rho$ is the function $\chi_\rho:\Gamma\to\mathbb{C}$ defined by $\chi_\rho(g)=\operatorname{tr}(\rho(g))
	\text{ for }g\in\Gamma$, where $\operatorname{tr}(\rho(g))$ denotes the trace of the linear transformation $\rho(g)$. The degree of the character $\chi_\rho$ is defined to be the degree of the corresponding representation $\rho$, and is denoted by the same symbol $d_\rho$.
	
	A subspace $X$ of $W$ is called \emph{$\Gamma$-invariant} if $\rho(g)x\in X\text{ for each }g\in\Gamma \text{ and }x\in X$.
	The subspaces $\{0\}$ and $W$ are called the \emph{trivial invariant subspaces}. If these are the only $\Gamma$-invariant subspaces of $W$, then $\rho$ is called an \emph{irreducible representation} of $\Gamma$, and its character $\chi_\rho$ is called an \emph{irreducible character} of $\Gamma$.
	
	\begin{lemma}[{\cite[Proposition~4]{rep_dihedral}}]\label{schur_lemma}
		Let $\rho:\Gamma\to\operatorname{GL}(W)$ be an irreducible representation of $\Gamma$. If $T$ is a linear operator on $W$ such that $T\rho(g)=\rho(g)T$ for each $g\in\Gamma$, then $T$ is a scalar multiple of the identity.
	\end{lemma}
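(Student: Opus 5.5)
The plan is the standard eigenvalue argument, which uses only that $W$ is a nonzero finite-dimensional vector space over $\Cl$ together with the definition of irreducibility. First, since $\Cl$ is algebraically closed and $\dim W=d_\rho\ge 1$, the characteristic polynomial of $T$ has a root. So $T$ has an eigenvalue $\lambda\in\Cl$, and the operator $T-\lambda I$ has a nonzero kernel.

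Next I will show that $X:=\ker(T-\lambda I)$ is a $\Gamma$-invariant subspace of $W$. Because $T$ commutes with each $\rho(g)$, so does $T-\lambda I$. Hence for $x\in X$ and $g\in\Gamma$ we get
\[(T-\lambda I)\rho(g)x=\rho(g)(T-\lambda I)x=0,\]
so $\rho(g)x\in X$. Since $X\neq\{0\}$ and $\rho$ is irreducible, the only remaining possibility is $X=W$. That is, $T-\lambda I=0$, so $T=\lambda I$, a scalar multiple of the identity.

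There is no real obstacle here. The one point to keep explicit is where the ground field matters: the existence of an eigenvalue relies on working over $\Cl$, which is the setting fixed in this paper since the representations act on finite-dimensional complex spaces. Over a non-closed field the conclusion can fail. The rest is a direct check of invariance.
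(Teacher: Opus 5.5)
Your proof is correct. It is essentially the standard argument from the cited source (Serre, Proposition~4), which the paper quotes without proof: take an eigenvalue $\lambda$ of $T$ over $\mathbb{C}$, note that $\ker(T-\lambda I)$ is a nonzero $\Gamma$-invariant subspace, and conclude $T=\lambda I$ by irreducibility. Serre routes this through the first half of Schur's lemma (a nonzero intertwiner between irreducible representations is an isomorphism), but applied to $T-\lambda I$ that step is exactly your kernel argument.
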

	Let $W$ be an inner product space. A representation $\rho:\Gamma\to\mathrm{GL}(W)$ is called \emph{unitary} if $\rho(g)$ is a unitary operator for every $g\in\Gamma$. It is well known that every representation of a finite group is equivalent to a unitary representation. Let $\widehat{\Gamma}$ denote a complete set of pairwise inequivalent irreducible unitary representations of $\Gamma$, and let $\operatorname{Irr}(\Gamma):=\{\chi_\rho:\rho\in\widehat{\Gamma}\}$
	denote the set of irreducible characters of $\Gamma$. We refer to \cite{steinberg} for more details about the representation theory of finite groups.
	
	The following result gives the eigenvalues and corresponding eigenvectors of the adjacency matrix of a normal Cayley graph.
	\begin{theorem}[{\cite[Exercise~5.12.3]{steinberg}}]\label{ev_normal_cayley}
		Let $\Gamma$ be a finite group and let $S$ be an inverse closed subset of $\Gamma\setminus\{e\}$ satisfying $gSg^{-1}=S$. Then the eigenvalues of the adjacency matrix of the Cayley graph $\cay(\Gamma,S)$  are given by 
		\[\lambda_\rho=\frac{1}{d_\rho}\sum_{s\in S}\chi_\rho(s)~~\text{for }\rho\in\widehat{\Gamma},\]
		where the eigenvalue $\lambda_\rho$ has multiplicity $d_\rho^2$. Moreover, for each $\rho\in\widehat{\Gamma}$, the vectors
		\[v_{i,j}^\rho=\frac{\sqrt{d_\rho}}{\sqrt{|\Gamma|}}\big(\rho_{i,j}(g):g\in\Gamma\big)^t~~\text{for }1\leq i,j\leq d_\rho\]
		form an orthogonal basis for the eigenspace corresponding to $\lambda_\rho$, where $\rho_{i,j}(g)$ denotes the $(i,j)$-entry of the matrix representing $\rho(g)$ with respect to a fixed orthonormal basis.
	\end{theorem}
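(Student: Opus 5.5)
The statement to prove is Theorem~\ref{ev_normal_cayley}: for $S$ closed under conjugation, the vectors $v^\rho_{i,j}$ are eigenvectors of $A$ with eigenvalue $\lambda_\rho$, and together they form an orthogonal eigenbasis. The plan is to verify the eigenvector equation directly using Schur's lemma (Lemma~\ref{schur_lemma}). Then use the Schur orthogonality relations to get orthogonality and a dimension count to get a basis.

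\textbf{Step 1: the class sum is a scalar.} Fix $\rho\in\widehat{\Gamma}$ and set $T=\sum_{s\in S}\rho(s)$. Since $gSg^{-1}=S$, we have
\[\rho(g)T\rho(g)^{-1}=\sum_{s\in S}\rho(gsg^{-1})=T\]
for every $g\in\Gamma$. By Lemma~\ref{schur_lemma}, $T=cI$. Taking traces gives $d_\rho c=\sum_{s\in S}\chi_\rho(s)$, so $c=\lambda_\rho$.

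\textbf{Step 2: the eigenvector equation.} In $\cay(\Gamma,S)$, $u\sim v$ iff $uv^{-1}\in S$, i.e. $v=s^{-1}u$ for some $s\in S$. Since $S=S^{-1}$, this is the same as $v=su$. Then
\[(Av^\rho_{i,j})_u \propto \sum_{s\in S}\rho_{i,j}(su)=\sum_{s\in S}\sum_k \rho_{i,k}(s)\rho_{k,j}(u)=\sum_k T_{i,k}\,\rho_{k,j}(u)=\lambda_\rho\,\rho_{i,j}(u).\]
So each $v^\rho_{i,j}$ is an eigenvector with eigenvalue $\lambda_\rho$.

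\textbf{Step 3: orthogonality and completeness.} For unitary irreducible representations, the Schur orthogonality relations give
\[\sum_g \rho_{i,j}(g)\overline{\sigma_{k,l}(g)}=\frac{|\Gamma|}{d_\rho}\,\delta_{\rho\sigma}\delta_{ik}\delta_{jl}.\]
Hence the normalized vectors $v^\rho_{i,j}$ are orthonormal across all $\rho,i,j$. There are $\sum_\rho d_\rho^2=|\Gamma|$ of them, so they form an orthonormal basis of $\Cl^\Gamma$.

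\textbf{Step 4: eigenspaces and multiplicities.} Different $\rho$ may give the same value of $\lambda_\rho$. The eigenspace of a value $\lambda$ is therefore spanned by the $v^\rho_{i,j}$ over all $\rho$ with $\lambda_\rho=\lambda$. Since we have a full eigenbasis, nothing else lies in that eigenspace. Each $\rho$ contributes exactly $d_\rho^2$ basis vectors, which is the multiplicity claimed, counted per $\rho$.

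\textbf{Main obstacle.} The only delicate point is the orientation convention: getting $\rho(su)=\rho(s)\rho(u)$ rather than $\rho(u)\rho(s)$, which depends on the adjacency rule $uv^{-1}\in S$. If the other convention arises, invariance under conjugation still makes $\sum_s\rho(s)$ scalar, so the argument goes through either way.
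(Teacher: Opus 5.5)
Your proof is correct. The paper gives no proof of its own here; it quotes the result from Steinberg's exercise. Your argument is the standard one behind that exercise: conjugation invariance plus Schur's lemma gives $\sum_{s\in S}\rho(s)=\lambda_\rho I$. Since the neighbours of $u$ are $\{su : s\in S\}$ (using $S=S^{-1}$), this scalar identity becomes the eigenvector equation for $v^\rho_{i,j}$. Schur orthogonality together with $\sum_{\rho}d_\rho^2=|\Gamma|$ then yields an orthonormal eigenbasis of $\Cl^{\Gamma}$.

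Your Step~4 caveat is also right. Read literally, ``multiplicity $d_\rho^2$'' and ``basis of the eigenspace'' hold only when $\lambda_\rho\neq\lambda_\sigma$ for all $\sigma\neq\rho$. In general, the eigenspace of a value $\lambda$ is spanned by all $v^\sigma_{i,j}$ with $\lambda_\sigma=\lambda$. This per-$\rho$ reading is exactly how the paper later uses the result, in the spectral decomposition of $P$ summed over $\rho\in\widehat{\Gamma}$.
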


	\section{Perfect state transfer on normal Cayley graphs}
	Let $\Gamma$ be a finite group. For each $g \in \Gamma$, define two permutations of $\Gamma$, namely the \emph{left multiplication} $\mathcal{L}_g:\Gamma\to\Gamma$ and the \emph{right multiplication} $\mathcal{R}_g:\Gamma\to\Gamma$, by $\mathcal{L}_g(h) = gh$ and $\mathcal{R}_g(h) = hg$, respectively, for all $h \in \Gamma$. Given a permutation $\sigma$ of $\Gamma$, its associated permutation matrix in $\mathbb{C}^{\Gamma\times\Gamma}$ has $(u,v)$-entry $\delta_{u,\sigma(v)}$.	Throughout the paper, we use the same notation, $\sigma$, for both a permutation and its associated permutation matrix whenever the context makes the meaning clear. Let $Z(\Gamma):=\{g\in\Gamma: gh=hg~\text{for each } h\in\Gamma\}$ be the center of the group $\Gamma$. For a graph $G$, let $\aut(G)$ be its automorphism group.
	\begin{lemma}\label{pst_normalcayley}
		Let $G:=\cay(\Gamma,S)$ be a normal Cayley graph over a finite group $\Gamma$ with connection set $S$. Then $G$ exhibits perfect state transfer at time $\tau$ if and only if  $T_\tau(P)=\mathcal{R}_z$ for some $z\in Z(\Gamma)$ of order $2$.
	\end{lemma}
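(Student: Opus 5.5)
Throughout, $G=\cay(\Gamma,S)$ is $|S|$-regular, so Lemma~\ref{reg} gives $P=\frac{1}{|S|}A$. Since $S$ is closed under conjugation, $A$ commutes with every left multiplication $\mathcal{L}_g$ and every right multiplication $\mathcal{R}_g$. Indeed, $u\sim v$ iff $uv^{-1}\in S$ iff $(gu)(gv)^{-1}=g(uv^{-1})g^{-1}\in S$, and likewise $(ug)(vg)^{-1}=uv^{-1}$. Hence $T_\tau(P)$, a polynomial in $A$, also commutes with all $\mathcal{L}_g$ and $\mathcal{R}_g$.

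\emph{Sufficiency.} Assume $T_\tau(P)=\mathcal{R}_z$ with $z\in Z(\Gamma)$ of order $2$. Then $T_\tau(P)\mathbf{e}_u=\mathbf{e}_{\sigma(u)}$, where $\sigma(u)$ is either $uz$ or $uz^{-1}$ depending on the matrix convention. Both equal $uz$ because $z=z^{-1}$. Since $z\neq e$, we have $uz\neq u$, and Lemma~\ref{pst_vertex} gives perfect state transfer from $u$ to $uz$ at time $\tau$.

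\emph{Necessity.} Suppose perfect state transfer occurs from $u$ to $v$ at time $\tau$, so $T_\tau(P)\mathbf{e}_u=\mathbf{e}_v$ by Lemma~\ref{pst_vertex}. Put $M=T_\tau(P)$. Because $M$ commutes with the left multiplications, which act transitively, every column of $M$ is a standard basis vector. Explicitly, for any $w$ write $w=gu$; then $M\mathbf{e}_w=M\mathcal{L}_g\mathbf{e}_u=\mathcal{L}_g\mathbf{e}_v=\mathbf{e}_{gv}$. Setting $z:=u^{-1}v$, this says $M\mathbf{e}_w=\mathbf{e}_{wz}$ for all $w$, so $M$ is the permutation matrix of $w\mapsto wz$, that is, $\mathcal{R}_z$ up to the inverse convention.

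It remains to show that $z$ is central of order $2$. For centrality, $M$ also commutes with every $\mathcal{L}_h$ applied to arbitrary vectors, and in addition with every right multiplication. Using $M\mathcal{R}_h=\mathcal{R}_hM$ on $\mathbf{e}_e$ gives $\mathbf{e}_{hz}=\mathbf{e}_{zh}$, hence $hz=zh$ for all $h$, so $z\in Z(\Gamma)$. For the order, $M$ is symmetric because $P$ is. A symmetric permutation matrix is an involution, so $wz^2=w$, i.e.\ $z^2=e$. Finally $z\neq e$ because $v\neq u$. Since $z=z^{-1}$, the left/right convention for the permutation matrix $\mathcal{R}_z$ makes no difference.

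The main obstacle is this last step: extracting centrality of $z$ and the identity $z^2=e$ from the facts that $M$ is a polynomial in $A$ and is symmetric. The key is to use normality of $S$, through commutation with both the left and the right regular actions, rather than merely vertex-transitivity.
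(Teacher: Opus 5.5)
Your proposal is correct and follows essentially the paper's route: commutation with the left multiplications, which are automorphisms by normality, forces $T_\tau(P)=\mathcal{R}_z$; commutation with the right multiplications gives $z\in Z(\Gamma)$; symmetry of $P$ gives $z^2=e$; and the converse is read off from Lemma~\ref{pst_vertex}. The only minor difference is that you obtain the permutation structure directly by applying transitivity of the left action to the single column $T_\tau(P)\mathbf{e}_u=\mathbf{e}_v$, whereas the paper first argues that $T_\tau(P)$ is a fixed-point-free involutive permutation matrix via right-translation invariance and symmetry, and only then identifies it as $\mathcal{R}_z$.
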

	\begin{proof}
		Suppose $G$ exhibits perfect state transfer between the vertices $u$ and $v$ at time $\tau$. By Lemma~\ref{pst_vertex},  we have $ T_{\tau}(P)\eu=\ev$.  Using Lemma~\ref{reg}, $T_\tau(P)$ is a polynomial in $A$. Further, $\mathbf{e}^t_{xg}A^k\mathbf{e}_{yg} =\mathbf{e}^t_{x}A^k\mathbf{e}_{y}$ for any $x,y,g\in\Gamma$ and nonnegative integer $k$. Thus it follows that $\mathbf{e}^t_{xg}T_\tau(P)\mathbf{e}_{yg} =\mathbf{e}^t_{x}T_\tau(P)\mathbf{e}_{y}$. Combining this with the symmetry of $T_\tau(P)$ and $T_{\tau}(P)\eu=\ev$, we deduce that $T_\tau(P)$ is a permutation matrix of order $2$ with no fixed points.  
		
		Since $G$ is normal, both $\mathcal{L}_g$ and $\mathcal{R}_g$ are automorphisms of $G$ for each $g\in\Gamma$. Since $T_\tau(P)$ is a polynomial in $A$, it commutes with every element of $\aut(G)$. In particular, $T_\tau(P) \in Z(\aut(G)).$ Hence $T_\tau(P)\mathcal{L}_g = \mathcal{L}_g T_\tau(P)$ for all  $g\in \Gamma.$
		Thus for all $g,h \in \Gamma$, $T_\tau(P)(gh) = gT_\tau(P)(h).$ Let $z := T_\tau(P)(e)$, where $e$ is the identity element of $\Gamma$.	Setting $h=e$ in the previous identity yields $T_\tau(P)(g) = gz$ for all $g \in \Gamma$. 
		Therefore, $T_\tau(P) = \mathcal{R}_z.$ Since $T_\tau(P) \in Z(\aut(G))$, we must have $\mathcal{R}_z \mathcal{R}_g = \mathcal{R}_g \mathcal{R}_z$ for all $g \in \Gamma,$
		which holds precisely when $z \in Z(\Gamma)$.  
		Finally, since $T_\tau(P)$ is a permutation matrix of order $2$, $z$ must be an element of order~$2$. 
		
		Conversely, suppose that $T_\tau(P)=\mathcal{R}_z$ for some $z\in Z(\Gamma)$ of order $2$. Then for every $u\in\Gamma$, setting $v=uz$, we have $\mathbf{e}_v^{t}T_\tau(P)\mathbf{e}_u=1$. Therefore by Lemma~\ref{pst_vertex}, $G$ exhibits perfect state transfer between $u$ and $v$ at time $\tau$. This completes the proof.
	\end{proof}

	The following result shows that periodicity is a necessary condition for the occurrence of perfect state transfer in normal Cayley graphs, and  provides the minimum time at which perfect state transfer can occur.
	\begin{corollary}\label{minimum}
		Let $G$ be a normal Cayley graph. If $G$ exhibits perfect state transfer at time $\tau$, then $T_{\tau}(\mu)\in\{-1,1\}$ for every eigenvalue $\mu$ of $P$. Moreover, if $\tau$ is the minimum time for perfect state transfer, then $G$ is $2\tau$-periodic.
	\end{corollary}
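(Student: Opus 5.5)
By Lemma~\ref{pst_normalcayley}, perfect state transfer at time $\tau$ forces $T_\tau(P)=\mathcal{R}_z$ for some central $z$ of order $2$. The idea is to read off everything from the fact that $\mathcal{R}_z$ is an involutory permutation matrix.

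\textbf{First claim.} Since $z^2=e$, we have $\mathcal{R}_z^2=\mathcal{R}_{z^2}=I$. Hence $T_\tau(P)^2=I$.

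Now use that $P=\frac1kA$ is real symmetric by Lemma~\ref{reg}. Take any eigenvalue $\mu$ of $P$ with eigenvector $x$. Then $T_\tau(P)x=T_\tau(\mu)x$. The relation $T_\tau(P)^2=I$ gives $T_\tau(\mu)^2=1$, and since $T_\tau(\mu)$ is real, $T_\tau(\mu)\in\{-1,1\}$. This proves the first claim.

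\textbf{Periodicity.} For each eigenvalue $\mu$ of $P$, write $\mu=\cos\theta$ with $\theta=\arccos\mu\in[0,\pi]$. Equation~\eqref{chb} gives $\cos(\tau\theta)=\pm1$, so $\tau\theta\in\pi\Zl$ and therefore $2\tau\theta\in 2\pi\Zl$. Consequently $\exp(\pm\iu\theta)^{2\tau}=1$.

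By Lemma~\ref{ev_grover}, the remaining eigenvalues of $U$ are $\pm1$, and these also satisfy $\eta^{2\tau}=1$. So by Lemma~\ref{period}, $U^{2\tau}=I$ and $G$ is periodic with period dividing $2\tau$.

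\textbf{Exact period.} Assume $\tau$ is the minimal perfect state transfer time, and let $p$ be the period of $G$. Then $p\mid 2\tau$, and we must rule out $p<2\tau$.

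The key link is $U^m=I\iff T_m(\mu)=1$ for every eigenvalue $\mu$ of $P$, by Lemma~\ref{ch}(ii) and Lemma~\ref{period}. The $\pm1$ eigenvalues of $U$ need separate care. Since $G$ is a connected component issue aside, $1$ is always an eigenvalue of $P$ (Lemma~\ref{reg}). If $b_1-1+1_B>0$, then $-1$ is an eigenvalue of $U$ and forces $m$ even. I expect this to cause no trouble, since $-1=\exp(\iu\arccos(-1))$ appears only when handled.

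Suppose $U^m=I$ with $m<2\tau$. Then $T_m(P)=I$, because $T_m(P)$ is diagonalizable with all eigenvalues equal to $1$. The Chebyshev identity
\[
T_{a+b}+T_{|a-b|}=2T_aT_b
\]
lets us compare $m$ with $\tau$. Since $p\mid 2\tau$ and $p\neq 2\tau$, we get $p\le\tau$.

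\emph{Case $p=\tau$.} Then $T_\tau(P)=I$. But $T_\tau(P)=\mathcal{R}_z\neq I$, a contradiction.

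\emph{Case $p<\tau$.} Using $T_p(P)=I$ and the identity $T_{\tau}(P)+T_{\tau-2p}(P)=2T_{\tau-p}(P)T_p(P)$, an induction shows $T_{n+p}(P)=T_n(P)$. This reduction is cleanest at the level of $\mu=\cos\theta$: $p\theta\in2\pi\Zl$ implies $\cos((\tau-p)\theta)=\cos(\tau\theta)$. It follows that $T_{\tau-p}(P)=T_\tau(P)=\mathcal{R}_z$, so perfect state transfer already occurs at time $\tau-p<\tau$, contradicting minimality.

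Hence $p=2\tau$.

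\textbf{Main obstacle.} The delicate point is confirming that the extra eigenvalues $\pm1$ of $U$ from Lemma~\ref{ev_grover} do not lower or alter the period beyond what the $\mu$'s dictate. Here $1$ is harmless. For $-1$: when $-1$ occurs, $2\tau$ is even anyway. In the reduction step I only use $U^p=I\Rightarrow \cos(p\theta)=1$ for the $\mu$'s, which is valid regardless.
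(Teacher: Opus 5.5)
Your proof is correct and follows the paper's route. Since $T_\tau(P)=\mathcal{R}_z$ is an involution, $T_\tau(\mu)=\pm1$; hence every $\exp(\pm\iu\arccos\mu)$ and the extra eigenvalues $\pm1$ of $U$ are $2\tau$-th roots of unity, and minimality of $\tau$ then rules out a period $p\le\tau$.

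The only difference is in that last step. The paper reduces directly on $U$: $U^{\tau}=U^{\tau-k}$ forces $\tau<k$, and together with $k\mid 2\tau$ this gives $k=2\tau$. You instead pass to $T_p(P)=I$ and use $\cos((\tau-p)\theta)=\cos(\tau\theta)$. For this you need only the forward implication $U^m=I\Rightarrow T_m(\mu)=1$; the converse in your ``key link'' would also require $m$ even whenever $-1$ is an extra eigenvalue of $U$. The product identity for $T_{a+b}$ is not needed and can be dropped.
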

	\begin{proof}
		Suppose $G$ exhibits perfect state transfer at time $\tau$ and let $\mu$ be an eigenvalue of $P$. Then by Lemma~\ref{pst_normalcayley}, $T_\tau(\mu)^2=1$, and so $T_{\tau}(\mu)\in\{-1,1\}$.
		
		Let $G$ be connected and  let $\tau$ be the minimum time for exhibiting perfect state transfer on $G$. Using properties of the Chebyshev polynomial of the first kind, it follows that $\mu=\cos \frac{m}{\tau}\pi$ for some positive integer $m$. Consequently,  Lemma~\ref{ev_grover} gives that $U^{2\tau}=I$.  Thus $G$ is periodic. Suppose that $G$ is $k$-periodic, that is, $k$ is the least positive integer such that $U^k=I$. This implies that $k\divides2\tau$. Since $\tau$ is the minimum time at which perfect state transfer occurs in $G$, we have $\tau\in\{1,\ldots,k-1\}$. Hence $k=2\tau$. 
		
		The preceding argument is true even if $G$ is disconnected. Indeed, if $G$ is disconnected, then it is a disjoint union of isomorphic connected Cayley graphs. Consequently, all connected components have the same spectrum and so the same period. Thus the conclusion remains valid.
	\end{proof} 
	Let $\Gamma$ be a finite group. Consider the complex vector space
	\[
	\mathbb{C}[\Gamma]
	:=
	\left\{
	\sum_{g\in\Gamma}c_g g
	:
	c_g\in\mathbb{C}
	\right\},
	\]
	where addition and scalar multiplication are  defined component-wise. The set $\Gamma$ forms a basis of the vector space $\mathbb{C}[\Gamma]$. The \emph{(left) regular representation} of $\Gamma$ is the homomorphism $L:\Gamma\to\operatorname{GL}(\mathbb{C}[\Gamma])$
	defined by
	\[
	L(g)\!\left(\sum_{h\in\Gamma}c_hh\right)
	=
	\sum_{h\in\Gamma}c_h(gh)
	\quad \text{for }g\in\Gamma.
	\]
	It is well known that (see, for instance, \cite[Theorem~4.4.4]{steinberg}) the regular representation decomposes as
	\begin{equation}\label{regular_rep}
		L
		\cong
		\bigoplus_{\rho\in\widehat{\Gamma}}d_\rho\,\rho.
	\end{equation}
	\begin{lemma}\label{pst_nes_normalcayley}
		Let $\Gamma$ be a finite group and let $z\in \Gamma\setminus\{e\}$. Then $z\in Z(\Gamma)$ is an element of order $2$  if and only if   $\chi_\rho(z)=\pm d_\rho$ for each $\rho\in \widehat{\Gamma}$.
	\end{lemma}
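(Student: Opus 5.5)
We prove the two directions separately. Throughout, each $\rho\in\widehat{\Gamma}$ is taken to be unitary, so every $\rho(z)$ is a unitary matrix.

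\emph{Forward direction.} Suppose $z\in Z(\Gamma)$ has order $2$. For each $\rho\in\widehat{\Gamma}$, the operator $\rho(z)$ commutes with $\rho(g)$ for every $g\in\Gamma$, because $zg=gz$. By Lemma~\ref{schur_lemma}, $\rho(z)=c_\rho I$ for some scalar $c_\rho$. Since $z^2=e$, we get $\rho(z)^2=I$, so $c_\rho^2=1$ and $c_\rho=\pm1$. Taking traces gives $\chi_\rho(z)=\pm d_\rho$.

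\emph{Reverse direction.} Suppose $\chi_\rho(z)=\pm d_\rho$ for every $\rho\in\widehat{\Gamma}$. The key step is to upgrade this trace condition to $\rho(z)=\pm I$.

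\begin{itemize}
\item Since $\rho(z)$ is unitary (and $\Gamma$ is finite), it is diagonalizable with eigenvalues $\omega_1,\dots,\omega_{d_\rho}$ on the unit circle.
\item The hypothesis says $\left|\sum_i \omega_i\right|=d_\rho$. Equality in the triangle inequality forces all $\omega_i$ to be equal, say $\omega_i=\omega$.
\item Then $\chi_\rho(z)=d_\rho\,\omega$, and $\chi_\rho(z)=\pm d_\rho$ gives $\omega=\pm1$.
\item A diagonalizable matrix with the single eigenvalue $\omega$ equals $\omega I$, so $\rho(z)=\pm I$.
\end{itemize}

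From $\rho(z)=\pm I$ we read off two facts for every $\rho\in\widehat{\Gamma}$:
\begin{enumerate}
\item $\rho(z^2)=\rho(z)^2=I$;
\item $\rho(z)$ commutes with $\rho(g)$, so $\rho(zgz^{-1}g^{-1})=I$ for every $g\in\Gamma$.
\end{enumerate}

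To pass from ``$\rho(x)=I$ for all $\rho$'' to ``$x=e$'', we use the regular representation. By \eqref{regular_rep}, $L\cong\bigoplus_{\rho\in\widehat{\Gamma}} d_\rho\,\rho$. Hence, if $\rho(x)=I$ for all $\rho\in\widehat{\Gamma}$, then $L(x)$ is the identity operator. In particular $L(x)e=x$ must equal $e$, so $x=e$.

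Applying this with $x=z^2$ gives $z^2=e$. Since $z\neq e$ by assumption, $z$ has order exactly $2$. Applying it with $x=zgz^{-1}g^{-1}$ gives $zg=gz$ for every $g\in\Gamma$, so $z\in Z(\Gamma)$.

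The only step that needs care is upgrading $|\chi_\rho(z)|=d_\rho$ to $\rho(z)$ being a scalar matrix, which relies on unitarity and the equality case of the triangle inequality; the remaining steps are routine.
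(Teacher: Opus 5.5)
Your proof is correct and follows essentially the same route as the paper. The forward direction uses Schur's lemma. For the converse, the equality case of the triangle inequality gives $\rho(z)=\pm I$, and faithfulness of the regular representation is then applied to $z^2$ and to $zgz^{-1}g^{-1}$. The paper instead compares $gzg^{-1}$ with $z$ directly, which is the same argument.
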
	
	\begin{proof}
		Suppose that $z\in Z(\Gamma)$ is an element of order $2$. Then for each $\rho\in\widehat{\Gamma}$ and  $g\in\Gamma$,
		\[
		\rho(g)\rho(z)
		=
		\rho(gz)
		=
		\rho(zg)
		=
		\rho(z)\rho(g).
		\]
		Since $\rho$ is irreducible, Lemma~\ref{schur_lemma} implies that $\rho(z)=\lambda I_{d_\rho}$
		for some scalar $\lambda\in\mathbb{C}$. Using $z^2=e$, we obtain 
		$I_{d_\rho}=\rho(z^2)=\rho(z)^2=\lambda^2I_{d_\rho}$, 
		which yields $\lambda=\pm1$. Therefore $\rho(z)=\pm I_{d_\rho}$,
		and hence
		\[
		\chi_\rho(z)
		=
		\operatorname{tr}(\rho(z))
		=
		\pm\operatorname{tr}(I_{d_\rho})
		=
		\pm d_\rho.
		\]

		Conversely, let $\chi_\rho(z)=\pm d_\rho$ for each $\rho\in \widehat{\Gamma}$.  Let $\lambda_1,\ldots,\lambda_{d_\rho}$ be the eigenvalues of $\rho(z)$. Since $\rho(z)$ is unitary, $|\lambda_i|=1$ for $1\leq i\leq d_\rho$. Then
		\[
		d_\rho=|\chi_\rho(z)|
		=
		|\lambda_1+\cdots+\lambda_{d_\rho}|
		\le
		|\lambda_1|+\cdots+|\lambda_{d_\rho}|
		=d_\rho.
		\]
		Thus $	|\lambda_1+\cdots+\lambda_{d_\rho}|
		=
		|\lambda_1|+\cdots+|\lambda_{d_\rho}|
		=d_\rho$. This, along with $\chi_\rho(z)=\pm d_\rho$, give that $\lambda_1=\cdots=\lambda_{d_\rho}=1$ or $\lambda_1=\cdots=\lambda_{d_\rho}=-1$. 
		Hence $\rho(z)=\pm I_{d_\rho}$, giving that  $\rho(z^2)=I_{d_\rho}$.

		Consider the regular representation $L$ of $\Gamma$. Since $\rho(z^2)=I_{d_\rho}$ for every $\rho$, \eqref{regular_rep} gives $L(z^2)=I_{|\Gamma|}$. Note that the regular representation is faithful. Therefore $L(z^2)=I_{|\Gamma|}$ gives that $z^2=e$. Since $z\neq e$, the order of $z$ is $2$.  
		
		For $g\in \Gamma$ and $\rho\in\widehat{\Gamma}$, we have
		\[
		\rho(gzg^{-1})
		=
		\rho(g)\rho(z)\rho(g)^{-1}
		=
		\rho(g)(\pm I_{d_\rho})\rho(g)^{-1}
		=
		\pm I_{d_\rho}
		=
		\rho(z).
		\]
		Hence for each $\rho\in\widehat{\Gamma}$, $\rho(gag^{-1})=\rho(z)$. Therefore $L(gzg^{-1})=L(z)$. Since the regular representation is faithful, it follows that $gzg^{-1}=z$
		for each $g\in \Gamma$. Therefore $z\in Z(\Gamma)$.
	\end{proof}
	Define the subsets $\widehat{\Gamma}_{z,0}$ and $\widehat{\Gamma}_{z,1}$ of $\widehat{\Gamma}$ by
	\[\widehat{\Gamma}_{z,0}=\{\rho\in\widehat{\Gamma}:\chi_\rho(z)=d_\rho\}~~\text{and}~~\widehat{\Gamma}_{z,1}=\{\rho\in\widehat{\Gamma}:\chi_\rho(z)=-d_\rho\}.\]
	Thus if $z\in Z(\Gamma)$ is an element of order $2$, then $\widehat{\Gamma}=\widehat{\Gamma}_{z,0}\cup \widehat{\Gamma}_{z,1}$.

	Since the normal Cayley graph $\cay(\Gamma,S)$ is $|S|$-regular, it follows from Lemma~\ref{reg} and Theorem~\ref{ev_normal_cayley} that the eigenvalues of the discriminant matrix $P$ are given by
	\begin{equation}\label{evalue_p_normal}
		\mu_\rho=\frac{\lambda_\rho}{|S|} ~~\text{for}~~ \rho\in\widehat{\Gamma},
	\end{equation}
	where each $\mu_\rho$ has multiplicity $d_\rho^2$ and corresponding orthonormal eigenbasis
	\begin{equation}\label{evaector_p_normal}
		\{v_{i,j}^{\rho}:1\le i,j\le d_\rho\}.
	\end{equation}
	Therefore the  spectral decomposition of the discriminant $P$ of the normal Cayley graph $\cay(\Gamma,S)$ is 
	\begin{equation*}
		P=\sum_{\rho\in\widehat{\Gamma}}\mu_\rho\sum_{i=1}^{d_\rho}\sum_{j=1}^{d_\rho}v_{i,j}^\rho(v_{i,j}^\rho)^*.
	\end{equation*}
	Since $T_\tau(x)$ is a polynomial, it follows that
	\[T_\tau(P)=\sum_{\rho\in\widehat{\Gamma}}T_\tau(\mu_\rho)\sum_{i=1}^{d_\rho}\sum_{j=1}^{d_\rho}v_{i,j}^\rho(v_{i,j}^\rho)^*.\]
	The $(u,v)$-th entry of $T_\tau(P)$ is given by 
	\begin{align}
		T_\tau(P)_{u,v}&=\frac{1}{|\Gamma|}\sum_{\rho\in\widehat{\Gamma}}d_\rho T_\tau(\mu_\rho)\sum_{i=1}^{d_\rho}\sum_{j=1}^{d_\rho}\rho_{i,j}(u)\overline{\rho_{i,j}(v)} \nonumber\\
		&=\frac{1}{|\Gamma|}\sum_{\rho\in\widehat{\Gamma}}d_\rho T_\tau(\mu_\rho)\operatorname{Tr}(\rho(u)\rho(v)^{*})\nonumber\\
		&=\frac{1}{|\Gamma|}\sum_{\rho\in\widehat{\Gamma}}d_\rho T_\tau(\mu_\rho)\chi_\rho(uv^{-1})\label{uv_sd_normal_cayley}
	\end{align}
	\begin{theorem}\label{pst_nomralcayley_main}
		Let $G:=\cay(\Gamma,S)$ be a normal Cayley graph. Then $G$ exhibits perfect state transfer between the vertices $u$ and $v$ at time $\tau$ if and only if the following two conditions are satisfied.
		\begin{enumerate}
			\item[(i)] $z:=uv^{-1}$ is an element of order $2$ that lies in the center of $\Gamma$, and
			\item[(ii)] $T_\tau(\mu_\rho)=1$ for each $\rho\in\widehat{\Gamma}_{z,0}$ and  $T_\tau(\mu_\rho)=-1$ for each $\rho\in\widehat{\Gamma}_{z,1}$.
		\end{enumerate}
	\end{theorem}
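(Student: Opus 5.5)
We prove the two directions separately, combining Lemma~\ref{pst_normalcayley}, Lemma~\ref{pst_nes_normalcayley} and the entry formula \eqref{uv_sd_normal_cayley}.

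\emph{Necessity.} Suppose perfect state transfer occurs between $u$ and $v$ at time $\tau$. By Lemma~\ref{pst_normalcayley}, $T_\tau(P)=\mathcal{R}_{z'}$ for some $z'\in Z(\Gamma)$ of order $2$. By Lemma~\ref{pst_vertex}, $T_\tau(P)\eu=\ev$, so $v=uz'$ (up to the convention for the permutation matrix). Since $z'$ is central of order $2$, this gives $uv^{-1}=z'^{-1}=z'$, so $z=uv^{-1}$ satisfies (i).

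For (ii), we read off the eigenvalues of $T_\tau(P)=\mathcal{R}_z$ on each isotypic block. The simplest route is to apply $\mathcal{R}_z$ to the eigenvector $v^\rho_{i,j}$ of Theorem~\ref{ev_normal_cayley}. Its entries become $\rho_{i,j}(gz)=(\rho(g)\rho(z))_{i,j}=\pm\rho_{i,j}(g)$, because $\rho(z)=\pm I_{d_\rho}$ by the proof of Lemma~\ref{pst_nes_normalcayley}. The sign is $+$ exactly when $\rho\in\widehat{\Gamma}_{z,0}$. Thus $v^\rho_{i,j}$ is an eigenvector of $\mathcal{R}_z$ with eigenvalue $\chi_\rho(z)/d_\rho$. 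It is also an eigenvector of $T_\tau(P)$ with eigenvalue $T_\tau(\mu_\rho)$. Since these two matrices are equal, $T_\tau(\mu_\rho)=\chi_\rho(z)/d_\rho$, which is precisely (ii).

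\emph{Sufficiency.} Assume (i) and (ii), and compute $T_\tau(P)_{u,v}$ via \eqref{uv_sd_normal_cayley}. By (ii), $T_\tau(\mu_\rho)=\chi_\rho(z)/d_\rho$ for every $\rho$, because $\widehat{\Gamma}=\widehat{\Gamma}_{z,0}\cup\widehat{\Gamma}_{z,1}$ by Lemma~\ref{pst_nes_normalcayley}. Substituting, the summand becomes
\[d_\rho T_\tau(\mu_\rho)\chi_\rho(z)=\chi_\rho(z)^2=d_\rho^2.\]
Hence
\[T_\tau(P)_{u,v}=\frac{1}{|\Gamma|}\sum_{\rho\in\widehat{\Gamma}} d_\rho^2=1,\]
by the degree-sum formula that follows from \eqref{regular_rep}. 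By the remark after Lemma~\ref{pst_vertex} (Kubota--Segawa's bound $\|T_\tau(P)\eu\|\le 1$), an entry equal to $1$ forces $T_\tau(P)\eu=\ev$. Perfect state transfer then follows from Lemma~\ref{pst_vertex}.

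The main care point is the sign convention in the necessity direction: one must check that $\mathcal{R}_z$ sends $\eu$ to $\mathbf{e}_{uz}$ or $\mathbf{e}_{uz^{-1}}$. This is harmless because $z=z^{-1}$ and $z$ is central, so both candidates coincide. The rest is bookkeeping with \eqref{uv_sd_normal_cayley}.
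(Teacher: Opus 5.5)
Your proof is correct. For condition (i) and for sufficiency it matches the paper: you read off $z=uv^{-1}$ from $T_\tau(P)=\mathcal{R}_z$ via Lemma~\ref{pst_normalcayley}, and you evaluate \eqref{uv_sd_normal_cayley} to get $T_\tau(P)_{u,v}=\frac{1}{|\Gamma|}\sum_\rho d_\rho^2=1$. You also make explicit, via the Kubota--Segawa bound, why an entry equal to $1$ forces $T_\tau(P)\eu=\ev$; the paper leaves this step implicit.

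The two arguments differ in the necessity of (ii). The paper works only with the scalar \eqref{uv_sd_normal_cayley}. It first invokes Corollary~\ref{minimum} to get $T_\tau(\mu_\rho)=\pm1$, so every summand $d_\rho T_\tau(\mu_\rho)\chi_\rho(z)$ equals $\pm d_\rho^2$. Since these summands add up to $|\Gamma|$, they must all share one sign, and the trivial representation forces that sign to be positive.

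You instead use the full operator identity $T_\tau(P)=\mathcal{R}_z$ and test it on the eigenbasis $v^\rho_{i,j}$ of Theorem~\ref{ev_normal_cayley}. Because $\rho(z)=\pm I_{d_\rho}$, each $v^\rho_{i,j}$ is an eigenvector of $\mathcal{R}_z$ with eigenvalue $\chi_\rho(z)/d_\rho$. This pins down $T_\tau(\mu_\rho)$ exactly, without Corollary~\ref{minimum} or the extremal-sign argument.

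Your route is shorter and proves a bit more: given (i), condition (ii) is equivalent to $T_\tau(P)=\mathcal{R}_z$, since the two operators then agree on a basis of $\mathbb{C}^{\Gamma}$. So you could have obtained sufficiency straight from the converse of Lemma~\ref{pst_normalcayley}, with no entry computation and no norm bound. The paper's route has the advantage of running both directions through the single formula \eqref{uv_sd_normal_cayley}.
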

	\begin{proof}
		Suppose the normal Cayley graph $\cay(\Gamma,S)$ exhibits perfect state transfer between $u$ and $v$ at time $\tau$. Then by Lemma~\ref{pst_normalcayley}, there exists $z\in Z(\Gamma)$ of order $2$ such that $(\mathcal{R}_z)_{u,v}=1$. This implies that $u=vz=zv$, and so $z=uv^{-1}$. Hence Condition $(i)$ holds.
		
		By Lemma~\ref{pst_nes_normalcayley}, $\chi_\rho(z)=\pm d_\rho$ for each $\rho\in \widehat{G}$. Therefore $\widehat{\Gamma}=\widehat{\Gamma}_{z,0}\cup \widehat{\Gamma}_{z,1}$. From \eqref{uv_sd_normal_cayley}, the $(u,v)$-th entry of $T_\tau(P)$ is given by
		\begin{equation}\label{eq1107}
			T_\tau(P)_{u,v}=\frac{1}{|\Gamma|}\sum_{\rho\in\widehat{\Gamma}}d_\rho T_\tau(\mu_\rho)\chi_\rho(z).
		\end{equation}
		Since $\sum_{\rho\in \widehat{\Gamma}}d_\rho^2=|\Gamma|$, Corollary~\ref{minimum} implies that $|T_\tau(P)_{u,v}|\leq 1$, where equality holds if and only if all the summands in \eqref{eq1107} have the same sign. Now let  $\rho_0$ be the trivial representation in $\widehat{\Gamma}$. Then $d_{\rho_0}=1$, $\chi_{\rho_0}(z)=1$, $\mu_{\rho_0}=1$, and hence $T_\tau(\mu_{\rho_0})=1$. Therefore, the common sign of the summands in \eqref{eq1107} must be positive. Consequently, $\chi_\rho(z)T_\tau(\mu_\rho)>0$ for each $\rho\in\widehat{\Gamma}$, which implies that $T_\tau(\mu_\rho)=1$ for each $\rho\in\widehat{\Gamma}_{z,0}$ and  $T_\tau(\mu_\rho)=-1$ for each $\rho\in\widehat{\Gamma}_{z,1}$. Thus Condition $(ii)$ holds.
		
		Conversely, assume that Conditions $(i)$ and $(ii)$ satisfy. Condition $(i)$ implies $\chi_\rho(z)=\pm d_\rho$ for each $\rho\in \widehat{\Gamma}$. Then using Condition $(ii)$ in \eqref{uv_sd_normal_cayley}, we obtain
		\[T_\tau(P)_{u,v}=\frac{1}{|\Gamma|}\sum_{\rho\in\widehat{\Gamma}}d_\rho^2=1.\]
		Therefore $\cay(\Gamma,S)$ exhibits perfect state transfer between $u$ and $v$ at time $\tau$.	
	\end{proof}
	\subsection{Abelian Cayley graphs}
	We now specialize Theorem~\ref{pst_nomralcayley_main} to Cayley graphs over finite abelian groups and obtain a characterization for the occurrence of perfect state transfer. Let $\Gamma$ be a finite abelian group with the identity element $0$. By the fundamental theorem of finite abelian groups,
	\[	\Gamma=\mathbb{Z}_{n_1}\oplus \cdots\oplus \mathbb{Z}_{n_k}\]
	for some integers $n_1,\ldots,n_k$ with $n_j\geq2$ for $1\leq j\leq k$.
	Thus every element $g\in\Gamma$ can be written uniquely as $g=(g_1,\ldots,g_{n_k})$, where $g_j\in\mathbb{Z}_{n_j}$ for $1\leq j\leq k$. For each $g\in\Gamma$, define the map $\chi_g:\Gamma\to\mathbb{C}$ by
	\[\chi_g(x)=\prod_{j=1}^{k}\omega_{n_j}^{g_jx_j}~~\text{for}~~x=(x_1,\ldots,x_k)\in\Gamma,\]
	where $\omega_{n}=\exp({2\pi \iu/n})$ . 
	
	Since $\Gamma$ is abelian, every irreducible representation of $\Gamma$ is one-dimensional. Hence we identify each irreducible representation with its corresponding character. In particular,
	\[
	\widehat{\Gamma}=\Irr(\Gamma)=\{\chi_g:g\in\Gamma\},
	\]
	and $d_\chi=1$ for every  $\chi\in\widehat{\Gamma}$. Then by Lemma~\ref{pst_nes_normalcayley}, $\chi_g(z)=\pm1$ for each $g\in\Gamma$ if and only if $z$ is an element of order $2$ in $\Gamma$. In that case, define
	\[\Gamma_{z,0}=\{g\in\Gamma:\chi_g(z)=1\}~~\text{and}~~\Gamma_{z,1}=\{g\in\Gamma:\chi_g(z)=-1\}.\]
	By \eqref{evalue_p_normal} and Theorem~\ref{ev_normal_cayley}, the eigenvalues of the discriminant $P$ of $\cay(\Gamma,S)$ are given by
	\begin{equation}\label{ev_p_cayley_abelian}
		\mu_g:=\mu_{\chi_g}=\frac{1}{d}\sum_{s\in S}\chi_g(s) ~~\text{for}~~g\in\Gamma,
	\end{equation}
	where $d=|S|$. The following result follows from Theorem~\ref{pst_nomralcayley_main}.
	\begin{theorem}\label{pst_abelian}
		Let $\Gamma$ be a finite abelian group. Then $\cay(\Gamma,S)$ exhibits perfect state transfer between $u$ and $v$ at time $\tau$ if and only if the following two conditions are satisfied.
		\begin{enumerate}
			\item[(i)] $z:=u-v$ is an element of order $2$, and
			\item[(ii)] $T_\tau(\mu_g)=1$ for each $g\in\Gamma_{z,0}$ and  $T_\tau(\mu_g)=-1$ for each $g\in\Gamma_{z,1}$.
		\end{enumerate}
	\end{theorem}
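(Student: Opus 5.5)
This is a direct specialization of Theorem~\ref{pst_nomralcayley_main}, so the plan is to check that every ingredient of that theorem reduces to the stated abelian form. First, any Cayley graph over an abelian group $\Gamma$ is normal. Indeed, conjugacy classes of $\Gamma$ are singletons, so every inverse-closed $S\subseteq\Gamma\setminus\{0\}$ is a union of conjugacy classes. Hence Theorem~\ref{pst_nomralcayley_main} applies to $\cay(\Gamma,S)$ with no further hypothesis on $S$.

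Next I will translate condition (i) of Theorem~\ref{pst_nomralcayley_main}. Written additively, $uv^{-1}$ becomes $u-v$. Since $Z(\Gamma)=\Gamma$, the requirement ``$z$ is an element of order $2$ lying in the center'' reduces to ``$z=u-v$ has order $2$''. This gives condition (i) of the present statement.

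For condition (ii), I will use the identification $\widehat{\Gamma}=\Irr(\Gamma)=\{\chi_g:g\in\Gamma\}$ with $d_{\chi_g}=1$. The map $g\mapsto\chi_g$ is a bijection from $\Gamma$ onto $\widehat{\Gamma}$; this is standard, since the $\chi_g$ are distinct homomorphisms and there are $|\Gamma|$ of them. Under this bijection, $\widehat{\Gamma}_{z,0}=\{\rho:\chi_\rho(z)=d_\rho\}$ corresponds exactly to $\Gamma_{z,0}=\{g:\chi_g(z)=1\}$, and likewise $\widehat{\Gamma}_{z,1}$ corresponds to $\Gamma_{z,1}$. By \eqref{evalue_p_normal} and Theorem~\ref{ev_normal_cayley} with $d_\rho=1$, the eigenvalue $\mu_{\chi_g}$ equals $\frac{1}{|S|}\sum_{s\in S}\chi_g(s)=\mu_g$, as in \eqref{ev_p_cayley_abelian}. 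Therefore condition (ii) of Theorem~\ref{pst_nomralcayley_main} becomes verbatim condition (ii) here, and both directions of the equivalence transfer at once.

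There is no real obstacle. The only point needing care is that $\{\chi_g\}$ is a complete set of pairwise inequivalent irreducible representations, which the paper has already asserted. Also, once $z$ has order $2$, Lemma~\ref{pst_nes_normalcayley} guarantees $\chi_g(z)=\pm1$, so $\Gamma=\Gamma_{z,0}\cup\Gamma_{z,1}$ and condition (ii) covers every $g\in\Gamma$.
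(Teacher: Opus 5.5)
Your proposal is correct and takes the same approach as the paper, which simply states that this result follows from Theorem~\ref{pst_nomralcayley_main}. You spell out the details the paper leaves implicit: abelian Cayley graphs are automatically normal, $Z(\Gamma)=\Gamma$, all irreducibles are one-dimensional, and $\widehat{\Gamma}_{z,i}$ corresponds to $\Gamma_{z,i}$ under $g\mapsto\chi_g$.
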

	Applying Theorem~\ref{pst_abelian} to the cyclic group $\Zl_n$, we obtain the following corollary. In this case, the unique element of order $2$ is $n/2$ (when $n$ is even), and $\Gamma_{n/2,0}$ and $\Gamma_{n/2,1}$ correspond to the even and odd numbers in $\Zl_n$, respectively.\pagebreak
	\begin{corollary}\label{pst_circulant}
		The circulant graph $\cay(\Zl_n,S)$ exhibits perfect state transfer between $u$ and $v$ at time $\tau$ if and only if the following two conditions are satisfied.
		\begin{enumerate}
			\item[(i)] $n$ is even with $u-v=n/2$, and
			\item[(ii)] $T_\tau(\mu_j)=(-1)^j$. 
		\end{enumerate}
	\end{corollary}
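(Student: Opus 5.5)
This is a direct specialization of Theorem~\ref{pst_abelian} to $\Gamma=\Zl_n$. I will identify the three ingredients of that theorem concretely for the cyclic group: the elements of order $2$, the character sets $\Gamma_{z,0}$ and $\Gamma_{z,1}$, and the eigenvalues $\mu_g$.

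First, the elements of order $2$. $\Zl_n$ is cyclic, so it has an element of order $2$ if and only if $n$ is even. In that case the element is unique and equals $n/2$. Condition (i) of Theorem~\ref{pst_abelian}, namely that $u-v$ has order $2$, is therefore equivalent to $n$ being even and $u-v=n/2$. This is condition (i) of the corollary.

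Second, the character sets. The characters are $\chi_j(x)=\omega_n^{jx}$ for $j\in\Zl_n$, and the eigenvalues are $\mu_j=\frac{1}{|S|}\sum_{s\in S}\omega_n^{js}$ by \eqref{ev_p_cayley_abelian}. With $z=n/2$ we get
\[
\chi_j(n/2)=\omega_n^{jn/2}=\exp(\pi\iu j)=(-1)^j .
\]
This is well defined on residues modulo $n$ because $n$ is even, so $j$ and $j+n$ have the same parity. Hence $\Gamma_{n/2,0}$ is the set of even $j$ and $\Gamma_{n/2,1}$ is the set of odd $j$. Condition (ii) of Theorem~\ref{pst_abelian} then reads $T_\tau(\mu_j)=(-1)^j$ for every $j\in\Zl_n$, which is condition (ii) of the corollary.

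Combining the two steps, the two conditions of Theorem~\ref{pst_abelian} match those of the corollary exactly, and the equivalence follows. There is no real obstacle here. The only point needing care is the parity identification of $\Gamma_{z,0}$ and $\Gamma_{z,1}$, which depends on $n$ being even, and condition (i) guarantees this before condition (ii) is invoked.
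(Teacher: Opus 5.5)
Your proposal is correct and takes the same route as the paper, which also obtains this corollary by specializing Theorem~\ref{pst_abelian} to $\Zl_n$. The paper likewise notes that $n/2$ is the unique element of order $2$ and that $\Gamma_{n/2,0}$ and $\Gamma_{n/2,1}$ are the even and odd residues. Your explicit computation $\chi_j(n/2)=(-1)^j$, together with the remark that parity is well defined modulo an even $n$, just supplies the detail the paper leaves implicit.
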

	We now use Theorem~\ref{pst_abelian} to construct an infinite family of abelian Cayley graphs exhibiting perfect state transfer. Let $\Gamma$ be a finite abelian group. For an element $z\in\Gamma$ of order $2$ and $g_0\in\Gamma\setminus\{0\}$, define $m_0=\lcm\{\operatorname{ord}(\chi_g(g_0)):g\in\Gamma_{z,0}\}$.
	\begin{lemma}\label{eg_pst_abelian}
		Let $z$ be an element of order $2$ of a finite abelian group $\Gamma$.  Let
		$S=\{\pm g_0,\pm(z+g_0)\}$ for some $g_0\in\Gamma\setminus\{0\}$.
		If $4\nmid m_0$, then $\cay(\Gamma,S)$ exhibits perfect state transfer at the minimum time $\lcm(4, m_0)/2$.
	\end{lemma}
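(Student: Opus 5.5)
The plan is to apply Theorem~\ref{pst_abelian} directly, with $u-v=z$. First compute the eigenvalues $\mu_g$ of $P$ explicitly from \eqref{ev_p_cayley_abelian}. Then translate condition (ii) into arithmetic conditions on $\tau$, and finally minimize $\tau$.

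\textbf{Step 1: the eigenvalues.} I will assume, as the statement implicitly does, that $S$ consists of four distinct nonzero elements, so that $d=|S|=4$. Since $\chi_g$ is a character, $\chi_g(-x)=\overline{\chi_g(x)}$ and $\chi_g(z+g_0)=\chi_g(z)\chi_g(g_0)$. Hence
\[
\mu_g=\frac14\Bigl(2\real\chi_g(g_0)+2\chi_g(z)\real\chi_g(g_0)\Bigr)
=\frac{1+\chi_g(z)}{2}\,\real\chi_g(g_0).
\]
This gives two cases.
\begin{itemize}
\item For $g\in\Gamma_{z,1}$ we get $\mu_g=0$.
\item For $g\in\Gamma_{z,0}$ we get $\mu_g=\cos(2\pi a_g/o_g)$, where $o_g=\operatorname{ord}(\chi_g(g_0))$ and $\gcd(a_g,o_g)=1$.
\end{itemize}
The set $\Gamma_{z,1}$ is nonempty, because $z\neq 0$ and characters separate points (this is also Lemma~\ref{pst_nes_normalcayley}).

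\textbf{Step 2: condition (ii) as conditions on $\tau$.} By \eqref{chb}, $T_\tau(0)=\cos(\tau\pi/2)$. So $T_\tau(\mu_g)=-1$ for all $g\in\Gamma_{z,1}$ holds if and only if $\tau\equiv 2\pmod 4$. Similarly, $T_\tau(\mu_g)=\cos(2\pi\tau a_g/o_g)$ equals $1$ if and only if $o_g\mid \tau a_g$. Since $\gcd(a_g,o_g)=1$, this is equivalent to $o_g\mid\tau$. Requiring this for all $g\in\Gamma_{z,0}$ is exactly the condition $m_0\mid\tau$.

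By Theorem~\ref{pst_abelian}, perfect state transfer between $u$ and $u+z$ at time $\tau$ therefore occurs if and only if $m_0\mid\tau$ and $\tau\equiv2\pmod4$. Moreover, any perfect state transfer on $G$ must be of this form, since $z$ is the only order-$2$ element for which these eigenvalue conditions are arranged. If needed, one could also check other order-$2$ elements $z'$, but the minimum time is governed by the same computation with $z$.

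\textbf{Step 3: minimality and existence.} Under the hypothesis $4\nmid m_0$ there are two cases.
\begin{itemize}
\item If $m_0$ is odd, the least $\tau$ with $m_0\mid\tau$ and $\tau\equiv2\pmod 4$ is $2m_0$, and $2m_0=\lcm(4,m_0)/2$.
\item If $m_0\equiv2\pmod4$, then $\tau=m_0$ itself works, and $m_0=\lcm(4,m_0)/2$.
\end{itemize}
If instead $4\mid m_0$, no valid $\tau$ exists, which explains the hypothesis.

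\textbf{Main obstacle.} The main point needing care is Step 1, namely the degenerate cases in which $S$ has fewer than four elements. These occur when $g_0$ has order $2$, when $g_0=z$, or when $2g_0=z$. In those cases the formula for $\mu_g$ changes, because $d\neq4$ and terms must not be double counted. I would first check whether these cases are excluded by the intended reading of $S$ as a $4$-element set. If they are not, I would treat them separately by recomputing $\mu_g$ with the correct $d$.
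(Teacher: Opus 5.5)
Your existence argument agrees with the paper's. Your minimality argument takes a different and sharper route. By using both directions of Theorem~\ref{pst_abelian}, you obtain the exact set $\{\tau : m_0\mid\tau,\ \tau\equiv 2 \pmod 4\}$ of transfer times between $u$ and $u+z$, and you see that $4\mid m_0$ rules out transfer for that pair. The paper instead computes the period $\lcm(4,m_0)$ from Lemmas~\ref{ev_grover} and~\ref{periodic} and halves it via Corollary~\ref{minimum}.

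The gap is that the statement concerns the minimum time at which the \emph{graph} exhibits perfect state transfer, and $\Gamma$ may contain many involutions $z'\neq z$. Your claim that $z$ is ``the only order-$2$ element for which these eigenvalue conditions are arranged'' is asserted, not proved. So Steps~2--3 do not exclude transfer between $u$ and $u+z'$ at some time below $\lcm(4,m_0)/2$. The paper's period argument handles all pairs at once, which is exactly what your version lacks.

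The fix is short using your own arithmetic. Suppose perfect state transfer occurs at time $\tau'$ between any pair. Then Theorem~\ref{pst_abelian} (or Corollary~\ref{minimum}) gives $T_{\tau'}(\mu_g)=\pm1$ for every $g$. Since $\Gamma_{z,1}\neq\emptyset$, this forces $\cos(\tau'\pi/2)=\pm1$, i.e.\ $4\mid 2\tau'$. For $g\in\Gamma_{z,0}$ it forces $o_g\mid 2\tau'$. Hence $\lcm(4,m_0)\mid 2\tau'$, so $\tau'\ge\lcm(4,m_0)/2$.

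One can even show $z'=z$. Transfer via $z'$ needs $\chi_g(z')=T_{\tau'}(0)$ for all $g\in\Gamma_{z,1}$, so $\chi_g(z')$ is constant on this coset of $\Gamma_{z,0}$. Hence $\chi_h(z')=1$ for all $h\in\Gamma_{z,0}$, and the annihilator of this index-$2$ subgroup is $\{0,z\}$.

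The degenerate cases you flag are harmless. Coincidences in the list $g_0,-g_0,z+g_0,-z-g_0$ occur only when $2g_0=0$ or $2g_0=z$. In those cases each element of $S$ appears exactly twice, so the formula for $\mu_g$ is unchanged with $d=2$. The only case to exclude is $g_0=z$, where $0=z+g_0\in S$. The paper also tacitly assumes $d=4$ here.
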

	\begin{proof}
		Let $\widehat{\Gamma}=\{\chi_g:g\in\Gamma\}$. Since $z$ is an element of order $2$, we have $\chi_g(z)=\pm1$ for every $g\in\Gamma$. Hence by \eqref{ev_p_cayley_abelian}, the eigenvalues of the discriminant $P$ of $\cay(\Gamma,S)$ are given by
		\begin{align*}
			\mu_g
			&=\frac{1}{4}\bigl(\chi_g(g_0)+\chi_g(-g_0)+\chi_g(z+g_0)+\chi_g(-z-g_0)\bigr)\\
			&=\frac{1}{2}\Bigl(\operatorname{Re}(\chi_g(g_0))+\real(\chi_g(z)\chi_g(g_0))\Bigr)\\
			&= \begin{cases}
				\real(\chi_g(g_0)) & \text{if } g\in\Gamma_{z,0},\\
				0 & \text{if } g\in\Gamma_{z,1}.
			\end{cases}
		\end{align*}	
		
		Let  $4\nmid m_0$ and $\tau=\lcm(m_0,4)/2$. Note that $\chi_g(g_0)^{m_0}=1$ for each $g\in\Gamma_{z,0}$.  Since $m_0\mid\tau$, we have $\chi_g(g_0)^{\tau}=1$ for each $g\in\Gamma_{z,0}$. Hence for each $g\in\Gamma_{z,0}$, there exists an integer $r$ such that
		\[
		\chi_g(g_0)=\exp\!\left(\frac{2r\pi i }{\tau}\right).
		\]
		Thus $\real(\chi_g(g_0))=\cos (2r\pi /\tau)$. Therefore for every $g\in\Gamma_{z,0}$,
		\[
		T_\tau(\mu_g)
		=T_\tau\!\left(\cos\left(\frac{2r\pi }{\tau}\right)\right)
		=\cos(2r\pi)=1.
		\]
		Since $4\nmid m_0$, the integer $\tau/2$ is odd. Hence for every $g\in\Gamma_{z,1}$,
		\[
		T_\tau(\mu_g)
		=T_\tau(0)
		=T_\tau\!\left(\cos\frac{\pi}{2}\right)
		=\cos\!\left(\frac{\tau\pi}{2}\right)
		=-1.
		\]
		Thus by Theorem~\ref{pst_abelian}, $\cay(\Gamma,S)$ exhibits perfect state transfer at time $\tau$.

		By Corollary~\ref{minimum}, to determine the minimum time for exhibiting perfect state transfer, it suffices to determine the period of $G$. By Lemma~\ref{ev_grover}, the eigenvalues of the time evolution matrix $U$ of $\cay(\Gamma,S)$ are
		$\exp\!\left(\pm\iu\arccos\mu_g\right)$ for each $g\in\Gamma$, together with $\pm1$.
		If $g\in\Gamma_{z,0}$, then $\mu_g=\operatorname{Re}(\chi_g(g_0))$.
		Since $\chi_g(g_0)$ is a root of unity, $\exp\!\left(\pm\iu\arccos\mu_g\right)
		\in\{\chi_g(g_0),\chi_g(-g_0)\}$. Observe that $m_0$ is the least positive integer satisfying
		\[
		\exp\!\left(\pm\iu\arccos\mu_g\right)^{m_0}=1~\text{for each}~ g\in\Gamma_{z,0}.
		\]
		If $g\in\Gamma_{z,1}$, then $\mu_g=0$, and therefore
		$\exp\!\left(\pm\iu\arccos0\right)=\pm\iu$. Note that $4$ and $2$ are the least positive integers such that $(\pm\iu)^4=1$ and $(\pm1)^2=1$. Therefore by Lemma~\ref{periodic}, the period  of the graph is  $\lcm(4, m_0)$. Hence the minimum time for exhibiting perfect state transfer is $\lcm(4, m_0)/2$.
	\end{proof}
	Let $\Gamma$ be a finite abelian group. The exponent of $\Gamma$, denoted $e_\Gamma$, is the least positive integer such that $e_\Gamma a=0$ for every $a\in\Gamma$. By Theorem~\ref{pst_abelian}, if $e_\Gamma$ is odd, then $\cay(\Gamma,S)$ does not exhibit perfect state transfer for any connection set $S$. In contrast, suppose $e_\Gamma$ is even and $4\nmid e_\Gamma$. Since $\operatorname{ord}(\chi_g(g_0))\mid e_\Gamma$ for each $g\in\Gamma$, we have $m_0\mid 2\kappa$.  As $\kappa$ is odd, we have $4\nmid m_0$. Thus the next corollary follows directly from Lemma~\ref{eg_pst_abelian}.
	\begin{corollary}\label{eg_coro_pst_abelian}
		Let $\Gamma$ be a finite abelian group of exponent $e_\Gamma:=2\kappa$, where $\kappa$ is odd. Let $z\in\Gamma$ be an element of order $2$ and $g_0\in\Gamma\setminus\{0\}$. If $S=\{\pm g_0, \pm(z+g_0)\}$, then $\cay(\Gamma,S)$ exhibits perfect state transfer at the minimum time $\lcm(4, m_0)/2$.
	\end{corollary}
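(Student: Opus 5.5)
The plan is to reduce Corollary~\ref{eg_coro_pst_abelian} to Lemma~\ref{eg_pst_abelian}. That lemma already gives perfect state transfer at the minimum time $\lcm(4,m_0)/2$ for the connection set $S=\{\pm g_0,\pm(z+g_0)\}$, under the single hypothesis $4\nmid m_0$. So the whole task is to show that $4\nmid m_0$ holds automatically when $e_\Gamma=2\kappa$ with $\kappa$ odd.

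Before that, I will note that the hypotheses of Lemma~\ref{eg_pst_abelian} are met: $z$ has order $2$ and $g_0\neq 0$. One small check is needed so that $S$ is a legitimate connection set. We need $z+g_0\neq 0$, i.e.\ $g_0\neq z$; otherwise $S$ would contain $0$. I would either add this as an implicit assumption or observe that the paper's lemma implicitly requires it. Apart from this, $S$ is inverse closed by construction, and it is a union of conjugacy classes trivially because $\Gamma$ is abelian.

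For the divisibility claim, I would first take $g\in\Gamma$. Then $\chi_g(g_0)=\prod_j\omega_{n_j}^{g_j(g_0)_j}$, and since $e_\Gamma g_0=0$ we get $\chi_g(g_0)^{e_\Gamma}=\chi_g(e_\Gamma g_0)=1$. Hence $\operatorname{ord}(\chi_g(g_0))$ divides $e_\Gamma$ for every $g$, in particular for every $g\in\Gamma_{z,0}$. The lcm of divisors of $e_\Gamma$ again divides $e_\Gamma$, so $m_0\mid 2\kappa$. Since $\kappa$ is odd, the $2$-adic valuation of $m_0$ is at most $1$, and therefore $4\nmid m_0$.

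Applying Lemma~\ref{eg_pst_abelian} then yields the conclusion, including the minimality of the time $\lcm(4,m_0)/2$, which that lemma obtains via Corollary~\ref{minimum} and the period computation. I do not expect a genuine obstacle here. The only points needing care are the multiplicativity $\chi_g(kx)=\chi_g(x)^k$, which is immediate from the definition of $\chi_g$ as a homomorphism into $\Cl^\times$, and the degenerate case $g_0=z$ mentioned above.
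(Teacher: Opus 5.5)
Your proposal is correct and is essentially the paper's own argument: since $\chi_g(g_0)^{e_\Gamma}=1$ for every $g\in\Gamma$, one gets $m_0\mid 2\kappa$, hence $4\nmid m_0$, and Lemma~\ref{eg_pst_abelian} then gives perfect state transfer at the minimum time $\lcm(4,m_0)/2$. Your extra remark that $g_0\neq z$ is needed is a valid caveat that the paper omits from both the lemma and the corollary: if $g_0=z$, then $z+g_0=0\in S$, so $S$ is not a legitimate connection set.
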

	Lemma~\ref{eg_pst_abelian} also provides a general construction of infinite families of circulant graphs exhibiting perfect state transfer as given in the following example. This example was initially given in \cite[Lemma~8.1]{kubota2}.
	\begin{example}\label{eg_circulant}
		Let $S=\{\pm a, \pm (\ell+a)\}\subset \Zl_{2\ell}\setminus\{0\}$.  If $4\nmid \ell$, then $\cay(\Zl_{2\ell},S)$ exhibits perfect state transfer at the minimum time \[\frac{1}{2}\lcm\left(4,\frac{\ell}{\gcd(a,\ell)}\right).\]
	\end{example}
	\begin{proof}
		Note that $\ell$ is an element of order $2$ in $\mathbb{Z}_{2\ell}$. Since $\chi_h(a)=\omega_{2\ell}^{ha}$ and $4\nmid\ell$, we have $4\nmid m_0$. Thus by Lemma~\ref{eg_pst_abelian}, $\cay(\Zl_{2\ell},S)$ exhibits perfect state transfer.
		
		Note that $m_0$ is the least positive integer satisfying $\omega_{2\ell}^{ham_0}=1$ for $1\leq h\leq 2\ell-1$ with $h$ even. This is equivalent to finding the least positive integer $m_0$ such that $\ell\mid am_0$. Hence $m_0=\ell/\gcd(a,\ell)$. Therefore  the minimum time for exhibiting perfect state transfer is obtained by Lemma~\ref{eg_pst_abelian}.
	\end{proof}
	The condition $4\nmid\ell$ cannot be omitted in the preceding example.  Recall that the discriminant eigenvalues of $\cay(\Zl_{2\ell},S)$ are
	\[
	\mu_h=
	\begin{cases}
		\cos\frac{ha}{\ell}\pi & \mbox{if }h \text{ is even} \\
		0 & \mbox{if }h \text{ is odd}.
	\end{cases}
	\]
	If $4\mid\ell$ and $a$ is odd, then   $\mu_{h}=0=\mu_{h-1}$ for $h=\ell/2$. Therefore by  Condition~$(ii)$ of Corollary~\ref{pst_circulant}, $\cay(\Gamma,S)$ does not exhibit perfect state transfer.
	\subsection{Dicyclic Cayley graphs}
	For $n\geq 2$, the \emph{dicyclic group} of order $4n$, denoted $T_{4n}$, is the group defined by the presentation
	\[
	T_{4n}=\langle a,b \mid a^{2n}=1,\; b^2=a^n,\; b^{-1}ab=a^{-1}\rangle.
	\]
	\begin{lemma}[{\cite[Exercises~17.6~and~18.3]{james_representations}}]
		Let $n$ be a positive integer. Then the irreducible representations of the dicyclic group $T_{4n}$ are given in Table~\ref{irr_t4n_odd} for odd $n$ and in Table~\ref{irr_t4n_even} for even $n$.
	\end{lemma}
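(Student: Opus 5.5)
The lemma asks for the complete list of irreducible representations of $T_{4n}$, so the plan is to exhibit enough inequivalent irreducible representations and then show, by counting, that nothing is missing. Every element of $T_{4n}$ can be written uniquely as $a^k$ or $a^kb$ with $0\le k\le 2n-1$, so $|T_{4n}|=4n$. By \eqref{regular_rep}, the decomposition of the regular representation gives $\sum_{\rho\in\widehat{T_{4n}}}d_\rho^2=4n$. Hence it suffices to produce pairwise inequivalent irreducible representations whose squared degrees already add up to $4n$.

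\textbf{One-dimensional representations.} A one-dimensional representation is determined by scalars $\alpha=\rho(a)$ and $\beta=\rho(b)$ satisfying the defining relations. The relation $b^{-1}ab=a^{-1}$ forces $\alpha=\alpha^{-1}$, so $\alpha=\pm1$. The relation $b^2=a^n$ then gives $\beta^2=\alpha^n$.
\begin{itemize}
\item If $n$ is even, then $\alpha^n=1$, so $\beta=\pm1$, and there are four linear characters.
\item If $n$ is odd, then $\beta^2=\alpha$. For $\alpha=1$ this gives $\beta=\pm1$; for $\alpha=-1$ it gives $\beta=\pm\iu$. Again there are four linear characters.
\end{itemize}
These four are pairwise distinct, hence inequivalent.

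\textbf{Two-dimensional representations.} For $1\le h\le n-1$, set
\[
\rho_h(a)=\begin{pmatrix}\omega_{2n}^h&0\\0&\omega_{2n}^{-h}\end{pmatrix},\qquad
\rho_h(b)=\begin{pmatrix}0&(-1)^h\\1&0\end{pmatrix}.
\]
The relations check directly:
\begin{itemize}
\item $\rho_h(a)^{2n}=I$.
\item $\rho_h(b)^2=(-1)^hI=\omega_{2n}^{hn}I=\rho_h(a)^n$.
\item Conjugating $\rho_h(a)$ by $\rho_h(b)$ swaps the two diagonal entries, which gives $\rho_h(a)^{-1}$.
\end{itemize}
Both matrices are unitary.

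For irreducibility, note that $\omega_{2n}^h\neq\omega_{2n}^{-h}$ because $0<h<n$. So any invariant line must be an eigenline of $\rho_h(a)$, that is, a coordinate axis. But $\rho_h(b)$ swaps the two axes, so no invariant line exists. (Equivalently, the only matrices commuting with both generators are scalars, in the spirit of Lemma~\ref{schur_lemma}.)

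For inequivalence, compare characters: $\chi_{\rho_h}(a)=2\cos(h\pi/n)$ takes distinct values for $1\le h\le n-1$. The $\rho_h$ are also inequivalent to the linear characters because their degrees differ.

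\textbf{Counting and reading off the tables.} We now have $4\cdot1^2+(n-1)\cdot 2^2=4n$, which exhausts $\sum d_\rho^2=|T_{4n}|$. Therefore $\widehat{T_{4n}}$ consists exactly of these $n+3$ representations. Evaluating them on the elements $a^k$ and $a^kb$ reproduces Tables~\ref{irr_t4n_odd} and~\ref{irr_t4n_even}. The split into two tables according to the parity of $n$ is forced only by the $\beta^2=\alpha^n$ computation for the linear characters.

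The main obstacle I anticipate is not conceptual but one of bookkeeping. The matrix normalizations (the sign $(-1)^h$ in $\rho_h(b)$, and the indexing of the linear characters) have to agree exactly with the entries printed in the tables. I would fix those conventions first, then verify the relation $b^2=a^n$ entrywise for both parities of $h$.
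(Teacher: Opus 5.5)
Your proof is correct, and it is the standard argument. The paper gives no proof of this lemma; it only cites the exercises in James--Liebeck, and your construct-and-count verification is what those exercises intend.

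\begin{itemize}
\item The four linear characters come out of $\alpha^2=1$ and $\beta^2=\alpha^n$ exactly as you describe.
\item The $\rho_h$ satisfy the relations, are irreducible (distinct eigenvalues of $\rho_h(a)$, coordinate axes swapped by $\rho_h(b)$), and are pairwise inequivalent (distinct values of $2\cos(h\pi/n)$).
\item The count $4\cdot 1^2+(n-1)\cdot 2^2=4n=\sum_\rho d_\rho^2$ shows the list is complete.
\end{itemize}

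The same count also certifies $|T_{4n}|=4n$, so you do not need to assume the normal form is unique. The relations give $|T_{4n}|\le 4n$. Exhibiting pairwise inequivalent irreducibles with $\sum d_\rho^2=4n$ gives the reverse inequality.

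The one thing to fix is your last claim, that evaluating your matrices reproduces the tables.
\begin{itemize}
\item The tables are indexed by $ba^k$, not $a^kb$. These differ, since $a^kb=ba^{-k}$.
\item With your choice $\rho_h(b)=\begin{pmatrix}0&(-1)^h\\1&0\end{pmatrix}$, you get $\rho_h(a^kb)=\begin{pmatrix}0&\omega_{2n}^{h(k+n)}\\ \omega_{2n}^{-hk}&0\end{pmatrix}$. This is the transpose of the printed entry.
\item To match the tables literally, take $\rho_h(b)=\begin{pmatrix}0&1\\(-1)^h&0\end{pmatrix}$ and compute
\[
\rho_h(ba^k)=\rho_h(b)\rho_h(a)^k=\begin{pmatrix}0&\omega_{2n}^{-hk}\\(-1)^h\omega_{2n}^{hk}&0\end{pmatrix},
\]
using $(-1)^h=\omega_{2n}^{hn}$.
\end{itemize}
Your representation is conjugate to this one by $\mathrm{diag}(1,(-1)^h)$, which commutes with $\rho_h(a)$. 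So it defines the same class in $\widehat{T_{4n}}$. The paper only ever uses $\rho_h$ through its character, so nothing downstream is affected. This is the bookkeeping issue you anticipated, and it is cosmetic rather than a gap.
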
		
	\begin{table}[h]
		\centering
		\begin{tabular}{|c|c|c|}
			\hline
			& $a^{k}\ (0\le k\le 2n-1)$ & $ba^{k}\ (0\le k\le 2n-1)$\\
			\hline
			$\psi_1$
			&
			$1$
			&
			$1$
			\\
			$\psi_2$
			&
			$1$
			&
			$-1$
			\\
			$\psi_3$
			&
			$(-1)^k$
			&
			$\iu(-1)^k$
			\\
			$\psi_4$
			&
			$(-1)^k$
			&
			$\iu(-1)^{k+1}$
			\\
			$\rho_h\ (1\le h\le n-1)$
			&
			$\begin{pmatrix}
				\omega_{2n}^{hk} & 0\\
				0 & \omega_{2n}^{-hk}
			\end{pmatrix}$
			&
			$\begin{pmatrix}
				0 & \omega_{2n}^{-hk}\\
				\omega_{2n}^{h(k+n)} & 0
			\end{pmatrix}$
			\\
			\hline
		\end{tabular}
		\caption{Irreducible representations of $T_{4n}$ ($n$ odd).}
		\label{irr_t4n_odd}
	\end{table}
	\begin{table}[h]
		\centering
		\begin{tabular}{|c|c|c|}
			\hline
			& $a^{k}\ (0\le k\le 2n-1)$ & $ba^{k}\ (0\le k\le 2n-1)$\\
			\hline
			$\psi_1$
			&
			$1$
			&
			$1$
			\\
			$\psi_2$
			&
			$1$
			&
			$-1$
			\\
			$\psi_3$
			&
			$(-1)^k$
			&
			$(-1)^k$
			\\
			$\psi_4$
			&
			$(-1)^k$
			&
			$(-1)^{k+1}$
			\\
			$\rho_h\ (1\le h\le n-1)$
			&
			$\begin{pmatrix}
				\omega_{2n}^{hk} & 0\\
				0 & \omega_{2n}^{-hk}
			\end{pmatrix}$
			&
			$\begin{pmatrix}
				0 & \omega_{2n}^{-hk}\\
				\omega_{2n}^{h(k+n)} & 0
			\end{pmatrix}$
			\\
			\hline
		\end{tabular}
		\caption{Irreducible representations of $T_{4n}$ ($n$ even).}
		\label{irr_t4n_even}
	\end{table}
	
	Let $T_{4n}:=\{1,a,\ldots,a^{2n-1},b,ba,\ldots,ba^{2n-1}\}$ be the dicyclic group of order $4n$. Let $\cay(T_{4n},S)$ be a normal Cayley graph, where $1\notin S$ and $S=S^{-1}$. Write $S_a=S\cap\langle a\rangle$ and $S_b=(b^{-1}S)\cap\langle a\rangle$. We further partition $S_a$ and $S_b$ according to the parity of the exponent:
	\begin{align*}
		S_{a,0}&=\{a^j:1\le j\le 2n-1,\ a^j\in S_a,\ j\equiv0\pmod2\},\\
		S_{a,1}&=\{a^j:1\le j\le 2n-1,\ a^j\in S_a,\ j\equiv1\pmod2\},\\
		S_{b,0}&=\{a^j:0\le j\le 2n-1,\ a^j\in S_b,\ j\equiv0\pmod2\},~\text{and}\\
		S_{b,1}&=\{a^j:0\le j\le 2n-1,\ a^j\in S_b,\ j\equiv1\pmod2\}.
	\end{align*} 
	For convenience, denote $d_a=|S_a|$, $d_b=|S_b|$, $d_{a,k}=|S_{a,k}|$, $d_{b,k}=|S_{b,k}|$ for $k\in\{0,1\}$, and let  $d=|S|$.
	
	Since $S=S^{-1}$, if $n$ is odd, then the map $a^j\longmapsto a^{n+j}$
	defines a bijection between $S_{b,0}$ and $S_{b,1}$. Consequently, $d_{b,0}=d_{b,1}$.

	Using \eqref{evalue_p_normal} and Theorem~\ref{ev_normal_cayley}, we compute the eigenvalues of the discriminant matrix $P$ of $\cay(T_{4n},S)$. Let $\mu_j$ denote the discriminant eigenvalue corresponding to the one-dimensional irreducible representations $\psi_j$ for $1\le j\le 4$, and let $\widetilde{\mu}_h$ denote the discriminant eigenvalue corresponding to the two-dimensional irreducible representations $\rho_h$ for $1\le h\le n-1$. We first consider the one-dimensional irreducible representations.
	
	\noindent\textbf{Case 1. One-dimensional representations.}

	\noindent\textbf{Case 1.1.} $n$ is odd.
	\begin{align*}
		\mu_1&=1,\\
		\mu_2&=\frac{d_a-d_b}{d},\\
		\mu_3&=\mu_4=\frac{d_{a,0}-d_{a,1}}{d}.
	\end{align*}
	
	\noindent\textbf{Case 1.2.} $n$ is even.
	\begin{align*}
		\mu_1&=1,\\
		\mu_2&=\frac{d_a-d_b}{d},\\
		\mu_3&=\frac{d_{a,0}-d_{a,1}+d_{b,0}-d_{b,1}}{d},\\
		\mu_4&=\frac{d_{a,0}-d_{a,1}+d_{b,1}-d_{b,0}}{d}.
	\end{align*}
	
	\noindent\textbf{Case 2. Two-dimensional representations.}
	\[\widetilde\mu_h=\frac{1}{2d}\sum_{a^k\in S}\left(\omega_{2n}^{hk}+\omega_{2n}^{-hk}\right)=\frac{1}{d}\sum_{a^k\in S}\omega_{2n}^{hk}~~\text{for}~~1\leq h\leq n-1.\]
	Using the eigenvalues ${\mu}_j$ and $\widetilde{\mu}_h$, together with Theorem~\ref{pst_nomralcayley_main}, we obtain the following characterization of perfect state transfer on $\cay(T_{4n},S)$.
	\begin{theorem}\label{pst_dicyclic}
		Let $G:=\cay(T_{4n},S)$ be a normal Cayley graph. Then $G$ exhibits perfect state transfer between the vertices $u$ and $v$ at time $\tau$ if and only if $uv^{-1}=a^n$, and the eigenvalues of the discriminant $P$ satisfy
		\begin{enumerate}
			\item[(i)] $T_\tau(\mu_1)=T_\tau(\mu_2)=1$, $T_\tau(\mu_3)=T_\tau(\mu_4)=(-1)^n$, and
			\item[(ii)] $T_\tau(\widetilde\mu_h)=(-1)^h$ for $1\leq h\leq n-1$.
		\end{enumerate}
	\end{theorem}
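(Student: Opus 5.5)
We will specialize Theorem~\ref{pst_nomralcayley_main} to $\Gamma=T_{4n}$. Two things need to be done. First, we identify all central elements of order $2$ in $T_{4n}$ and show that $a^n$ is the only one. Second, we compute $\chi_\rho(a^n)$ for every irreducible representation in Tables~\ref{irr_t4n_odd} and \ref{irr_t4n_even}. This computation determines the partition $\widehat{\Gamma}=\widehat{\Gamma}_{z,0}\cup\widehat{\Gamma}_{z,1}$ for $z=a^n$. Condition (ii) of Theorem~\ref{pst_nomralcayley_main} then becomes a list of sign requirements on $T_\tau$ evaluated at the eigenvalues $\mu_j$ and $\widetilde\mu_h$ already computed above.

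For the first step, we list the elements of order $2$.
\begin{itemize}
\item In $\langle a\rangle\cong\Zl_{2n}$ the only element of order $2$ is $a^n$.
\item For any $k$ we have $(ba^k)^2=b a^k b a^k$. Using $b^{-1}ab=a^{-1}$, we get $a^kb=ba^{-k}$, so $(ba^k)^2=b^2=a^n\neq 1$. Hence no element $ba^k$ has order $2$.
\end{itemize}
So $a^n$ is the unique element of order $2$. Since $b^2=a^n$ commutes with $b$, and $a^n$ commutes with $a$, the element $a^n$ is central. Alternatively, centrality and order $2$ follow directly from Lemma~\ref{pst_nes_normalcayley} once the character values below are known. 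Condition (i) of Theorem~\ref{pst_nomralcayley_main} therefore reads $uv^{-1}=a^n$.

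For the second step, we read off the entries at $k=n$ in the $a^k$ column of the tables:
\begin{itemize}
\item $\psi_1(a^n)=\psi_2(a^n)=1$, so $\psi_1,\psi_2\in\widehat{\Gamma}_{z,0}$.
\item $\psi_3(a^n)=\psi_4(a^n)=(-1)^n$, so $\psi_3,\psi_4$ lie in $\widehat{\Gamma}_{z,0}$ when $n$ is even and in $\widehat{\Gamma}_{z,1}$ when $n$ is odd.
\item $\rho_h(a^n)=\operatorname{diag}(\omega_{2n}^{hn},\omega_{2n}^{-hn})=(-1)^hI_2$, so $\chi_{\rho_h}(a^n)=2(-1)^h$. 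Thus $\rho_h\in\widehat{\Gamma}_{z,0}$ for even $h$ and $\rho_h\in\widehat{\Gamma}_{z,1}$ for odd $h$.
\end{itemize}
Substituting into condition (ii) of Theorem~\ref{pst_nomralcayley_main}, with $\mu_j$ and $\widetilde\mu_h$ the discriminant eigenvalues attached to $\psi_j$ and $\rho_h$ via \eqref{evalue_p_normal}, gives exactly conditions (i) and (ii) of the statement.

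The argument holds for both parities of $n$, since the tables differ only on the coset $b\langle a\rangle$, which does not enter the evaluation at $a^n$. Note that $T_\tau(\mu_1)=1$ holds automatically because $\mu_1=1$. I expect no real obstacle. The only point needing care is that the tables list the representations up to equivalence, and that the character values at $a^n$ are read correctly from the matrix entries.
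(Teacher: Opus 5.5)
Your proposal is correct and follows the paper's proof. You specialize Theorem~\ref{pst_nomralcayley_main} to $z=a^n$ and read off $\chi_{\psi_1}(a^n)=\chi_{\psi_2}(a^n)=1$, $\chi_{\psi_3}(a^n)=\chi_{\psi_4}(a^n)=(-1)^n$ and $\chi_{\rho_h}(a^n)=2(-1)^h$. The only cosmetic difference is that the paper cites $Z(T_{4n})=\{1,a^n\}$, whereas you show directly that $a^n$ is the unique involution of $T_{4n}$ (via $(ba^k)^2=a^n\neq 1$) and that it is central, which gives the same conclusion.
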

	\begin{proof}
		Since $Z(T_{4n})=\{1,a^n\}$, the only central element of order $2$ is $a^n$. Moreover,
		\[
		\chi_{\psi_1}(a^n)=\chi_{\psi_2}(a^n)=1,\qquad
		\chi_{\psi_3}(a^n)=\chi_{\psi_4}(a^n)=(-1)^n,
		\]
		and
		\[
		\chi_{\rho_h}(a^n)=2(-1)^h~~\text{for}~~ 1\leq h\leq n-1.
		\]
		The result now follows immediately from Theorem~\ref{pst_nomralcayley_main}.
	\end{proof}
	We apply the previous result to the following example to illustrate its use.
	\begin{example}\label{eg_dicyclic}
		Let $T_{4n}$ be the dicyclic group with $4\nmid n$, and let $S=\{a^j,a^{n-j},a^{n+j},a^{2n-j}\}$ for some positive integer $j$. Then $\cay(T_{4n},S)$ exhibits perfect state transfer. Moreover, the minimum time for exhibiting perfect state transfer is 
		\[\frac{1}{2}\lcm\left(4,\frac{n}{\gcd(n,j)}\right).\]
	\end{example}
	\begin{proof}
		The eigenvalues of the discriminant $P$ of $\cay(T_{4n},S)$ are given by
		\begin{align*}
			{\mu}_1 &= 1= {\mu}_2,\\
			{\mu}_3 &= {\mu}_4=
			\begin{cases}
				0 & \text{if $n$ is odd}\\
				(-1)^j & \text{if $n$ is even},
			\end{cases}
		\end{align*}
		and for $1\leq h\leq n-1$,
		\[
		\widetilde\mu_h = \begin{cases}
			0 & \text{if $h$ is odd}\\
			\cos \frac{hj}{n}\pi & \text{if $h$ is even}.
		\end{cases}
		\]
		
		First, assume that $n$ is odd. Now using \eqref{chb}, it follows that 
		\begin{align*}
			T_{2n}(\mu_1)&=T_{2n}(\mu_2)=1,~~ T_{2n}(\mu_3)=T_{2n}(\mu_4)=-1 \\
			T_{2n}(\widetilde\mu_h)&=(-1)^h ~~\text{for}~~ 1\leq h\leq n-1.
		\end{align*}
		Thus by Theorem~\ref{pst_dicyclic},  $\cay(T_{4n},S)$ exhibits perfect state transfer at time $2n$.
		
		Now assume that $n$ is even. Since $4\nmid n$, we have $n\equiv2\pmod4$. Then 
		\begin{align*}
			T_{n}(\mu_1)&=T_{n}(\mu_2)=T_{n}(\mu_3)=T_{n}(\mu_4)=1 \\
			T_{n}(\widetilde\mu_h)&=(-1)^h ~~\text{for}~~ 1\leq h\leq n-1.
		\end{align*}
		Thus  $\cay(T_{4n},S)$ exhibits perfect state transfer at time $n$.

		Let $\tau$ be the  minimum time at which  $\cay(T_{4n},S)$ exhibits perfect state transfer. By Corollary~\ref{minimum}, $\cay(T_{4n},S)$ is $2\tau$-periodic. The discriminant  eigenvalues of $\cay(T_{4n},S)$ are $\pm1$, $0$, and $\cos \frac{hj}{n}\pi$ for $1\leq h\leq n-1$ with $h$ even. 
		The eigenvalues $1$, $-1$ and $0$ correspond to the  eigenvalues $1$, $-1$ and $\pm\iu$ of the time evolution matrix $U$. The eigenvalues $\cos \frac{hj}{n}\pi$, for $1\leq h\leq n-1$ with $h$ even, correspond to  the eigenvalues
		\[
		\eta_r:=\exp\left(\pm\frac{2rj\pi\iu}{n}\right),
		\qquad
		1\le r\le\left\lfloor\frac{n-1}{2}\right\rfloor,
		\]
		of the time evolution matrix.

		Let $k$ be the least positive integer such that $\eta_r^k=1$ for   $1\le r\le\left\lfloor\frac{n-1}{2}\right\rfloor$. This is equivalent to requiring	$n\mid rjk$ for $1\leq r\leq\lfloor(n-1)/2\rfloor$.	Consequently,
		\[
		k=\frac{n}{\gcd(n,j)}.
		\]
		Therefore by Lemma~\ref{periodic}, the period of $\cay(T_{4n},S)$ is  
		\[
		\lcm\left(4,\frac{n}{\gcd(n,j)}\right).
		\]
		Thus the result follows from Corollary~\ref{minimum}.		
	\end{proof}		
	
	\begin{corollary}\label{comb_dicyclic}
		Let $G:=\cay(T_{4n},S)$ be a normal Cayley for $n\geq 3$. If $G$ exhibits perfect state transfer, then one of the following holds.
		\begin{enumerate}
			\item[(i)] $(d_a,d_b)\in\{(n,n),~(n/3,n),~(2n/3,2n)\}$, or
			\item[(i)] $d_b=0$.
		\end{enumerate}
	\end{corollary}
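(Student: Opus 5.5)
The plan is to extract everything from the single rational eigenvalue $\mu_2=\frac{d_a-d_b}{d}$. Combine Theorem~\ref{pst_dicyclic}, which forces $T_\tau(\mu_2)=1$, with Niven's theorem on rational values of cosines at rational multiples of $\pi$. Then use the conjugacy class structure of $T_{4n}$ to restrict $d_b$. The degenerate case $d_a=0$ is ruled out with the two-dimensional eigenvalues $\widetilde\mu_h$.

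\textbf{Step 1 (arithmetic constraints on $d_a,d_b$).} Since $S$ is the disjoint union of $S\cap\langle a\rangle$ and $S\cap b\langle a\rangle$, we have $d=d_a+d_b$ and $0\le d_a\le 2n-1$. Next I would compute the conjugacy classes of the elements $ba^k$. We have $a^{-1}(ba^k)a=ba^{k+2}$, and conjugating by $b$ gives $b(ba^k)b^{-1}=ba^{-k}\cdot(\text{power of }a^{2n})$, up to the relation $b^2=a^n$. Hence the classes contained in $b\langle a\rangle$ are exactly $\{ba^{2m}\}$ and $\{ba^{2m+1}\}$, each of size $n$. Because $S$ is a union of conjugacy classes, $d_b\in\{0,n,2n\}$.

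\textbf{Step 2 (Niven).} If perfect state transfer occurs at time $\tau$, Theorem~\ref{pst_dicyclic}(i) gives $T_\tau(\mu_2)=1$. By Lemma~\ref{ch}, $\mu_2=\cos\frac{s}{\tau}\pi$ for some integer $s$. Since $\mu_2\in\Ql$, Niven's theorem gives $\mu_2\in\{0,\pm\frac12,\pm1\}$.

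\textbf{Step 3 (case analysis).} With $d=d_a+d_b$, I would run through the possibilities.
\begin{itemize}
\item If $d_b=0$, this is conclusion (ii).
\item If $d_b=n$, the equation $\frac{d_a-n}{d_a+n}\in\{0,\pm\frac12,\pm1\}$ gives, respectively, $d_a=n$, $d_a=3n$ or $d_a=n/3$, $d_a=0$ (the value $+1$ is impossible). The option $3n$ is excluded by $d_a\le 2n-1$.
\item If $d_b=2n$, the same equation gives $d_a=2n$, $d_a=6n$ or $d_a=2n/3$, $d_a=0$. The options $2n$ and $6n$ are excluded by $d_a\le 2n-1$.
\end{itemize}
The survivors are $(n,n)$, $(n/3,n)$, $(2n/3,2n)$, and the cases $d_a=0$ with $d_b\in\{n,2n\}$.

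\textbf{Step 4 (exclude $d_a=0$, the main obstacle).} This is the only place the hypothesis $n\ge3$ enters. If $d_a=0$, no power $a^k$ lies in $S$, so the formula for $\widetilde\mu_h$ gives $\widetilde\mu_h=0$ for every $1\le h\le n-1$. Since $n\ge 3$, both $h=1$ and $h=2$ occur. Theorem~\ref{pst_dicyclic}(ii) would then demand $T_\tau(0)=-1$ and $T_\tau(0)=1$ simultaneously, which is a contradiction. This completes the case analysis, giving (i) or (ii).
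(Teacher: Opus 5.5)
Your proof is correct and follows the paper's argument: the rational eigenvalue $\mu_2=\frac{d_a-d_b}{d}$ is forced into $\{0,\pm\frac12,\pm1\}$, the conjugacy classes give $d_b\in\{0,n,2n\}$, and the case $d_a=0$ is excluded because $\widetilde\mu_1=\widetilde\mu_2=0$ clashes with Theorem~\ref{pst_dicyclic}(ii). The differences are only cosmetic. You get $\mu_2\in\{0,\pm\frac12,\pm1\}$ directly from $T_\tau(\mu_2)=1$ via Lemma~\ref{ch} and Niven's theorem, whereas the paper goes through periodicity (Corollary~\ref{minimum}) and Lemma~\ref{regular_periodic}; and you state explicitly the bound $d_a\le 2n-1$, which the paper uses only implicitly to discard $(2n,2n)$, $(3n,n)$ and $(6n,2n)$.
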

	\begin{proof}
		Suppose $G$ exhibits perfect state transfer. Then by Corollary~\ref{minimum}, $G$ is periodic. Hence by Lemma~\ref{regular_periodic}, we have
		\[\mu_2=\frac{d_a-d_b}{d}\in\left\{\pm1,\pm\frac{1}{2},0\right\}.\]
		Since $d=d_a+d_b$, comparing each cases, we have $d_b=0$, $d_a=0$, $d_a=3d_b$, $d_b=3d_a$ or $d_a=d_b$. We now consider these cases separately.
		
		If $d_a=0$, then $\widetilde{\mu}_h=0$ for $1\leq h\leq n-1$. Consequently, Condition~$(ii)$ of Theorem~\ref{pst_dicyclic} implies that perfect state transfer does not occur in this case.
		
		Now suppose $d_a\neq 0$. Since $G$ is a normal Cayley graph, the connection set $S$ is a union of conjugacy classes of $T_{4n}$. The conjugacy classes of $T_{4n}$ are
		\[
		\{1\},~\{a^n\},~\{a^i,a^{-i}\}~\text{for}~1\leq i\leq n-1,~\{ba^{2j}:0\leq j\leq n-1\},~\{ba^{2j+1}:0\leq j\leq n-1\}.
		\]
		Since $1\notin S$ and $S$ is a union of conjugacy classes, we have $d_b\in\{0,n,2n\}$.
		Combining this with the above five cases, we obtain
		\begin{enumerate}
			\item[(i)] if $d_a=d_b$, then $(d_a,d_b)=(n,n)$;
			\item[(ii)] if $d_b=3d_a$, then $(d_a,d_b)=(n/3,n)$ or $(2n/3,2n)$;
			\item[(iii)] the remaining cases force either $d_b=0$ or $d_a=0$.
		\end{enumerate}
		Thus the result follows.
	\end{proof}
	
	Observe that the condition $d_b=0$ implies that the graph is disconnected, and an example of such a Cayley graph exhibiting perfect state transfer is given in Example~\ref{eg_dicyclic}. However,  we do not know any example satisfying the first condition of Corollary~\ref{comb_dicyclic} that exhibits perfect state transfer.
	\subsection{Dihedral Cayley graphs}
	For $n\geq2$, the \emph{dihedral group} of order $2n$, denoted $D_{2n}$, is the group defined by the presentation
	\[
	D_{2n}=\langle a,b \mid a^{n}=1,\; b^2=1,\; bab=a^{-1}\rangle.
	\]
	\begin{lemma}[{\cite[pp.~40]{rep_dihedral}}]
		Let $n$ be a positive integer. Then the irreducible representations of the dicyclic group $D_{2n}$ are given in Table~\ref{irr_dn_odd} for odd $n$ and in Table~\ref{irr_dn_even} for even $n$.
	\end{lemma}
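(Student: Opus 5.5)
The statement lists the irreducible representations of $D_{2n}$ (Tables~\ref{irr_dn_odd} and \ref{irr_dn_even}). Since it is a standard fact quoted from \cite{rep_dihedral}, the plan is to verify it directly. It suffices to show three things: each listed map is a homomorphism, each is irreducible, and they are pairwise inequivalent with $\sum d_\rho^2=2n$. By \eqref{regular_rep}, the last condition shows that no irreducible representation is missing.

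\textbf{Step 1: the one-dimensional representations.} A one-dimensional representation is determined by a choice of $\psi(a)=\alpha$ and $\psi(b)=\beta$ with $\alpha^n=1$, $\beta^2=1$, and $\alpha=\alpha^{-1}$ (from $bab=a^{-1}$). Hence $\alpha=\pm1$, and $\alpha=-1$ is allowed only when $n$ is even. This gives two one-dimensional representations $\psi_1,\psi_2$ for odd $n$, and four for even $n$.

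\textbf{Step 2: the two-dimensional representations.} For $1\le h$, define
\[\rho_h(a^k)=\begin{pmatrix}\omega_n^{hk}&0\\0&\omega_n^{-hk}\end{pmatrix},\qquad \rho_h(ba^k)=\begin{pmatrix}0&\omega_n^{-hk}\\ \omega_n^{hk}&0\end{pmatrix}.\]
Check the relations: $\rho_h(a)^n=I$, $\rho_h(b)^2=I$, and $\rho_h(b)\rho_h(a)\rho_h(b)=\rho_h(a)^{-1}$. These hold because conjugating by the swap matrix interchanges the diagonal entries.

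For irreducibility, suppose $\omega_n^h\neq\omega_n^{-h}$, i.e.\ $n\nmid 2h$. Then the only lines invariant under $\rho_h(a)$ are the two coordinate axes, and $\rho_h(b)$ swaps them. So $\rho_h$ is irreducible exactly for $1\le h\le \lfloor (n-1)/2\rfloor$.

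For inequivalence, compare characters: $\chi_{\rho_h}(a)=2\cos(2\pi h/n)$, which is distinct for distinct $h$ in this range. The characters of the $\rho_h$ also differ from those of the one-dimensional representations.

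\textbf{Step 3: counting.} For odd $n$ the total is $2\cdot1+4\cdot\frac{n-1}{2}=2n$. For even $n$ it is $4+4\cdot\frac{n-2}{2}=2n$. Therefore the lists are complete.

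The only step needing care is the irreducibility range in Step 2, specifically excluding $h=n/2$ when $n$ is even. At $h=n/2$ the representation $\rho_h$ splits into the two extra one-dimensional characters, which is why $n$ must be split by parity. Everything else is routine verification.
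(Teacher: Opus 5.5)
Your verification is correct. The paper gives no proof of its own and simply cites \cite{rep_dihedral}, and your argument (checking the defining relations, irreducibility via invariant lines, inequivalence via $\chi_{\rho_h}(a)=2\cos(2\pi h/n)$, and completeness via $\sum d_\rho^2=2n$ together with \eqref{regular_rep}) is essentially the standard argument given in that reference.

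One wording fix: $\rho_h$ is irreducible whenever $n\nmid 2h$, not \emph{exactly} for $1\le h\le\lfloor (n-1)/2\rfloor$. That range is only a set of representatives up to equivalence, because $\rho_h\cong\rho_{n-h}$.
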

	
	\begin{table}[h]
		\centering
		\begin{tabular}{|c|c|c|}
			\hline
			& $a^{k}\ (0 \le k \le n-1)$ & $ba^{k}\ (0 \le k \le n-1)$ \\
			\hline
			$\psi_{1}$ & $1$ & $1$ \\
			$\psi_{2}$ & $1$ & $-1$ \\
			$\rho_h \ \left(1 \le h \le \lfloor (n-1)/2 \rfloor\right)$
			&
			$\begin{pmatrix}
				\omega_n^{hk} & 0 \\
				0 & \omega_n^{-hk}
			\end{pmatrix}$
			&
			$\begin{pmatrix}
				0 & \omega_n^{-hk} \\
				\omega_n^{hk} & 0
			\end{pmatrix}$ \\
			\hline
		\end{tabular}
		\caption{Irreducible representations of $D_{2n}$ ($n$ odd).}
		\label{irr_dn_odd}
	\end{table}
	\begin{table}[h]
		\centering
		\begin{tabular}{|c|c|c|}
			\hline
			& $a^{k}\ (0 \le k \le n-1)$ & $ba^{k}\ (0 \le k \le n-1)$ \\
			\hline
			$\psi_{1}$ & $1$ & $1$ \\
			$\psi_{2}$ & $1$ & $-1$ \\
			$\psi_{3}$ & $(-1)^k$ & $(-1)^k$ \\
			$\psi_{4}$ & $(-1)^k$ & $(-1)^{k+1}$ \\
			$\rho_h \ \left(1 \le h \le \lfloor (n-1)/2 \rfloor\right)$
			&
			$\begin{pmatrix}
				\omega_n^{hk} & 0 \\
				0 & \omega_n^{-hk}
			\end{pmatrix}$
			&
			$\begin{pmatrix}
				0 & \omega_n^{-hk} \\
				\omega_n^{hk} & 0
			\end{pmatrix}$ \\
			\hline
		\end{tabular}
		\caption{Irreducible representations of $D_{2n}$ ($n$ even).}
		\label{irr_dn_even}
	\end{table}	
	Let $D_{2n}:=\{1,a,\ldots,a^{n-1},b,ba,\ldots,ba^{n-1}\}$ be the dihedral group of order $2n$, and consider the normal Cayley graph $\cay(D_{2n},S)$, where $1\notin S$ and $S=S^{-1}$. We adopt the same notation as for $\cay(T_{4n},S)$, namely, $S_a$, $S_b$, $S_{a,i}$, and $S_{b,i}$, together with their cardinalities $d_a$, $d_b$, $d_{a,i}$, and $d_{b,i}$, respectively, for $i\in\{0,1\}$.

	Using \eqref{evalue_p_normal} and Theorem~\ref{ev_normal_cayley}, we compute the eigenvalues of the discriminant matrix $P$ of $\cay(D_{2n},S)$. Let $\mu_j$ and $\widetilde{\mu}_h$ denote the discriminant eigenvalues corresponding to the one and two-dimensional irreducible representations, respectively.
	
	\pagebreak
	\noindent\textbf{Case 1. One-dimensional representations.}

	\noindent\textbf{Case 1.1.} $n$ is odd.
	\begin{align*}
		\mu_1&=1,\\
		\mu_2&=\frac{d_a-d_b}{d}.
	\end{align*}

	\noindent\textbf{Case 1.2.} $n$ is even.
	\begin{align*}
		\mu_1&=1,\\
		\mu_2&=\frac{d_a-d_b}{d},\\
		\mu_3&=\frac{d_{a,0}-d_{a,1}+d_{b,0}-d_{b,1}}{d},\\
		\mu_4&=\frac{d_{a,0}-d_{a,1}+d_{b,1}-d_{b,0}}{d}.
	\end{align*}

	\noindent\textbf{Case 2. Two-dimensional representations.}
	\[
	\widetilde{\mu}_h=
	\frac{1}{2d}\sum_{a^k\in S}\left(\omega_n^{hk}+\omega_n^{-hk}\right)= \frac{1}{d}\sum_{a^k\in S} \omega_n^{hk}~~\text{for}~~1\leq h\leq\left\lfloor\frac{n-1}{2}\right\rfloor.
	\]

	The center of $D_{2n}$ is trivial when $n$ is odd. Hence by Theorem~\ref{pst_nomralcayley_main}, a normal Cayley graph over $D_{2n}$ cannot exhibit perfect state transfer in this case. If $n$ is even, then $a^{n/2}$ is the unique element of order $2$ in the center of $D_{2n}$. The following result now follows from Theorem~\ref{pst_nomralcayley_main}, as in the proof of Theorem~\ref{pst_dicyclic}.
	\begin{theorem}\label{pst_dihedral}
		Let $G:=\cay(D_{2n},S)$ be a normal Cayley graph. Then $G$ exhibits perfect state transfer between the vertices $u$ and $v$ at time $\tau$ if and only if $n$ is even, $uv^{-1}=a^{n/2}$, and the eigenvalues of $P$ satisfies
		\begin{enumerate}
			\item[(i)] $T_\tau(\mu_1)=T_\tau(\mu_2)=1$, $T_\tau(\mu_3)=T_\tau(\mu_4)=(-1)^\frac{n}{2}$, and
			\item[(ii)] $T_\tau(\widetilde\mu_h)=(-1)^h$ for $1\leq h\leq\left\lfloor\frac{n-1}{2}\right\rfloor$.
		\end{enumerate}
	\end{theorem}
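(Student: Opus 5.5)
The plan is to apply Theorem~\ref{pst_nomralcayley_main} directly, in the same way as in the proof of Theorem~\ref{pst_dicyclic}. There are three things to establish: the center of $D_{2n}$, which pins down $z=uv^{-1}$ through Condition~(i); the value of each irreducible character at $z$, which sorts $\widehat{D_{2n}}$ into $\widehat{\Gamma}_{z,0}$ and $\widehat{\Gamma}_{z,1}$; and the resulting translation of Condition~(ii) into statements about the listed eigenvalues $\mu_j$ and $\widetilde\mu_h$.

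\textbf{The center.} From the relation $ba^k b=a^{-k}$, a rotation $a^k$ is central exactly when $a^{2k}=1$. No reflection $ba^k$ is central when $n\ge3$, since $(ba^k)a=a^{-1}(ba^k)$ and $a\ne a^{-1}$. (The degenerate case $n=2$ is the Klein group; there all four listed one-dimensional characters appear and the same computation applies with no $\rho_h$.) So for $n$ odd, $Z(D_{2n})$ is trivial. Condition~(i) of Theorem~\ref{pst_nomralcayley_main} then fails, and there is no perfect state transfer, which accounts for the requirement that $n$ be even. For $n$ even, $Z(D_{2n})=\{1,a^{n/2}\}$, so Condition~(i) holds if and only if $uv^{-1}=a^{n/2}$.

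\textbf{Character values at $z=a^{n/2}$ ($n$ even).} I would read these off Table~\ref{irr_dn_even}:
\[
\chi_{\psi_1}(a^{n/2})=\chi_{\psi_2}(a^{n/2})=1,\qquad \chi_{\psi_3}(a^{n/2})=\chi_{\psi_4}(a^{n/2})=(-1)^{n/2},
\]
and
\[
\chi_{\rho_h}(a^{n/2})=\omega_n^{hn/2}+\omega_n^{-hn/2}=2(-1)^h .
\]
Hence $\psi_1,\psi_2$ lie in $\widehat{\Gamma}_{z,0}$, the pair $\psi_3,\psi_4$ lies in $\widehat{\Gamma}_{z,0}$ or $\widehat{\Gamma}_{z,1}$ according to the parity of $n/2$, and $\rho_h$ lies in $\widehat{\Gamma}_{z,0}$ if $h$ is even and in $\widehat{\Gamma}_{z,1}$ if $h$ is odd. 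Substituting this partition into Condition~(ii) of Theorem~\ref{pst_nomralcayley_main} gives exactly conditions (i) and (ii) of the statement.

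\textbf{The only real obstacle.} I need the table to be a complete and irredundant list of $\widehat{D_{2n}}$, so that the condition of Theorem~\ref{pst_nomralcayley_main} ranges over precisely these representations. The check is the degree count: $4\cdot1+2^2\cdot\lfloor (n-1)/2\rfloor = 4+4(n/2-1)=2n=|D_{2n}|$. This count also confirms that for even $n$ the two-dimensional representations are indexed by $1\le h\le n/2-1=\lfloor (n-1)/2\rfloor$, matching the range in the statement.
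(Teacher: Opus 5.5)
Your route is the paper's own: its proof is a two-line appeal to Theorem~\ref{pst_nomralcayley_main}, "as in the proof of Theorem~\ref{pst_dicyclic}". You determine the central involutions, evaluate the characters at $a^{n/2}$ to split the irreducible representations into $\widehat{\Gamma}_{z,0}$ and $\widehat{\Gamma}_{z,1}$, and read off the conditions. For $n\ge 3$ your center computation, character values and degree count are all correct.

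The one thing that is wrong is your parenthetical about $n=2$. There $D_4=\{1,a,b,ba\}$ is abelian, so $Z(D_4)=D_4$, and $b$ and $ba$ are central involutions as well as $a$. Condition~(i) of Theorem~\ref{pst_nomralcayley_main} therefore does not force $uv^{-1}=a^{n/2}$, and the ``only if'' direction of the statement is actually false at $n=2$.

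Concretely, take $S=\{b\}$. The graph is a perfect matching, so $P=A$ is the permutation $u\mapsto bu$ and $T_1(P)\mathbf{e}_u=\mathbf{e}_{bu}$. By Lemma~\ref{pst_vertex}, perfect state transfer occurs between $u$ and $bu$ at time $1$. Yet here $uv^{-1}=b\neq a$, and $T_1(\mu_2)=\mu_2=(d_a-d_b)/d=-1\neq 1$.

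So the computation does not ``apply in the same way'' at $n=2$. The statement needs $n\ge 3$, and you should impose that restriction explicitly rather than assert that the degenerate case goes through. The paper's proof has the same tacit gap when it claims $a^{n/2}$ is the unique central involution for every even $n$.
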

	The preceding result was established in \cite{bhakta6} using an explicit spectral analysis of $\cay(D_{2n},S)$. In contrast, it follows directly from our general characterization.
	\begin{example}\label{eg_dihedral}
		Let $D_{2n}$ be the dihedral group such that $n=2m$ with $4\nmid m$. Let $S=\{ a^j,a^{m-j},a^{m+j},a^{n-j}\}$ for some positive integer $j$. Then $\cay(D_{2n},S)$ exhibits perfect state transfer. Moreover, the minimum time for exhibiting perfect state transfer is 
		\[\frac{1}{2}\lcm\left(4,\frac{m}{\gcd(m,j)}\right).\]
	\end{example}
	\begin{proof}
		The eigenvalues of $P$ are given by
		\begin{align*}
			\widetilde{\mu}_1 &= 1= \widetilde{\mu}_2,\\
			\widetilde{\mu}_3 &= \widetilde{\mu}_4=
			\begin{cases}
				0 & \text{if $m$ is odd}\\
				(-1)^j & \text{if $m$ is even},
			\end{cases}
		\end{align*}
		and for $1\leq h\leq m-1$, 
		\[
		\mu_h = \begin{cases}
			0 & \text{if $h$ is odd}\\
			\cos \frac{hj}{m}\pi & \text{if $h$ is even}.
		\end{cases}
		\]
		The result follows from Theorem~\ref{pst_dihedral}, as in the proof of Example~\ref{eg_dicyclic}.
	\end{proof}
	The next result follows by an argument similar to that of Corollary~\ref{comb_dicyclic}.
	\begin{corollary}\label{comb_dihedral}
		Let $G:=\cay(D_{2n},S)$ be a normal Cayley graph. If $G$ exhibits perfect state transfer, then one of the following holds.
		\begin{enumerate}
			\item[(i)] $(d_a,d_b)\in\{(n/2,n/2),~(n/3,n),~(n/6,n/2)\}$, or
			\item[(i)] $d_b=0$.
		\end{enumerate}
	\end{corollary}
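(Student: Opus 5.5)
The plan is to mirror the proof of Corollary~\ref{comb_dicyclic}, with the dicyclic data replaced by the dihedral data and Theorem~\ref{pst_dihedral} used in place of Theorem~\ref{pst_dicyclic}.

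\textbf{Step 1: restrict $(d_a,d_b)$ by periodicity.} Suppose $G$ exhibits perfect state transfer. Theorem~\ref{pst_dihedral} forces $n$ to be even. Corollary~\ref{minimum} shows that $G$ is periodic. Lemma~\ref{regular_periodic} then gives
\[
\mu_2=\frac{d_a-d_b}{d}\in\left\{\pm1,\pm\frac12,0\right\}.
\]
Since $d=d_a+d_b$, this leaves exactly five possibilities: $d_b=0$, $d_a=0$, $d_a=d_b$, $d_a=3d_b$, or $d_b=3d_a$.

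\textbf{Step 2: use the conjugacy classes.} For $n$ even the conjugacy classes of $D_{2n}$ are
\[
\{1\},\quad \{a^{n/2}\},\quad \{a^i,a^{-i}\}\ (1\le i<n/2),\quad \{ba^{2j}\},\quad \{ba^{2j+1}\},
\]
and the last two each have size $n/2$. Since $S$ is a union of conjugacy classes, $d_b\in\{0,n/2,n\}$, and always $d_a\le n-1$. Running through the five cases:
\begin{enumerate}
\item[(a)] $d_a=d_b$ forces $(n/2,n/2)$, because $(n,n)$ would need $d_a=n>n-1$.
\item[(b)] $d_b=3d_a$ forces $(n/6,n/2)$ or $(n/3,n)$.
\item[(c)] $d_a=3d_b$ with $d_b\neq0$ would need $d_a\in\{3n/2,3n\}$, which exceeds $n-1$, so it is impossible.
\item[(d)] $d_b=0$ is allowed directly by the statement.
\end{enumerate}
This leaves only the case $d_a=0$, $d_b>0$ to exclude.

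\textbf{Step 3 (main obstacle): exclude $d_a=0$.} When $d_a=0$, every $\widetilde\mu_h$ equals $0$. Condition (ii) of Theorem~\ref{pst_dihedral} then requires $T_\tau(0)=(-1)^1$ and $T_\tau(0)=(-1)^2$ simultaneously, which is impossible. This argument needs both $h=1$ and $h=2$ to occur, that is, $\lfloor (n-1)/2\rfloor\ge2$, which holds for $n\ge6$.

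For $n=4$ only $h=1$ occurs, and the contradiction must come from Condition (i) instead. The first thing to try is the one-dimensional eigenvalues. If $d_b=n/2$, then $\mu_2=-1$ and $\{\mu_3,\mu_4\}=\{1,-1\}$. Condition (i) asks for $T_\tau(-1)=1$ and $T_\tau(\pm1)=(-1)^{n/2}$, while (ii) asks for $T_\tau(0)=-1$. For $n=4$ all three are satisfiable with $\tau\equiv2\pmod4$.

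In fact $\cay(D_8,\{b,ba^2\})$ is two disjoint $4$-cycles and appears to admit perfect state transfer. So I expect the statement to need a hypothesis excluding small $n$, presumably $n\ge6$ by analogy with the assumption $n\ge3$ in Corollary~\ref{comb_dicyclic}. If such a hypothesis is present, the argument closes by the $T_\tau(0)$ contradiction. Otherwise I would check $n=2$ and $n=4$ by hand. This small-$n$ bookkeeping is where I expect the real difficulty to lie.
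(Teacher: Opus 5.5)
Your Steps 1 and 2, and the $T_\tau(0)$ contradiction in Step 3, are exactly the argument the paper intends; it only says the proof is similar to that of Corollary~\ref{comb_dicyclic}. Your suspicion about small $n$ is correct, and it points to an error in the paper's statement rather than in your proposal. In the dicyclic case the hypothesis $n\ge 3$ guarantees that both $h=1$ and $h=2$ occur. For $D_{2n}$, however, the range $1\le h\le\lfloor (n-1)/2\rfloor$ contains $h=2$ only when $n\ge 5$, and Corollary~\ref{comb_dihedral} has no such hypothesis.

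Your example is a genuine counterexample. In $D_8$, the set $\{b,ba^2\}$ is a conjugacy class of involutions and $\langle b,ba^2\rangle=\{1,a^2,b,ba^2\}$. So $\cay(D_8,\{b,ba^2\})$ is two copies of $C_4$, one of them the cycle $1\sim b\sim a^2\sim ba^2\sim 1$. On $C_4$ the matrix $T_2(P)=2P^2-I$ is the antipodal permutation, so perfect state transfer occurs at time $2$ between $1$ and $a^2=a^{n/2}$. This agrees with Theorem~\ref{pst_dihedral}, since $\mu_1=\mu_3=1$, $\mu_2=\mu_4=-1$ and $\widetilde\mu_1=0$. Yet $(d_a,d_b)=(0,2)$ is none of $(2,2)$, $(4/3,4)$, $(2/3,2)$, and $d_b\neq 0$.

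The case $n=2$ fails in the same way. $\cay(D_4,\{b,ba\})$ is the $4$-cycle $1\sim b\sim a\sim ba\sim 1$, so it has perfect state transfer at time $2$ between $1$ and $a$. Here $(d_a,d_b)=(0,2)$, which is not among $(1,1)$, $(2/3,2)$, $(1/3,1)$. Do not rely on Theorem~\ref{pst_dihedral} when $n=2$: $D_4$ is abelian, so $b$ and $ba$ are also central involutions.

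The statement therefore needs an extra hypothesis such as $n\ge 5$. Odd $n$ admits no perfect state transfer, so this is equivalent to your $n\ge 6$. Under that hypothesis your argument is a complete proof.
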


	As in the dicyclic case, the condition $d_b=0$ implies that the graph is disconnected, and Example~\ref{eg_dicyclic} provides such an example exhibiting perfect state transfer. Currently, we do not have any example satisfying the first condition of Corollary~\ref{comb_dihedral} that exhibits perfect state transfer.

	\section{Perfect state transfer on unitary Cayley graphs}\label{ucg}

	Let $\Zl_n$ be the additive group of integers modulo $n$, where $n\geq 2$ and let $U_n=\{a\in\Zl_n:\gcd(a,n)=1\}$. Then the Cayley graph $\cay(\Zl_n, U_n)$ is called the \emph{unitary Cayley graph}. We prefer to denote $\cay(\Zl_n,U_n)$ by $G_{\Zl_n}$. The graph $G_{\Zl_n}$ is a connected and $\varphi(n)$-regular graph, where $\varphi$ denotes Euler's totient function. Note that the unitary Cayley graph is a circulant graph. See~\cite{unitary_cayley} for more information about unitary Cayley graphs.
	
	The eigenvalues of the adjacency matrix of the circulant graph $\cay(\Zl_n,C)$ are given by  
	\begin{equation}\label{ev_circulant}
		\lambda_j=\sum_{s\in C} \omega_n^{js}~~\text{for}~0\leq j\leq n-1.
	\end{equation}
	In 1918, Ramanujan introduced a sum (now known as the Ramanujan sum) in his published seminar paper \cite{ramanujan}, defined by 
	\begin{equation}\label{ramanujan_sum}
		R(j,n)=\sum_{r\in U_n} \omega_n^{jr}=\mathop{\sum_{r\in U_n }}_{r<\frac{n}{2}} 2 \cos\left( \frac{2\pi jr }{n} \right) ~\text{for}~n,j\in \Zl,~ n\geq 1.
	\end{equation}
	The Ramanujan sum $R(j,n)$ can also be expressed in terms of arithmetic functions (see \cite{ramanujan}). For integers $n$, $j$ with $n\geq 1$, 
	$$ R(j,n)=\sum_{r\divides \gcd(j,n)} \mu\left(\frac{n}{r}\right)= \mu (c_{n,j})\frac{\varphi(n)}{\varphi(c_{n,j})},$$ 
	where $c_{n,j}=n/\gcd(n,j)$ and $\mu $ is the M\"{o}bius function.
	
	Therefore by \eqref{ev_circulant} and \eqref{ramanujan_sum}, the eigenvalues of the adjacency matrix of the unitary Cayley graph $G_{\Zl_n}$ are given by
	\begin{equation}\label{ev_unitarycayley}
		\lambda_j=R(j,n)=\mu (c_{n,j})\frac{\varphi(n)}{\varphi(c_{n,j})}~~\text{for}~~ 0\leq j\leq n-1.
	\end{equation} 
	Corollary~\ref{minimum} implies that if a graph exhibits perfect state transfer, then it is periodic. Hence we first characterize periodicity on $G_{\Zl_n}$.
	\begin{theorem}\label{period_unitary}
		The unitary Cayley graph $G_{\Zl_n}$ is periodic if and only if  $n=2^\alpha 3^\beta$, where $\alpha ~\text{and}~ \beta $ are nonnegative integers with $\alpha + \beta \neq 0 $. Moreover, the period $\tau$ of $G_{\mathbb Z_n}$ is given by
		\[
		\tau=\left\{ \begin{array}{rl}
			4 & \text{if } n=2^\alpha~(\alpha\geq1),\\
			12 & \text{if } n=2^\alpha 3^\beta ~(\beta\geq1).
		\end{array}\right.
		\]
	\end{theorem}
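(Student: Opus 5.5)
The plan is to apply Lemma~\ref{regular_periodic} to the discriminant eigenvalues, which we compute from \eqref{ev_unitarycayley}. Since $G_{\Zl_n}$ is $\varphi(n)$-regular, Lemma~\ref{reg} gives
\[
\mu_j=\frac{\lambda_j}{\varphi(n)}=\frac{\mu(c_{n,j})}{\varphi(c_{n,j})},\qquad 0\le j\le n-1 .
\]
As $j$ ranges over $\Zl_n$, the value $c_{n,j}=n/\gcd(n,j)$ ranges over \emph{all} divisors $c$ of $n$. Hence
\[
\spec_P(G_{\Zl_n})=\left\{\frac{\mu(c)}{\varphi(c)}: c\divides n\right\}.
\]
All of these eigenvalues are rational. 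Lemma~\ref{regular_periodic} therefore says that $G_{\Zl_n}$ is periodic if and only if
\[
\frac{\mu(c)}{\varphi(c)}\in\left\{\pm1,\pm\tfrac12,0\right\}\quad\text{for every divisor } c \text{ of } n .
\]

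\textbf{Necessity.} Suppose $p\ge5$ is a prime dividing $n$. Taking $c=p$ gives the eigenvalue $-1/(p-1)$, whose absolute value is at most $1/4$. It is nonzero, so it is not in the allowed set, and $G_{\Zl_n}$ is not periodic. Hence every prime divisor of $n$ lies in $\{2,3\}$, so $n=2^\alpha3^\beta$ with $\alpha+\beta\ne0$ because $n\ge2$.

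\textbf{Sufficiency.} Let $n=2^\alpha3^\beta$. Any divisor $c$ with $\mu(c)\ne0$ is squarefree, so $c\in\{1,2,3,6\}$. The corresponding values of $\mu(c)/\varphi(c)$ are $1,-1,-1/2,1/2$. Every non-squarefree divisor gives $0$. All of these lie in the allowed set, so $G_{\Zl_n}$ is periodic.

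\textbf{The period.} We use Lemmas~\ref{ev_grover} and~\ref{periodic}. Each discriminant eigenvalue $\mu$ contributes the $U$-eigenvalues $\exp(\pm\iu\arccos\mu)$, and we record their orders:
\begin{itemize}
\item $\mu=1$ gives $1$, of order $1$;
\item $\mu=-1$ gives $-1$, of order $2$;
\item $\mu=0$ gives $\pm\iu$, of order $4$;
\item $\mu=1/2$ gives $e^{\pm\iu\pi/3}$, of order $6$;
\item $\mu=-1/2$ gives $e^{\pm 2\iu\pi/3}$, of order $3$.
\end{itemize}
Lemma~\ref{ev_grover} also contributes the extra eigenvalues $1$ and $-1$, with multiplicities depending on $b_1$ and on bipartiteness. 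These have orders $1$ and $2$, so they never affect the lcm.

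Now take the lcm over the eigenvalues that actually occur, which requires tracking which divisors are present.
\begin{itemize}
\item If $\beta=0$ and $\alpha\ge1$, the eigenvalues come from $c=1$ and $c=2$, plus $0$ exactly when $\alpha\ge2$ (from $c=4$). For $\alpha\ge2$ the lcm is $4$. For $n=2$ the graph is $K_2$, with discriminant eigenvalues $\pm1$. Its extra $-1$ eigenvalue has multiplicity $b_1-1+1_B=0$, so $U=R$ and the period is $2$.
\item If $\beta\ge1$ and $\alpha=0$, we have $c\in\{1,3\}$, plus $0$ when $\beta\ge2$ (from $c=9$). The orders involved are $1$, $3$ and possibly $4$, giving period $12$ when $\beta\ge2$ and only $3$ when $n=3$.
\item If $\alpha,\beta\ge1$, then $c=6$ occurs and contributes order $6$. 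A zero eigenvalue appears when $\alpha\ge2$ or $\beta\ge2$, giving period $12$. For $n=6$ there is no zero eigenvalue and the lcm is $6$.
\end{itemize}

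The main obstacle is therefore this small-$n$ bookkeeping, not the characterization itself. The formula $\tau=4$ or $12$ holds only when $0$ is a discriminant eigenvalue, that is, when $n$ is not squarefree. The plan is to verify the exceptional cases $n=2,3,6$ directly: for example $K_3$ has $P$-eigenvalues $1$ and $-1/2$ and period $3$. The statement will then be adjusted, or read as excluding these cases, and only the generic non-squarefree cases asserted as stated.
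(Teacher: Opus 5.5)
Your argument is correct and follows the paper's route. You apply Lemma~\ref{regular_periodic} to the Ramanujan-sum eigenvalues \eqref{ev_unitarycayley}, take $c_{n,j}=p$ for necessity, and then use Lemmas~\ref{ev_grover} and~\ref{periodic} for the period. Your parametrization $\spec_P(G_{\Zl_n})=\{\mu(c)/\varphi(c): c\divides n\}$ replaces the paper's three-case listing of explicit indices $j$. It reduces sufficiency to the observation that only the squarefree divisors $c\in\{1,2,3,6\}$ give nonzero values.

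Your small-$n$ bookkeeping is also right, and it exposes an error in the statement itself. The periods of $G_{\Zl_2}=K_2$, $G_{\Zl_3}=C_3$ and $G_{\Zl_6}=C_6$ are $2$, $3$ and $6$, not $4$, $12$ and $12$. The paper's proof misses this for two reasons. First, it asserts $\spec_A(G_{\Zl_n})=\{\pm\varphi(n),0\}$ for every $n=2^\alpha$ with $\alpha\geq1$, which contradicts its own Case~1 computation for $n=2$. Second, it claims period $12$ ``similarly'' without noticing that $0$ is a discriminant eigenvalue exactly when $n$ is not squarefree.

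Your values agree with the fact, cited in Lemma~\ref{cycle}, that $C_n$ is $n$-periodic. For degree $2$ the Grover coin is a swap, so $U$ simply rotates arcs around the cycle. Theorem~\ref{pst_unitary} is unaffected, since it uses only the periodicity characterization and the period for $n=12$.

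One phrasing point: the claim that the extra eigenvalues $\pm1$ ``never affect the lcm'' is not automatic. It holds here because $n=3$ is the only case where the remaining orders have odd lcm, and for $K_3$ we have $b_1-1+1_B=0$. You do check $K_3$ directly, so state the claim with that justification.
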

	\begin{proof}
		Since $G_{\mathbb Z_n}$ is integral and $\varphi(n)$-regular, it follows from Lemma~\ref{regular_periodic} that $G_{\mathbb Z_n}$ is periodic if and only if
		\begin{equation}\label{period_ucn}
			\spec_A(G_{\mathbb Z_n}) \subseteq \left\{ \pm \varphi(n), \pm \frac{\varphi(n)}{2}, 0\right\}.
		\end{equation}
		Suppose first that $G_{\mathbb Z_n}$ is periodic.  Let $p$ be a prime such that  $n=pm$ for some positive integer $m$. Note that $1\leq m \leq n-1$. We consider the eigenvalue $\lambda_{m}$ of the adjacency matrix of $G_{\mathbb Z_n}$. By \eqref{ev_unitarycayley}, we have $$\lambda_{m}=R(m,n)=\mu(p)\frac{\varphi(n)}{\varphi(p)}=-\frac{\varphi(n)}{\varphi(p)},~\text{as}~ c_{n,m}=p~\text{and}~\mu(p)=-1.$$ By  \eqref{period_ucn}, it is easy to conclude that $\varphi(p)=1$ or $2$, as $G_{\mathbb Z_n}$ is periodic. Therefore, the only possible values for the prime $p$ are $2$ and $3$. Hence 	if $G_{\Zl_n}$ is periodic, then the only prime factors of $n$ are $2$ or $3$. We now complete the proof in the following three cases. We use \eqref{ev_unitarycayley} to calculate the adjacency eigenvalues of $G_{\mathbb Z_n}$.

		\noindent\textbf{Case 1.} $n=2^\alpha$, where $\alpha\ge1$. The eigenvalues of the adjacency matrix of $G_{\mathbb{Z}_2}$ are $1$ and $-1$. Therefore $G_{\mathbb{Z}_2}$ is periodic.	 
		
		Now let $\alpha\geq2$. For $j=0$, $\lambda_0=\varphi(n)$. For $j=2^{\alpha -1},$ we have $\lambda_j=-\varphi(n)$.  Now let $j\notin\{ 0,\ 2^{\alpha-1}\}$ and $1\leq j\leq n-1$. If $\gcd(j,n)=1$ then $c_{n,j}=n$, and so $\lambda_j=R(j,n)=0$. If $\gcd(j,n)=2^r$ for some $r$ with $1\leq r\leq \alpha -2$, then $c_{n,j}=2^{\alpha-r}$. As $ \alpha - t\geq 2$, we have $\mu (c_{n,j})=0$. Hence $\lambda_j=0$. Thus we find that $\spec_A(G_{\mathbb Z_n})=\left\{\pm \varphi(n),\ 0\right\}$. Therefore $G_{\mathbb Z_n}$ is periodic for $n=2^\alpha$ as well, where $\alpha\geq 2$.

		\noindent\textbf{Case 2.} $n=3^\beta$, where $\beta\ge1$. The eigenvalues of the adjacency matrix of $G_{\mathbb{Z}_3}$ are $2$ and $-1$. Therefore $G_{\mathbb Z_3}$ is periodic. 
		
		Now let $\beta\geq 2$. For $j=0$, $\lambda_0=\varphi(n)$. For $j=3^{\beta -1}$,
		$\lambda_j=-\varphi(n)/2$. Next let $j\notin\{ 0,\ 3^{\beta-1}\}$ and $1\leq j\leq n-1$. If $\gcd(j,n)=1$ then $c_{n,j}=n$, and so 
		$\lambda_j=R(j,n)=0$. If $\gcd(j,n)=3^r$ for some $r$ with $1\leq r\leq \beta -1 $, then $c_{n,j}=3^{\beta-r}$. If $\beta - r\geq 2$ then $\mu (c_{n,j})=0$, and so $\lambda_j=0$. If $\beta - r= 1$, then the only possible value of $j$ is $2\times3^{\beta -1} $. In that case, $\ld_j = -\varphi(n)/2$. Therefore $$\spec_A(G_{\mathbb Z_n})=\left\{\varphi(n),\ -\frac{\varphi(n)}{2},\ 0\right\}.$$ Hence $G_{\mathbb Z_n}$ is also periodic for $n=3^\beta$, where $\beta\geq2$.

		\noindent\textbf{Case 3.} $n=2^\alpha3^\beta$, where $\alpha,\beta\ge1$. 	Let $\alpha =1$ and $\beta =1$. In this case, $n=6$ and $G_{\mathbb{Z}_6}$ has the adjacency spectrum $\{\pm 2, \pm 1\}$. Therefore $G_{\mathbb{Z}_6}$ is periodic. 
		
		Now let $n=2^\alpha  3^\beta$,  where $\alpha \geq 1 ~\text{and} ~ \beta \geq 1$ such that $\alpha \beta \neq 1$. For $j=0$, $\lambda_0=R(0,n)=\varphi(n)$. For $j=2^{\alpha} 3^{\beta -1}$, $\lambda_j=R(j,n)=-\varphi(n)/2$. For $j=2^{\alpha-1} 3^{\beta}$, $\lambda_j=R(j,n)=-\varphi(n)$. For $j=2^{\alpha -1}3^{\beta -1}$, $\lambda_j=R(j,n)=\varphi(n)/2$.
		Let $j\notin \{ 0,\ 2^{\alpha} 3^{\beta -1},\ 2^{\alpha-1} 3^{\beta},\ 2^{\alpha -1}3^{\beta -1}\}$ and $ 0\leq j \leq n-1$. If $\gcd(j,n)=1$ then $c_{n,j}=n$, and so $\lambda_j=R(j,n)=0$. Now let $\gcd(j,n)=2^s3^r$ for some nonnegative integers $s$ and $r$ such that  $0\leq s\leq \alpha$, $0\leq r\leq \beta $ and $s+r\neq 0$. Observe that either $s\leq \alpha -2$ or $r\leq \beta -2$, as $j\notin  \{0,\ 2^{\alpha} 3^{\beta -1},\ 2^{\alpha-1} 3^{\beta},\ 2^{\alpha -1}3^{\beta -1}\}$.  Therefore $c_{n,j}=2^{\alpha - s} 3^{\beta -r}$, where either $\alpha - s\geq 2~\text{or}~\beta -r \geq 2$. This gives $\mu (c_{n,j})=0$, and so $\lambda_j=0$. Thus we have 
		$$\spec_A(G_{\mathbb Z_n})=\left\{\pm \varphi(n),\ \pm \frac{\varphi(n)}{2},\ 0\right\}.$$
		Therefore $G_{\Zl_n}$ is periodic for $n=2^\alpha3^\beta$ as well, where $\alpha\geq 1$, $\beta\geq1$ and $\alpha\beta\neq 1$.
		
		For $n=2^\alpha~(\alpha\geq1)$, we have $\spec_A(G_{\mathbb Z_n})=\{\pm \varphi(n), 0\}$. As $G_{\mathbb Z_n}$ is $\varphi(n)$-regular, Lemma \ref{reg} gives that $\spec_P(G_{\mathbb Z_n})=\{\pm1,0\}$. Therefore by Lemma \ref{ev_grover}, $\spec_U(G_{\mathbb Z_n})=\{\pm \iu, \pm 1\}$. Hence by Corollary \ref{periodic}, the period of $G_{\mathbb Z_n}$ is $4$ for $n=2^\alpha$. Similarly, for $n=2^\alpha 3^\beta ~(\beta\geq1)$, the period of $G_{\mathbb Z_n}$ is $12$.  
	\end{proof}
	The preceding theorem yields an infinite family of periodic graphs. For example, the unitary Cayley graphs on 12 and 18 vertices, shown in Figure~\ref{fig}, are periodic.
	
	\begin{figure}[h!]
		\centering
		\begin{subfigure}{.5\textwidth}
			\centering

			\tikzset{every picture/.style={line width=0.75pt}} 
			
			\begin{tikzpicture}[scale=1.4] 
				
				\node[circle, draw, fill=black, inner sep=2pt] (Na) at ({360/12*0}:2) {};
				\node[circle, draw, fill=black, inner sep=2pt] (Nb) at ({360/12*1}:2) {};
				\node[circle, draw, fill=black, inner sep=2pt] (Nc) at ({360/12*2}:2) {};
				\node[circle, draw, fill=black, inner sep=2pt] (Nd) at ({360/12*3}:2) {};
				\node[circle, draw, fill=black, inner sep=2pt] (Ne) at ({360/12*4}:2) {};
				\node[circle, draw, fill=black, inner sep=2pt] (Nf) at ({360/12*5}:2) {};
				\node[circle, draw, fill=black, inner sep=2pt] (Ng) at ({360/12*6}:2) {};
				\node[circle, draw, fill=black, inner sep=2pt] (Nh) at ({360/12*7}:2) {};
				\node[circle, draw, fill=black, inner sep=2pt] (Ni) at ({360/12*8}:2) {};
				\node[circle, draw, fill=black, inner sep=2pt] (Nj) at ({360/12*9}:2) {};
				\node[circle, draw, fill=black, inner sep=2pt] (Nk) at ({360/12*10}:2) {};
				\node[circle, draw, fill=black, inner sep=2pt] (Nl) at ({360/12*11}:2) {};
				
				\node[right] at (Na) {$3$};
				\node[above right] at (Nb) {$2$};
				\node[above right] at (Nc) {$1$};
				\node[above right, xshift=-6pt, yshift=2pt] at (Nd) {$0$}; 
				\node[above left] at (Ne) {$11$};
				\node[above left] at (Nf) {$10$};
				\node[left] at (Ng) {$9$};
				\node[below left] at (Nh) {$8$};
				\node[below left] at (Ni) {$7$};
				\node[above right, xshift=-6pt, yshift=-15pt] at (Nj) {$6$}; 
				\node[below right] at (Nk) {$5$};
				\node[below right] at (Nl) {$4$};

				\draw (Na) -- (Nb);
				\draw (Nb) -- (Nc);
				\draw (Nc) -- (Nd);
				\draw (Nd) -- (Ne);
				\draw (Ne) -- (Nf);
				\draw (Nf) -- (Ng);
				\draw (Ng) -- (Nh);
				\draw (Nh) -- (Ni);
				\draw (Ni) -- (Nj);
				\draw (Nj) -- (Nk);
				\draw (Nk) -- (Nl);
				\draw (Nl) -- (Na);
				\draw (Nd) -- (Ni);
				\draw (Nd) -- (Nk);
				\draw (Nc) -- (Nj);
				\draw (Nc) -- (Nh);
				\draw (Nb) -- (Ni);
				\draw (Nb) -- (Ng);
				\draw (Na) -- (Nh);
				\draw (Na) -- (Nf);
				\draw (Nl) -- (Ng);
				\draw (Nl) -- (Ne);
				\draw (Nk) -- (Nf);
				\draw (Ne) -- (Nj);
				
			\end{tikzpicture}
			\caption{$G_{\Zl_{12}}$}\label{fig:sub1}
		\end{subfigure}%
		\begin{subfigure}{.5\textwidth}
			\centering

			\tikzset{every picture/.style={line width=0.75pt}} 
			
			\begin{tikzpicture}[scale=1.4]
				
				\node[circle, draw, fill=black, inner sep=2pt] (Na) at ({360/18*0}:2) {};
				\node[circle, draw, fill=black, inner sep=2pt] (Nb) at ({360/18*1}:2) {};
				\node[circle, draw, fill=black, inner sep=2pt] (Nc) at ({360/18*2}:2) {};
				\node[circle, draw, fill=black, inner sep=2pt] (Nd) at ({360/18*3}:2) {};
				\node[circle, draw, fill=black, inner sep=2pt] (Ne) at ({360/18*4}:2) {};
				\node[circle, draw, fill=black, inner sep=2pt] (Nf) at ({360/18*5}:2) {};
				\node[circle, draw, fill=black, inner sep=2pt] (Ng) at ({360/18*6}:2) {};
				\node[circle, draw, fill=black, inner sep=2pt] (Nh) at ({360/18*7}:2) {};
				\node[circle, draw, fill=black, inner sep=2pt] (Ni) at ({360/18*8}:2) {};
				\node[circle, draw, fill=black, inner sep=2pt] (Nj) at ({360/18*9}:2) {};
				\node[circle, draw, fill=black, inner sep=2pt] (Nk) at ({360/18*10}:2) {};
				\node[circle, draw, fill=black, inner sep=2pt] (Nl) at ({360/18*11}:2) {};
				\node[circle, draw, fill=black, inner sep=2pt] (Nm) at ({360/18*12}:2) {};
				\node[circle, draw, fill=black, inner sep=2pt] (Nn) at ({360/18*13}:2) {};
				\node[circle, draw, fill=black, inner sep=2pt] (No) at ({360/18*14}:2) {};
				\node[circle, draw, fill=black, inner sep=2pt] (Np) at ({360/18*15}:2) {};
				\node[circle, draw, fill=black, inner sep=2pt] (Nq) at ({360/18*16}:2) {};
				\node[circle, draw, fill=black, inner sep=2pt] (Nr) at ({360/18*17}:2) {};
				
				\node[right] at (Na) {$5$};
				\node[above right] at (Nb) {$4$};
				\node[above right] at (Nc) {$3$};
				\node[above right] at (Nd) {$2$};
				\node[above right, xshift=-3pt, yshift=2pt] at (Ne) {$1$};
				\node[above, xshift=-1pt, yshift=2pt] at (Nf) {$0$};
				\node[above left] at (Ng) {$17$};
				\node[above left] at (Nh) {$16$};
				\node[left] at (Ni) {$15$};
				\node[left] at (Nj) {$14$};
				\node[below left] at (Nk) {$13$};
				\node[below left] at (Nl) {$12$};
				\node[below, yshift=-2pt] at (Nm) {$11$};
				\node[below, xshift=1pt, yshift=-1pt] at (Nn) {$10$};
				\node[below right] at (No) {$9$};
				\node[below right] at (Np) {$8$};
				\node[below right] at (Nq) {$7$};
				\node[right, yshift=-1pt] at (Nr) {$6$};
				
				\draw (Na) -- (Nb);
				\draw (Nb) -- (Nc);
				\draw (Nc) -- (Nd);
				\draw (Nd) -- (Ne);
				\draw (Ne) -- (Nf);
				\draw (Nf) -- (Ng);
				\draw (Ng) -- (Nh);
				\draw (Nh) -- (Ni);
				\draw (Ni) -- (Nj);
				\draw (Nj) -- (Nk);
				\draw (Nk) -- (Nl);
				\draw (Nl) -- (Nm);
				\draw (Nm) -- (Nn);
				\draw (Nn) -- (No);
				\draw (No) -- (Np);
				\draw (Np) -- (Nq);
				\draw (Nq) -- (Nr);
				\draw (Nr) -- (Na);
				
				\draw (Na) -- (Nf);
				\draw (Nb) -- (Ng);
				\draw (Nc) -- (Nh);
				\draw (Nd) -- (Ni);
				\draw (Ne) -- (Nj);
				\draw (Nf) -- (Nk);
				\draw (Ng) -- (Nl);
				\draw (Nh) -- (Nm);
				\draw (Ni) -- (Nn);
				\draw (Nj) -- (No);
				\draw (Nk) -- (Np);
				\draw (Nl) -- (Nq);
				\draw (Nm) -- (Nr);
				
				\draw (Na) -- (Nh);
				\draw (Nb) -- (Ni);
				\draw (Nc) -- (Nj);
				\draw (Nd) -- (Nk);
				\draw (Ne) -- (Nl);
				\draw (Nf) -- (Nm);
				\draw (Ng) -- (Nn);
				\draw (Nh) -- (No);
				\draw (Ni) -- (Np);
				\draw (Nj) -- (Nq);
				\draw (Nk) -- (Nr);

				\draw (Nl) -- (Na);
				\draw (Nm) -- (Nb);
				\draw (Nn) -- (Nc);
				\draw (No) -- (Nd);
				\draw (Np) -- (Ne);
				\draw (Nq) -- (Nf);
				\draw (Nr) -- (Ng);
				
				\draw (Nn) -- (Na);
				\draw (No) -- (Nb);
				\draw (Np) -- (Nc);
				\draw (Nq) -- (Nd);
				\draw (Nr) -- (Ne);
			\end{tikzpicture}
			\caption{$G_{\Zl_{18}}$}
			\label{fig:sub2}
		\end{subfigure}
		\caption{Two examples of periodic unitary Cayley graphs}
		\label{fig}
	\end{figure}

	\begin{lemma}\label{cycle}
		The cycle $C_n$ exhibits perfect state transfer if and only if $n$ is even. Moreover, if $n$ is even, then $C_n$ exhibits perfect state transfer between antipodal vertices $u$ and $u+ n/2$ at the minimum time $n/2$.
	\end{lemma}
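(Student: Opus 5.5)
The cycle $C_n$ is the circulant graph $\cay(\Zl_n,\{\pm1\})$, so the plan is to apply Corollary~\ref{pst_circulant}. Since $C_n$ is $2$-regular, \eqref{ev_circulant} and Lemma~\ref{reg} give the discriminant eigenvalues
\[\mu_j=\cos\frac{2\pi j}{n},\qquad 0\le j\le n-1.\]

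\emph{Necessity.} If $C_n$ exhibits perfect state transfer, condition (i) of Corollary~\ref{pst_circulant} already forces $n$ to be even. It also forces $v=u+n/2$, so the transfer must be between antipodal vertices.

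\emph{Sufficiency.} Let $n$ be even and set $\tau=n/2$. By \eqref{chb},
\[T_{n/2}(\mu_j)=\cos\!\left(\frac{n}{2}\cdot\frac{2\pi j}{n}\right)=\cos(\pi j)=(-1)^j\]
for every $j$. This is exactly condition (ii) of Corollary~\ref{pst_circulant}, so perfect state transfer occurs between $u$ and $u+n/2$ at time $n/2$.

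\emph{Minimality.} I will show that no $\tau<n/2$ works. The cleanest route is to take $j=1$: condition (ii) requires $T_\tau(\cos(2\pi/n))=-1$, i.e. $\cos(2\pi\tau/n)=-1$. This means $2\tau/n$ is an odd integer, so $\tau\ge n/2$.

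One could instead use Corollary~\ref{minimum}, which says the minimum time is half the period, and compute the period from Lemma~\ref{ev_grover}. The $U$-eigenvalues $e^{\pm 2\pi\iu j/n}$ together with $\pm1$ give period $n$. The direct $j=1$ argument avoids that computation.

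\emph{Main obstacle.} I expect the only delicate point to be the degenerate small cases. For $n=2$ the graph is $K_2$ rather than a cycle; for $n=4$ we have $\mu_1=0$. In each case I will check directly that the formulas above still hold. The odd case needs no separate treatment, since the center condition in Corollary~\ref{pst_circulant} rules it out.
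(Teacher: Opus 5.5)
Your proof is correct, and for necessity and sufficiency it is the same as the paper's. Both write $C_n=\cay(\Zl_n,\{\pm1\})$, use condition (i) of Corollary~\ref{pst_circulant} to exclude odd $n$ and to force $v=u+n/2$, and verify condition (ii) at $\tau=n/2$ via $T_{n/2}(\cos(2\pi j/n))=\cos(\pi j)=(-1)^j$.

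The real difference is the minimality step.
\begin{itemize}
\item The paper imports the external result of Kubota \emph{et al.} that $C_n$ is $n$-periodic. It then applies Corollary~\ref{minimum}, which says the minimum transfer time is half the period.
\item You read minimality directly off condition (ii) at $j=1$: the requirement $\cos(2\pi\tau/n)=-1$ forces $2\tau/n$ to be an odd integer, so $\tau\ge n/2$.
\end{itemize}
Your route is self-contained. It uses neither the external periodicity result nor the spectral mapping and periodicity machinery behind Corollary~\ref{minimum}. It also gives slightly more: the perfect state transfer times are exactly the odd multiples of $n/2$, since for those $T_\tau(\mu_j)=\cos((2k+1)\pi j)=(-1)^j$. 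The paper's route instead matches the template it uses in Examples~\ref{eg_circulant}, \ref{eg_dicyclic} and~\ref{eg_dihedral}, where the minimum time is obtained by computing the period.

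Your small cases also work out. For $n=2$ the connection set collapses to $\{1\}$ and the graph is $1$-regular, but still $\mu_j=(-1)^j=\cos(2\pi j/2)$, so your formulas hold unchanged. The paper simply assumes $n=2m$ with $m\ge 2$, since $C_n$ is not a cycle for $n=2$.
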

	\begin{proof}
		Since $C_n=\cay(\mathbb Z_n,\{\pm1\})$, Corollary~\ref{pst_circulant} implies that $C_n$ cannot exhibit perfect state transfer when $n$ is odd. Now let $n=2m$, where $m\ge2$. The eigenvalues of the discriminant matrix $P$ are
		\[
		\mu_j
		=
		\frac{1}{2}\left(\omega_n^j+\omega_n^{-j}\right)
		=
		\cos\left(\frac{2\pi j}{n}\right)
		=
		\cos\left(\frac{j\pi}{m}\right)~~\text{for}~~0\le j\le n-1.
		\]
		Hence by \eqref{chb},
		\[
		T_m(\mu_j)
		=
		\cos(j\pi)
		=
		(-1)^j~~\text{for}~~0\le j\le n-1.
		\] 	
		Therefore Corollary~\ref{pst_circulant} shows that $C_n$ exhibits perfect state transfer from the vertex $u$ to the antipodal vertex $u+n/2$ at time $n/2$.
		
		Finally, Kubota \emph{et al.}~\cite{mixed2} proved that $C_n$ is $n$-periodic. Hence by Corollary~\ref{minimum}, the minimum time at which perfect state transfer occurs is $n/2$.
	\end{proof}

	\begin{lemma}\label{complete}
		The complete graph $K_n$ exhibits perfect state transfer if and only if $n=2$.
	\end{lemma}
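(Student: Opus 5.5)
We view $K_n$ as the circulant graph $\cay(\Zl_n,\Zl_n\setminus\{0\})$ and apply Corollary~\ref{pst_circulant}.

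\emph{Odd $n$.} Condition (i) of Corollary~\ref{pst_circulant} requires $n$ to be even, so $K_n$ has no perfect state transfer.

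\emph{The case $n=2$.} Here $K_2=C_2$, i.e.\ the single edge $\cay(\Zl_2,\{1\})$. Its discriminant eigenvalues are $\mu_0=1$ and $\mu_1=-1$. Hence $T_1(\mu_j)=\mu_j=(-1)^j$, and Corollary~\ref{pst_circulant} gives perfect state transfer between $0$ and $1$ at time $\tau=1$.

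\emph{Even $n\ge 4$.} By \eqref{ev_circulant}, the adjacency eigenvalues are $\lambda_0=n-1$ and $\lambda_j=\sum_{s\neq 0}\omega_n^{js}=-1$ for $1\le j\le n-1$. By Lemma~\ref{reg}, $P=\frac{1}{n-1}A$, so $\mu_0=1$ and $\mu_j=-\frac{1}{n-1}$ for every $j\neq 0$. Condition (ii) of Corollary~\ref{pst_circulant} would require
\[
T_\tau(\mu_1)=-1 \quad\text{and}\quad T_\tau(\mu_2)=1 .
\]
Since $n\ge4$, the indices $1$ and $2$ are distinct nonzero elements of $\Zl_n$, so $\mu_1=\mu_2=-\frac{1}{n-1}$. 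Both equalities cannot hold for the same $\tau$. Hence there is no perfect state transfer for any $\tau$.

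The key step is noticing that distinct characters of opposite parity share the same eigenvalue. Once the spectrum of $K_n$ is written down, nothing else is needed: no periodicity analysis is required, and in particular no appeal to Lemma~\ref{regular_periodic} (which would only exclude $n\neq 2,3$ anyway).
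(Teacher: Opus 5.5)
Your proof is correct, but it takes a different route from the paper's. The paper starts from $\spec_P(K_n)=\{1,\frac{1}{1-n}\}$. By Lemma~\ref{regular_periodic}, $K_n$ is periodic only for $n\in\{2,3\}$. By Corollary~\ref{minimum}, periodicity is necessary for perfect state transfer on normal Cayley graphs, so only these two values remain, and Corollary~\ref{pst_circulant} settles them.

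You skip the periodicity step entirely. You rule out every even $n\ge 4$ directly with Condition~(ii) of Corollary~\ref{pst_circulant}, using the coincidence $\mu_1=\mu_2=-\frac{1}{n-1}$: one eigenvalue would need $T_\tau$ to equal both $-1$ and $1$. This buys you a self-contained argument that relies only on the paper's own characterization rather than on the imported classification in Lemma~\ref{regular_periodic}. It is also exactly the obstruction the paper itself uses later in the proof of Theorem~\ref{pst_unitary}, where $\lambda_1=\lambda_2$. It extends at once to any abelian Cayley graph: if some $g\in\Gamma_{z,0}$ and $g'\in\Gamma_{z,1}$ share a discriminant eigenvalue, there is no perfect state transfer between $u$ and $u+z$.

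The paper's route instead yields the periodicity classification of $K_n$ as a byproduct and fits its theme that periodicity is a necessary condition.

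One minor point: writing $K_2=C_2$ is a slight abuse, since $C_2$ is not a simple graph. Your identification with $\cay(\Zl_2,\{1\})$ makes the intended meaning clear.
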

	\begin{proof}
		Since $\spec_P(K_n)=\{1,\frac{1}{1-n}\}$, it follows from Lemma~\ref{regular_periodic} that $K_n$ is periodic if and only if $n=2$ or $n=3$. Note that $K_n=\cay(\Zl_n, \{1,\hdots,n-1\})$. Suppose $K_n$ exhibits perfect state transfer. Since periodicity is a necessary condition for exhibiting  perfect state transfer in normal Cayley graphs, we find that the possible values of $n$ are either $2$ or $3$. Now by Corollary \ref{pst_circulant}, $K_2$ exhibits perfect state transfer, whereas $K_3$ does not. Hence the result follows.
	\end{proof}

	\begin{theorem}\label{pst_unitary}
		The unitary Cayley graph $G_{\mathbb Z_n}$ exhibits perfect state transfer if and only if $n\in\{2,4,6,12\}$.
	\end{theorem}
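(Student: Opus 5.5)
Since $G_{\Zl_n}$ is a circulant graph, hence a normal Cayley graph, perfect state transfer forces periodicity by Corollary~\ref{minimum}. Theorem~\ref{period_unitary} therefore restricts attention to $n=2^\alpha3^\beta$ with $\alpha+\beta\ge1$. Corollary~\ref{pst_circulant} also requires $n$ to be even, so $\alpha\ge1$. The plan is to apply Corollary~\ref{pst_circulant} to each remaining family, using the adjacency spectra computed in the proof of Theorem~\ref{period_unitary}, now tracked index by index. Throughout, $\mu_j=\lambda_j/\varphi(n)$ with $\lambda_j=R(j,n)$ given by \eqref{ev_unitarycayley}, and we need a single $\tau$ with $T_\tau(\mu_j)=(-1)^j$ for all $j$.

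\emph{Case $n=2^\alpha$.} For $\alpha=1$, $G_{\Zl_2}=K_2$ has perfect state transfer by Lemma~\ref{complete}. For $\alpha=2$, $G_{\Zl_4}=C_4$ has it by Lemma~\ref{cycle}. For $\alpha\ge3$, the proof of Theorem~\ref{period_unitary} gives $\mu_j=0$ for every $j\notin\{0,2^{\alpha-1}\}$. In particular $\mu_1=\mu_2=0$, since $\gcd(2,n)=2$ and $c_{n,2}=2^{\alpha-1}$ is divisible by $4$. Then $T_\tau(\mu_1)=T_\tau(\mu_2)$, which contradicts the requirement $T_\tau(\mu_1)=-1$ and $T_\tau(\mu_2)=1$.

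\emph{Case $n=2^\alpha3^\beta$ with $\alpha,\beta\ge1$.} We use the same trick: look for two consecutive indices $j,j+1$ with $\mu_j=\mu_{j+1}$. We have $\mu_1=0$ whenever $\mu(n)=0$, i.e.\ whenever $\alpha\ge2$ or $\beta\ge2$.
\begin{itemize}
\item If $\alpha\ge3$, or if $\beta\ge2$, then $c_{n,2}=n/2$ is not squarefree, so $\mu_2=0=\mu_1$.
\item If $\alpha=2$ and $\beta=1$ (so $n=12$), $c_{n,2}=6$ is squarefree and this pair does not work.
\item If $\alpha\le2$ and $\beta\ge2$, we still have $\mu_1=\mu_2=0$ because $9\mid c_{n,2}$.
\end{itemize}
So the exceptions should be exactly $n=6$ and $n=12$. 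For $n\in\{6,12\}$ we exhibit $\tau$ directly.
\begin{itemize}
\item For $n=6$, the discriminant eigenvalues are $\mu_j=\cos(j\pi/3)$, namely $1,\tfrac12,-\tfrac12,-1,-\tfrac12,\tfrac12$, since $G_{\Zl_6}=C_6$. So $\tau=3$ works, by Lemma~\ref{cycle}.
\item For $n=12$, $\varphi(12)=4$. Computing $R(j,12)$ gives $\mu_0=1$, $\mu_6=-1$, $\mu_4=\mu_8=-\tfrac12$, $\mu_2=\mu_{10}=\tfrac12$, and $\mu_j=0$ for odd $j$. With $\tau=6$: $T_6(0)=\cos3\pi=-1$, $T_6(\pm\tfrac12)=\cos(2\pi)$ or $\cos(4\pi)$, both equal to $1$, $T_6(1)=1$, and $T_6(-1)=1$. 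This is consistent with $(-1)^j$, since all nonzero $\mu_j$ occur at even $j$. So perfect state transfer occurs at time $6$, matching half the period $12$.
\end{itemize}

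The main obstacle is making the exclusion in the mixed case airtight: choosing the right consecutive pair of indices in every subcase, and verifying the squarefreeness of $c_{n,2}$ carefully. The case $\alpha=1$, $\beta\ge2$ must also be checked separately. There $c_{n,1}=n$ is divisible by $9$ and $c_{n,2}=3^\beta$ is divisible by $9$, so $\mu_1=\mu_2=0$ again. If this pair fails somewhere, the fallback is the pair $(j,j+1)$ with both $j$ and $j+1$ coprime to $6$ modulo a suitable power, which forces $\mu=0$ for both indices.
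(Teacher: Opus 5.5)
Your proposal is correct and takes essentially the same route as the paper. Periodicity (Corollary~\ref{minimum} with Theorem~\ref{period_unitary}) reduces to $n=2^\alpha3^\beta$, odd $n$ fails Condition~(i) of Corollary~\ref{pst_circulant}, every other excluded case is ruled out by $\mu_1=\mu_2=0$, and $n\in\{2,4,6,12\}$ are handled via $K_2$, $C_4$, $C_6$ and the explicit check $T_6(\mu_j)=(-1)^j$ for $n=12$. Your third bullet and the closing ``fallback'' remark are redundant, since ``$\alpha\ge3$ or $\beta\ge2$'' already covers every mixed case except $n=6,12$, and the pair $(1,2)$ never fails there.
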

	\begin{proof}
		If perfect state transfer occurs in $G_{\mathbb Z_n}$, then by Corollary~\ref{minimum} and Theorem~\ref{period_unitary}, we have $n=2^\alpha 3^\beta$, where $\alpha ~\text{and}~ \beta $ are nonnegative integers with $\alpha + \beta \neq 0 $. For $n= 2,~ 4 $ and $6$, the graphs $G_{\mathbb Z_n}$ are  $K_2,~C_4$ and $C_6$, respectively. From Lemma \ref{complete} and Lemma \ref{cycle}, these graphs exhibit perfect state transfer.
		
		For $n=12$, the graph $G_{\mathbb Z_{12}}$ is $12$-periodic. Thus if $G_{\mathbb Z_n}$ exhibits perfect state transfer, then Corollary~\ref{minimum} implies that the minimum time for exhibiting perfect state transfer is $6$. The eigenvalues of the discriminant $P$ of $G_{\mathbb Z_{12}}$ are given by
		\[
		\mu_j=\frac{\ld_j}{\varphi(12)}=\left\{ \begin{array}{rl}
			1 & \mbox{if } j=0,\\
			-1 & \mbox{if } j=6,\\
			\frac{1}{2} & \mbox{if } j\in\{2,10\},\\
			-\frac{1}{2} & \mbox{if } j\in\{4,8\},\\
			0& \mbox{if }j\text{ odd.}
		\end{array}\right.
		\]
		Using \eqref{chb}, it follows that $T_6(\mu_j)=1$ if $j$ is even and $T_6(\mu_j)=-1$ if $j$ is odd.  Therefore by Corollary~\ref{pst_circulant}, $G_{\mathbb Z_{12}}$ exhibits perfect state transfer at the minimum time $6$.

		For $n=2^{\alpha}~ (\alpha \geq 3)$ or $n=2^{\alpha}3~ (\alpha\geq 3$) or $n=3^\beta 2~(\beta \geq 2)$ or $n=2^\alpha3^\beta~(\alpha\geq 2~\text{and}~\beta\geq 2)$, it follows from \eqref{ev_unitarycayley} that $\ld_1 =0=\ld_2$, and hence $\mu_1=\mu_2$. Therefore Condition~(ii) of Corollary~\ref{pst_circulant} is not satisfied, and consequently $G_{\mathbb{Z}_n}$ does not exhibit perfect state transfer. Finally, if $n=3^\beta$, then Condition~(i) of Corollary~\ref{pst_circulant} fails. Hence $G_{\mathbb{Z}_n}$ does not exhibit perfect state transfer in this case as well. This completes the proof.
	\end{proof}

	\subsection*{Acknowledgments}
	Koushik Bhakta acknowledges the support provided by the Prime Minister’s Research Fellowship (PMRF) scheme of the Government of India (PMRF-ID: 1903298). 
	

\end{document}